\newlength{\bredde}
\def\slash#1{\settowidth{\bredde}{$#1$}\ifmmode\,\raisebox{.15ex}{/}
\hspace*{-\bredde} #1\else$\,\raisebox{.15ex}{/}\hspace*{-\bredde} #1$\fi}
\newcommand{\RE}{{\rm Re\,}}
\newcommand{\HE}{{\rm He\,}}
\newcommand{\Pf}{{\rm Pf\,}}
\newcommand{\IM}{{\rm Im\,}}
\newcommand{\eins}{\leavevmode\hbox{\small1\kern-3.8pt\normalsize1}}
\newcommand{\be}{\begin{equation}}
\newcommand{\ee}{\end{equation}}
\newcommand{\bee}{\begin{eqnarray}}
\newcommand{\eee}{\end{eqnarray}}
\newcommand{\eGinUE}{\textup{eGinUE}}
\newcommand{\eGinOE}{\textup{eGinOE}}
\newtheorem{thm}{Theorem}[section]
\newtheorem{cor}[thm]{Corollary}
\newtheorem{lem}[thm]{Lemma}
\newtheorem{prop}[thm]{Proposition}
\newtheorem*{prob*}{Problem}
\newtheorem*{thm*}{Theorem}
\theoremstyle{definition}
\newtheorem{defn}[thm]{Definition}
\newtheorem*{defn*}{Definition}
\newtheorem{rem}[thm]{Remark}
\newtheorem*{rem*}{Remark}
\numberwithin{equation}{section}
\newcommand{\erfc}{\text{erfc}}
\providecommand{\keywords}[1]
{
  \small	
  \textbf{Keywords:} #1
}
\title{Mean eigenvector self-overlap in the real and complex elliptic Ginibre ensembles at strong and weak non-Hermiticity}
\author{ M. J. Crumpton, Y. V. Fyodorov and T. R. W\"urfel } %
\affil{Department of Mathematics, King's College London, London WC2R 2LS, UK}
\date{\today}
\begin{document}

\maketitle
\begin{abstract}
\noindent
We study the mean diagonal overlap of left and right eigenvectors associated with complex eigenvalues in $N\times N$ non-Hermitian random Gaussian matrices. In well known works by Chalker and Mehlig the expectation of this (self-)overlap was computed for the complex Ginibre ensemble as $N\to \infty$. In the present work, we consider the same quantity in the real and complex elliptic Ginibre ensembles characterized by correlations between off-diagonal entries controlled by a parameter $\tau\in[0,1]$, with $\tau=1$ corresponding to the Hermitian limit. We derive exact expressions for the mean diagonal overlap in both ensembles at any finite $N$, for any eigenvalue off the real axis. We further investigate several scaling regimes as $N\rightarrow \infty$, both in the limit of strong non-Hermiticity keeping a fixed $\tau\in[0,1)$  and in the weak non-Hermiticity limit,  with $\tau$  approaching unity  in such a way that $N(1-\tau)$ remains finite.

\end{abstract}
\keywords{non-Hermitian random matrices, real elliptic Ginibre ensemble, complex elliptic Ginibre ensemble, bi-orthogonal eigenvectors, eigenvalue depletion, eigenvector overlaps, bulk statistics, strong non-Hermiticity, weak non-Hermiticity.}

%%%%%%%%%%%%%%%%%%%%%%%%%%%%%%%%%%%%%%%%%%%%%%%%%%%%%%%%%%%%%%%%%%%%%%
%%%%%%%%%%%%%%%%%%%%%%%%%%%%%%%%%%%%%%%%%%%%%%%%%%%%%%%%%%%%%%%%%%%%%%
%%%%%%%%%%%%%%%%%%%%%%%%%%%%%%%%%%%%%%%%%%%%%%%%%%%%%%%%%%%%%%%%%%%%%%
%%%%%%%%%%%%%%%%%%%%%%%%%%%%%%%%%%%%%%%%%%%%%%%%%%%%%%%%%%%%%%%%%%%%%%
%%%%%%%%%%%%%%%%%%%%%%%%%%%%%%%%%%%%%%%%%%%%%%%%%%%%%%%%%%%%%%%%%%%%%%
%%%%%%%%%%%%%%%%%%%%%%%%%%%%%%%%%%%%%%%%%%%%%%%%%%%%%%%%%%%%%%%%%%%%%%

\section{Introduction}\label{sec:intro}

Influence of non-Hermiticity in physical systems (both classical and quantum) has recently attracted significant attention, both theoretically and experimentally, see e.g. \cite{non-Herm_rev} for a review.  In many  situations the corresponding operators or matrices may have (or do have) a random component, either by intrinsic properties of the system or as a model reflecting lack (or irrelevance) of the knowledge of their exact structure. 
All these developments boosted attention to various classes of non-Hermitian random matrices, which  despite lasting interest in their properties \cite{KS}  remain much less understood in comparison with their self-adjoint counterparts. Motivations for such studies traditionally include, among other topics, attempts to characterize equilibria, stability and dynamics of complicated, mostly nonlinear dynamical systems, e.g. neural networks, or communities of many interacting species, see \cite{May72,SCSS,WT13,FK2016,MB2019,BAFK21}, as well as studies of an effective non-Hermiticity helping to characterize various aspects of scattering in systems with quantum or classical wave chaos nature \cite{FSom1,Rotter,FSav1}. More recently, a boost of interest in non-Hermitian random matrices arose due to attempts to understand better universal properties of  dissipative Lindbladian terms in density-matrix evolution of quantum many-body systems \cite{random_Lindblad,AKMP,SRP,LPC21}.

Without loss of generality, a random $N \times N$ matrix $X$ can be diagonalised as $X = S \Lambda S^{-1}$, where $\Lambda$ is a diagonal matrix containing the eigenvalues of $X$, represented by $z_i$. Additionally, the columns of $S$ are the right eigenvectors, $\bm v_{R,i}$ and the rows of $S^{-1}$ are the left-eigenvectors, $\bm v_{L,i}^\dagger$. Within this setup the left and right eigenvectors satisfy a bi-orthogonality condition $\bm v_{L,i}^\dagger \bm v_{R,j} = \delta_{ij}$,  but separately inside each of the sets of left and of right eigenvectors no such relation  exists, unless matrices are normal. Chalker and Mehlig \cite{CM,MC} were the first to systematically address the non-orthogonality of eigenvectors in non-normal random matrices, restricting their studies to the classical complex Ginibre ensemble (GinUE) \cite{Ginibre}. In particular, they introduced an important object, the \emph{matrix of overlaps}
\be
    \mathcal{O}_{mn} = \left( \bm v^\dagger_{L,m} \bm v_{L,n} \right) \left( \bm v^\dagger_{R,n} \bm v_{R,m} \right) \quad m, n=1,\ldots,N ,
    \label{eq:def_overlap}
\ee
and studied associated connected  ensemble averages for diagonal entries $ \mathcal{O}_{nn}$ (frequently called self-overlaps) and their off-diagonal counterparts  $\mathcal{O}_{mn}$:
\be\label{ChalkerMehligOverlap}
    \mathcal{O}(z) \equiv \bigg\langle \frac{1}{N}\sum_{n=1}^N \mathcal{O}_{nn} \ \delta(z-z_n) \bigg\rangle \hspace{0.75cm}  \text{and} \hspace{0.75cm} \mathcal{O}(z_1,z_2) \equiv \bigg\langle \frac{1}{N^2}\sum_{\substack{m,n=1 \\ m \neq n}}^N \mathcal{O}_{mn} \ \delta(z_1 -z_m) \  \delta(z_2-z_n) \bigg\rangle\ ,
\ee
where the angular brackets stand for the expectation value with respect to the measure associated with the ensemble in question, i.e. the GinUE in the case of \cite{CM}. Here $\delta (z-z_n)$ is the Dirac delta mass at the eigenvalue $z_n$, so that the empirical density of eigenvalues in the complex plane, $z$, reads $\rho^{(\text{emp})}_N(z) = \frac{1}{N} \sum_{n=1}^N \delta(z - z_n)$. 

The problem of statistical characterization of eigenvector non-orthogonality in various ensembles of non-Hermitian random matrices has been attracting growing research ever since the Chalker and Mehlig papers, in particular driven by many applications in physics and beyond.  More recently the interest in these questions got further impetus from papers by Bourgade \& Dubach \cite{BD} and Fyodorov \cite{FyodorovCMP}, who managed to derive, among other quantities, the explicit distribution of self-overlaps $\mathcal{O}_{nn}$ in the bulk of the complex Ginibre ensemble. Such self-overlaps are directly related to the eigenvalue condition numbers, controlling stability of eigenvalues against additive perturbation of the matrix entries, see e.g. the discussion in  \cite{FyodorovCMP}. Further representative examples of studies of eigenvector non-orthogonality in non-normal random matrices, of various types and at varying level of mathematical rigor, as well as some motivating applications, can be found in \cite{SS} - \cite{Tarnowski24}.
%\cite{SS,JNNPZ,SFPB,PSB,MS,FM,FSav2,GKLMRS,BGNTW,BGNTW2,WS,BNST,BSV,Grela,GW,NT,BZ,GOCNT,Yabuoku,AFK,ATTZ,Dubach21a,Dubach21b,FT,CR,CS2022,FyoOsm22,Dubach23,Cipolloni23a,Cipolloni23b,CEHS2023,EJ2023,Esaki,WCF,FGNNW,Noda23a,Noda23b,GKR23,Tarnowski24}

In this paper we will concentrate on the diagonal overlaps described via $\mathcal{O}(z)$.  This can be further used to define the mean conditional self-overlap as 
\be\label{conditional}
    \mathbb{E}\left( z  \right) \equiv \mathbb{E}\left(\mathcal{O}_{nn} \ \vert \ z = z_n \right) = \frac{\mathcal{O}(z)}{ \rho\, (z) } \ \text{,}
\ee
where $\rho\, (z)$ is the mean spectral density at a point $z$ in the complex plane, defined via
\begin{equation}\label{meanden}
    \rho\, (z) \equiv \langle \ \rho_N^{(\text{emp})}(z) \ \rangle = \bigg\langle \frac{1}{N}\sum_{n=1}^N  \delta(z-z_n) \bigg\rangle \ . 
\end{equation}
Note that $\mathbb{E}\left( z \right)$ is particularly useful when comparing theoretical predictions with numerical data. For the GinUE, as $N\rightarrow \infty$ and each entry having variance $1/N$, the droplet becomes a unit circle in $\mathbb{C}$ with uniform density $ \rho (z) \approx 1/\pi $ for $\vert z \vert^2 <  1 $ and zero otherwise \cite{Girko1}. Chalker and Mehlig found the leading asymptotic behaviour of Eq. \eqref{ChalkerMehligOverlap} in this ensemble, deriving $\mathcal{O}(z) \approx N \left(1-\vert z \vert^2 \right)/\pi$ inside the unit circle and zero otherwise \cite{CM}. Thus, $\mathcal{O}_{nn} \sim N$ for eigenvalues inside the disk, which is macroscopically larger than the corresponding value seen in normal matrices, where $\mathcal{O}_{nn} = 1$. One should note that in order to obtain a non-trivial limit as $N \to \infty$, we consider the well-defined rescaled correlator $ \mathcal{O}(z)/N $. 

In our recent paper \cite{WCF}, we provided explicit expressions for the mean self-overlap in the real Ginibre ensemble at finite matrix size $N$ and, for $N \rightarrow \infty$, we analyzed various regions associated with complex eigenvalues inside the limiting spectral support of the circular droplet. In the present work, we develop our analysis further by extending it to more general classes of random matrices with real or complex Gaussian-distributed entries. Namely, the so-called elliptic ensembles introduced by Girko \cite{Girko2}, who demonstrated that the limiting spectral support will have the shape of an elliptic droplet controlled by an ellipticity parameter $\tau \in [0,1]$. In particular, the length of the semi-major and semi-minor axes of the ellipse become proportional to $1+\tau$ and $1-\tau$ respectively.

Our main goal then is to study the mean self-overlap of eigenvectors associated with complex eigenvalues for the complex elliptic Ginibre ensemble (eGinUE) and the real elliptic Ginibre ensemble (eGinOE), both at finite $N$ and asymptotically as $N \rightarrow \infty$.  We will first consider the parameter $\tau$  fixed (and away from one) as $N \to \infty$, this defines the so-called regime of \emph{strong non-Hermiticity} (SNH). Note that for $\tau$ being zero we naturally recover the Ginibre ensembles, GinUE and GinOE respectively, and thus, in this limit, our results have to match those presented in \cite{WCF}. Additionally, we study another important scaling regime of \emph{weak non-Hermiticity} (WNH), originally introduced in \cite{FKS97a,FKS97b,FKS98}. In this regime the parameter $\tau \to 1$ as $N$ tends to $\infty$, more precisely $\tau \sim 1 - \kappa/N$, with $\kappa > 0$, which results in the elliptic droplet being replaced by a thin cloud of complex eigenvalues. In addition, for real-valued matrices a macroscopic fraction of eigenvalues remain exactly on the real line, while the complex eigenvalues are depleted away from the real axis. These features are absent in the complex-valued weakly non-Hermitian matrices, making results at weak non-Hermiticity pronouncedly different between real-valued and complex-valued matrices. 

%\noindent
The rest of the paper is organized as follows. After introducing the necessary notation and statements about random real and complex elliptic Ginibre matrices in Section \ref{subsec:DefRem}, we present our main results for the mean self-overlap in Section \ref{subsec:MainRes} for finite matrix size $N$ in the eGinUE (Theorem \ref{thm:MainResCOMPLEX}) and the eGinOE (Theorem \ref{thm:MainRes}). Asymptotic results for the mean self-overlap, as $N \rightarrow \infty$, are given at strong non-Hermiticity in Section \ref{subsubsec:StrongRegime} and at weak non-Hermiticity in Section \ref{subsubsec:WeakRegime}. We provide asymptotic formulas for the mean self-overlap in the eGinOE at SNH in the spectral bulk and in the special depletion regime, which only exists in the eGinOE, in Corollaries \ref{cor:eGinOEbulkStrong} and \ref{cor:eGinOEdepletionStrong} respectively. Furthermore, we discuss the mean self-overlap at WNH in what we call the \emph{weak bulk} for the eGinUE in Corollary \ref{cor:eGinUEbulkWeak} and the eGinOE in Corollary \ref{cor:eGinOEbulkWeak}. Finally, the corresponding proofs of our findings are presented in Section \ref{sec:MainResDerivation} for finite $N$ and in Section \ref{sec:AsymptoticAnalysis} for the asymptotic results.

%%%%%%%%%%%%%%%%%%%%%%%%%%%%%%%%%%%%%%%%%%%%%%%%%%%%%%%%%%%%%%%%%%%%%%
%%%%%%%%%%%%%%%%%%%%%%%%%%%%%%%%%%%%%%%%%%%%%%%%%%%%%%%%%%%%%%%%%%%%%%
%%%%%%%%%%%%%%%%%%%%%%%%%%%%%%%%%%%%%%%%%%%%%%%%%%%%%%%%%%%%%%%%%%%%%%
%%%%%%%%%%%%%%%%%%%%%%%%%%%%%%%%%%%%%%%%%%%%%%%%%%%%%%%%%%%%%%%%%%%%%%
%%%%%%%%%%%%%%%%%%%%%%%%%%%%%%%%%%%%%%%%%%%%%%%%%%%%%%%%%%%%%%%%%%%%%%
%%%%%%%%%%%%%%%%%%%%%%%%%%%%%%%%%%%%%%%%%%%%%%%%%%%%%%%%%%%%%%%%%%%%%%

\section{Statement and Discussion of the Main Results}\label{sec:MainRes} 

%\noindent

%%%%%%%%%%%%%%%%%%%%%%%%%%%%%%%%%%%%%%%%%%%%%%%%%%%%%%%%%%%%%%%%%%%%%%
%%%%%%%%%%%%%%%%%%%%%%%%%%%%%%%%%%%%%%%%%%%%%%%%%%%%%%%%%%%%%%%%%%%%%%
%%%%%%%%%%%%%%%%%%%%%%%%%%%%%%%%%%%%%%%%%%%%%%%%%%%%%%%%%%%%%%%%%%%%%%
%%%%%%%%%%%%%%%%%%%%%%%%%%%%%%%%%%%%%%%%%%%%%%%%%%%%%%%%%%%%%%%%%%%%%%

\subsection{Remarks on elliptic Ginibre ensembles}
\label{subsec:DefRem}

\begin{defn}\label{def:eGinUE}
    Let $X = \left( X_{ij} \right)_{i,j=1}^N$ be an $N \times N$ matrix in the \emph{complex elliptic Ginibre} ensemble denoted as eGinUE. Matrices in this ensemble consist of mean zero i.i.d. complex Gaussian entries \cite{Girko2,SCSS}, with the following correlation structure
    \be\label{eq:corr_eGinUE}
        \mathbb{E}\left( X_{ii}^2  \right) = \mathbb{E}\left( X_{ij}^2  \right)= 1
        \hspace{1cm} \mathbb{E}\left( X_{ij} X_{ji} \right) = \tau \text{,} \quad \text{for} \quad i,j=1,...N \quad i\neq j   \ ,
    \ee
    for $\tau \in [0,1]$. The joint probability density function (JPDF) of such matrices, with respect to the flat Lebesgue measure, $dX = \prod_{i,j=1}^{N} dX_{ij}d\bar{X}_{ij}$, is given by
    \be\label{jpdfReGinCOMPLEX}
        P_{\text{eGinUE}}(X)\,dX = \frac{1}{D_{N,\tau}} \exp \left[ -\frac{1}{1-\tau^2} \Tr \left( XX^\dagger -\tau \ \RE X^2 \right)  \right]dX \ ,
    \ee
    with normalization constant, $D_{N,\tau}= \pi^{N^2} \left( 1-\tau^2 \right)^{\frac{N^2}{2}}$ \cite{FKS97b}. Matrices belonging to this ensemble can be generated according to $X = \sqrt{1 + \tau} \, H_1 + i \sqrt{1 - \tau} \, H_2$, where $H_1$ and $H_2$ are complex Hermitian matrices with mean zero Gaussian entries. The real entries on the diagonal have variance $1/2$ whereas the complex off-diagonal entries have variance $1/4$ for both their real and imaginary components.
\end{defn}

\begin{defn}\label{def:eGinOE}
    The counterpart of the eGinUE for matrices with real i.i.d. Gaussian entries is the so-called \emph{real elliptic Ginibre} ensemble, denoted as eGinOE. Such matrices have entries of mean zero, with the following correlation structure
    \be\label{eq:corr_eGinOUE}
        \mathbb{E}\left( X_{ii}^2  \right) = 1+ \tau \hspace{1cm} \mathbb{E}\left( X_{ij}^2  \right)= 1
        \hspace{1cm} \mathbb{E}\left( X_{ij} X_{ji} \right) = \tau \text{,} \quad \text{for} \quad i,j=1,...N \quad i\neq j   \ ,
    \ee
    for $\tau \in [0,1]$. The JPDF of matrices $X$ with respect to the flat Lebesgue measure \cite{Girko2,SCSS}, reads
    \be\label{jpdfReGin}
        P_{\text{eGinOE}}(X)\,dX = \frac{1}{C_{N,\tau}} \exp \left[ -\frac{1}{2(1-\tau^2)} \Tr \left( XX^T -\tau X^2 \right)  \right] dX \ ,
    \ee
    with normalization constant $C_{N,\tau}= (2\pi)^{N^2/2} \left( 1+\tau \right)^{\frac{N(N+1)}{4}} \left( 1- \tau \right)^{\frac{N(N-1)}{4}}$ \cite{FT}. In a similar way to the eGinUE, matrices belonging to this ensemble can be randomly sampled using $X = \sqrt{1 + \tau} \, H + \sqrt{1 - \tau} \, A$, where $H$ and $A$ are real symmetric and anti-symmetric Gaussian matrices respectively. Both have mean zero entries and variance $1/2$ of their constrained off-diagonal entries, however the diagonal entries of $H$ have unit variance whereas, by definition, $A$ must have all zeros on the diagonal.  
\end{defn}

\noindent
For recent reviews of available results on the eGinUE and eGinOE see \cite{BF1,BF2}. The remarks below provide a discussion of main facts on both ensembles which will be of direct relevance to the present paper.

\begin{rem}\label{rem:GinOEDef}
    The correlation parameter $\tau\in [0,1]$ allows one to interpolate between the two classical limits. For $\tau = 0$, the entries of eGinOE and eGinUE matrices are uncorrelated, as such the eGinOE and eGinUE reduce to the GinOE \cite{Ginibre, BF2} and the GinUE \cite{Ginibre, BF1} respectively. For $\tau = 1$, the correlation is at its maximal possible value, yielding symmetric matrices belonging to the \textit{Gaussian Orthogonal Ensemble} (GOE) in the real case, whereas in the complex case one arrives at Hermitian matrices from the \textit{Gaussian Unitary Ensemble} (GUE). Finite $N$ and asymptotic results for the mean density of complex eigenvalues and the mean diagonal overlaps for both the GinOE and GinUE can be found in \cite{WCF}.
    Since the characteristic polynomial of an eGinOE matrix has only real coefficients, generically the spectrum of eGinOE matrices consists of real eigenvalues $\lambda\in \mathbb{R}$ and complex eigenvalues $z=x+iy, \, y\ne 0$ which always appear in conjugated pairs. Correspondingly, the mean spectral density, Eq. \eqref{meanden}, necessarily has the form $\rho\,(z) = \rho^{(c)}(z)+\delta(y) \rho^{(r)}(x)$, where $\rho^{(c/r)}(z)$ describes the mean density of complex/real eigenvalues. On the other hand, the eigenvalues of eGinUE matrices are all complex and do not respect complex conjugation with probability one.
\end{rem}

\noindent
The density of eigenvalues in both ensembles has already been studied extensively. Before presenting the explicit formulas, we need to introduce the Hermite polynomials via:
\be\label{Eq:HermitePoly}
    \HE_k(x) = \frac{1}{\sqrt{2\pi}} \int_{-\infty}^{\infty} dy \, e^{-\frac{1}{2}y^2} \, \left( x+iy \right)^k =  \frac{(\pm i)^k}{\sqrt{2\pi}} e^{\frac{x^2}{2}}\int_{-\infty}^{\infty} dt \,  e^{-\frac{1}{2}t^2 \mp ixt } t^k = k! \sum_{m=0}^{[k/2]} \frac{(-1)^m}{(k - 2m)! m!} \frac{x^{k - 2m}}{2^m}\ ,
\ee
where $[x]$ represents the floor function. We will also need the standard complementary error function $\erfc(x) = 1 - \erf(x)$, where $\erf(x) = \frac{2}{\sqrt{\pi}} \int_0^x e^{-t^2} dt$. 
    
\begin{rem}\label{rem:densityExpressions}
    The mean density of complex eigenvalues is of particular interest in this work and is given for finite $N$ 
    in the eGinUE by
    \be\label{eq:eGinUE_density}
        \rho^{(\text{eGinUE},c)}_N(z) = \frac{1}{\pi} \frac{1}{\sqrt{1-\tau^2}}\exp \left[ -\frac{1}{1-\tau^2}\bigg( \vert z \vert^2 - \tau \ \text{Re}(z^2) \bigg) \right] \sum_{n=0}^{N-1} \frac{\tau^n}{n!}\HE_n\left( \frac{z}{\sqrt{\tau}} \right)\HE_n\left( \frac{\bar{z}}{\sqrt{\tau}} \right) \ ,
    \ee
    equivalent representations for finite $N$ are also known, see e.g. \cite[Prop. 2.7]{BF1} and \cite[Eq. (18.4.19)]{KS}.  On the other hand, in the eGinOE the corresponding expression reads
    \be\label{eGinOEdensityComplex}
        \rho_N^{(\text{eGinOE,c})}(z)= \sqrt{\frac{2}{\pi}}\frac{1}{1+\tau} \ y \ \exp \left[ \frac{y^2-x^2}{1+\tau} \right] \text{erfc}\left[ \sqrt{\frac{2}{1-\tau^2}} \ \vert y \vert \right] P_{N-2} \ ,
    \ee
    where we have introduced
    \begin{equation}
         P_N =  \frac{1}{\bar{z}-z} \sum_{k=0}^{N} \ \frac{\tau^{k+\frac{1}{2}}}{k!} \bigg[ \HE_{k+1}\left(\frac{\bar{z}}{\sqrt{\tau}}\right) \HE_{k}\left( \frac{z}{\sqrt{\tau}} \right) - \HE_{k+1}\left( \frac{z}{\sqrt{\tau}} \right) \HE_{k}\left(\frac{\bar{z}}{\sqrt{\tau}}\right) \bigg]\text{,}
         \label{Eq:PNRes}
    \end{equation}    
    see e.g. \cite[Eq. (5.2)]{FN2} and \cite[Eq. (23)]{APS} for equivalent representations of this density. The mean density of \emph{real} eigenvalues in the eGinOE is also known at finite $N$, see e.g. \cite{FN2,ForM} or \cite[Remark 2.3]{FT}, but is not needed for our purposes.
\end{rem}

\noindent
It is worth discussing the large $N$ asymptotic behaviour of the mean eigenvalue density in more detail. Focusing first on the regime of \emph{strong non-Hermiticity}, defined such that $0\leq \tau <1$ remains fixed as $N \to \infty$, we show a comparison between the large $N$ densities of complex eigenvalues in the eGinOE and eGinUE in the upper half of Figure \ref{fig:Heatmaps} using heatmaps. The heatmap associated with the eGinUE features two distinct scaling regimes: a spectral bulk inside the ellipse and an edge region along the ellipse circumference. Similar regimes are also seen in the eGinOE. In both ensembles it has previously been shown \cite{Girko2,SCSS,FN2}, that the limiting mean density of complex eigenvalues in the spectral bulk, after rescaling  $z=\sqrt{N}w$ with $w=x+iy$, converges to 
\be\label{eq:density_bulk_strong}
    \rho_{\text{bulk,SNH}}^{(\eGinUE, c)}(w) =\rho_{\text{bulk,SNH}}^{(\eGinOE,c)}(w) = \frac{1}{\pi(1-\tau^2)} \ \Theta\bigg[1-\left(\frac{x^2}{(1+\tau)^2} +\frac{y^2}{(1-\tau)^2}\right) \bigg]   \ ,
\ee
where $\Theta[a]=1$ when $a > 0$ and zero otherwise is the Heaviside step function. This implies that in the limit of large $N$, for both the eGinUE and eGinOE, the vast majority of eigenvalues lie inside the ellipse
\begin{equation}
    \frac{\RE(z)^2}{N(1 + \tau)^2} + \frac{\IM(z)^2}{N(1 - \tau)^2} = 1 \ ,
    \label{eq:ellipse}
\end{equation} 
thus, at SNH, both the semi-major and semi-minor axes of the elliptical support are $O(\sqrt{N})$. 

\begin{figure}[h!]
    \centering
    \includegraphics[scale = 0.335]{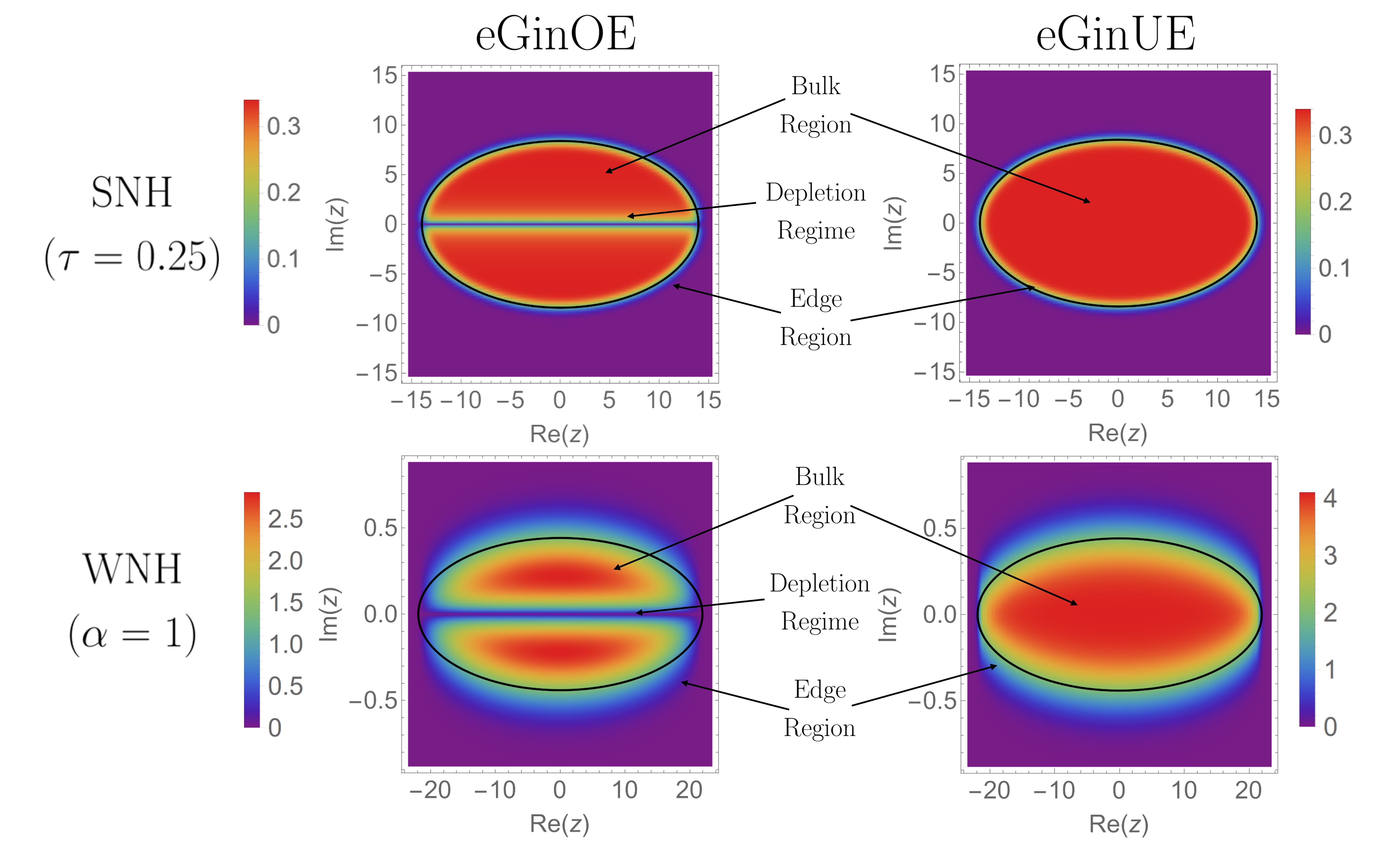}
    
    \caption{\small Large $N$ density of complex eigenvalues in the eGinOE (left) and eGinUE (right) in the strong (top) and (bottom) weak non-Hermiticity regimes with relevant scaling regions labelled. Strong non-Hermiticity plots are shown for $\tau = 0.25$ and weak non-Hermiticity plots are for $\alpha = 1$. In all plots the solid black ellipse is given by Eq. \eqref{eq:ellipse} and the density is depicted on a rainbow scale. This diagram explains schematically what is meant by the bulk, edge and depletion regimes of large eGinOE and eGinUE matrices. Each heatmap is plotted using the finite $N$ expression for the density of complex eigenvalues in the associated ensembles when $N=125$. }
    \label{fig:Heatmaps}
\end{figure}

\noindent
In addition, the heatmap in the eGinOE case shows yet another scaling regime close to the real axis, essentially for $\IM(z) \sim O(1)$. In that region the density of eigenvalues is significantly reduced in comparison to the spectral bulk value, reflecting the presence of the $O(\sqrt{N})$ real eigenvalues in the eGinOE, see \cite{WCF} for discussion in the context of the GinOE. The scaling regime describing this region of the complex plane will be referred to as the \emph{depletion regime}. Inside the depletion regime of the eGinOE, where we write $z=\sqrt{N}\delta + i\xi$, with $\delta,\xi \sim O(1)$, the mean density of complex eigenvalues in the limit $N\to \infty$ converges to    
\be\label{Eq:density_depletion_strip_strong}
    \rho_{\text{depletion,SNH}}^{(\eGinOE,c)}(\delta,\xi) = \sqrt{\frac{2}{\pi}} \, \frac{1}{(1-\tau^2)^{\frac{3}{2}}}  \ \xi \ \exp \left[ \frac{2\, \xi^2}{1-\tau^2} \right] \, \textup{erfc}\left[ \sqrt{\frac{2}{1-\tau^2}} \, \vert \xi \vert \right]\ \Theta \left[ 1 - \frac{\delta^2}{(1+\tau)^2} \right] \ .
\ee
The details of the derivation are given for convenience of the reader at the end of Section \ref{subsubsec:ProofAsymptdepletion}. The magnitude of a non-zero density at fixed $z$ in this regime does not depend on its real part, but only on $\xi$. The dependence on $\delta$ via the Heaviside step-function ensures that we stay inside the elliptic droplet, while the fact that $\xi$ remains unscaled with respect to $N$ restricts us inside a strip of width $O(1)$. Note that for $\tau = 0$ in Eqs. \eqref{eq:eGinUE_density} to \eqref{Eq:density_depletion_strip_strong} we obtain the corresponding expressions for the density in the GinOE and eGinUE \cite{WCF}. 

We now focus on the second scaling of $\tau$, the \emph{weak non-Hermiticity} regime. In contrast to strong non-Hermiticity, here $\tau$ is scaled with $N$ and approaches one as $N \rightarrow \infty$, specifically $\tau = 1-(\pi\alpha)^2/(2N)$. The density of complex eigenvalues at WNH is plotted using heatmaps in the lower half of Figure \ref{fig:Heatmaps} for both the eGinOE and eGinUE. From these plots one can see that, the elliptical support collapses to the real line in both ensembles, as they asymptotically approach the GOE and GUE respectively. In this region of weak non-Hermiticity, one can also see that both the eGinUE and eGinOE possess a spectral bulk and edge region and that the eGinOE has an additional region of eigenvalue depletion. It is also apparent from the lower half of Figure \ref{fig:Heatmaps} that there is an overall reduced density of complex eigenvalues in the eGinOE when compared to the eGinUE. This is due to the presence of a large number of real eigenvalues in the eGinOE. 

In the weak non-Hermiticity regime of the eGinOE and eGinUE, we can describe the behaviour of the vast majority of eigenvalues using the scaling $z = \sqrt{N}X + (i \pi y)/\sqrt{N}$, with $y \neq 0$ in the case of the eGinOE. For the eGinUE the spectral density in this regime and scaling reads
\be\label{Eq:eGinUEdensityweakbulk}
\begin{split}
    \rho_{\text{bulk,WNH}}^{(\eGinUE, c)}(X, y) =& \lim_{N\rightarrow \infty} \frac{\pi^2}{N} \rho^{(\text{eGinUE},c)}_N\left(z =  \sqrt{N} X + \frac{i \pi y}{\sqrt{N}} \right)
    = \frac{\sqrt{2}}{\pi^{3/2}} \, \frac{1}{|\alpha|} \, e^{-\frac{2y^2}{\alpha^2}} \,  \, \int_{0}^{\pi \sqrt{1 - \frac{X^2}{4}}} du \ e^{-\frac{\alpha^2 u^2}{2}} \cosh(2yu) \ .
\end{split}
\ee
This expression will simply follow from the derivation of the mean self-overlap in such a regime, see Eq. \eqref{Eq:eGinUEbulkweakproof} and is equivalent to \cite[Theorem 3]{ACV} after straightforward manipulation and rescaling $\alpha \to \alpha/\pi \nu(X)$ and $y \to \pi \nu(X)y$, where $\nu(X) = \sqrt{4 - X^2}/(2\pi)$. Additionally, one recovers \cite[Eq. (2.40)]{BF1} and \cite[Eq. (25)]{FKS98} in the vicinity of $X=0$. The corresponding expression in the eGinOE reads
\be\label{Eq:eGinOEdensityweakbulk}
\begin{split}
    \rho_{\text{bulk,WNH}}^{(\eGinOE,c)}(X,y) =&  \lim_{N\rightarrow \infty}  \frac{\pi^2}{N} \rho^{(\text{eGinOE},c)}_N\left(z = \sqrt{N}X + \frac{i \pi y}{\sqrt{N}} \right) 
    = \frac{1}{\pi} \text{erfc}\left[ \frac{\sqrt{2} |y|}{|\alpha|} \right] \, \int_{0}^{\pi\sqrt{1 - \frac{X^2}{4}}} du \, u \, e^{-\frac{\alpha^2 u^2}{2}} \, \sinh{\left(2 y u \right)} 
\end{split}
\ee
and is equivalent to results presented in \cite{FN2,Efe97} in the vicinity of $X=0$.  This result will be recovered as a byproduct of our derivation of the mean self-overlap  outlined in Section \ref{subsubsec:ProofeGinOEbulkWeak}. In Figure \ref{fig:density_WNH}, we plot the normalised conditional density of complex eigenvalues at WNH, 
\begin{equation}\label{condden}
    \tilde{\rho}_{\text{WNH}}\Big( X | y = \tilde{y} \Big) = \frac{\rho^{(c)}_{\text{bulk,WNH}}(X,\tilde{y})}{\int_{-2}^2 \rho^{(c)}_{\text{bulk,WNH}}(X,\tilde{y}) dX}  \hspace{0.5cm} \text{and} \hspace{0.5cm}\tilde{\rho}_{\text{WNH}}\Big( y | X = \tilde{X} \Big) = \frac{\rho^{(c)}_{\text{bulk,WNH}}(\tilde{X},y)}{\int_{-\infty}^\infty \rho^{(c)}_{\text{bulk,WNH}}(\tilde{X}, y) dy} \ ,
\end{equation}
in both the eGinUE and eGinOE at fixed $X$ and $y$. From these plots one can see that in both ensembles the single equation for the density of complex eigenvalues describes all interesting regimes apart from the edge. 

\begin{figure}[h!]
    \centering
    \includegraphics[scale=0.29]{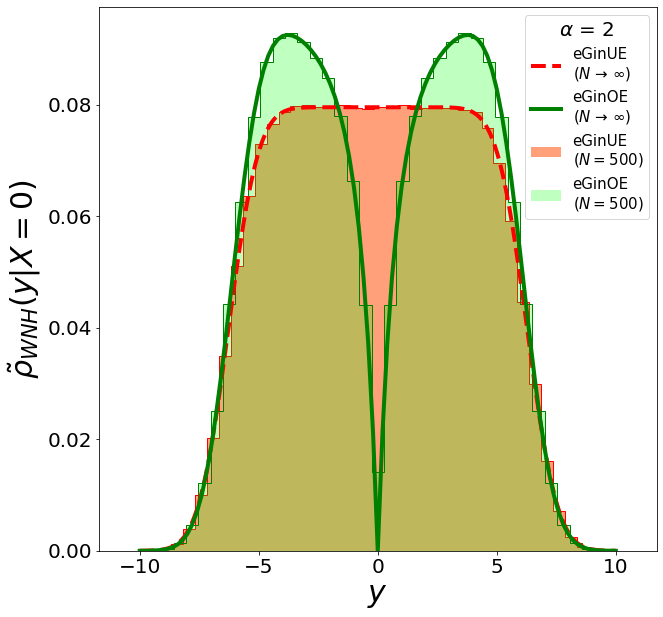}
    \hspace{1cm}
    \includegraphics[scale=0.29]{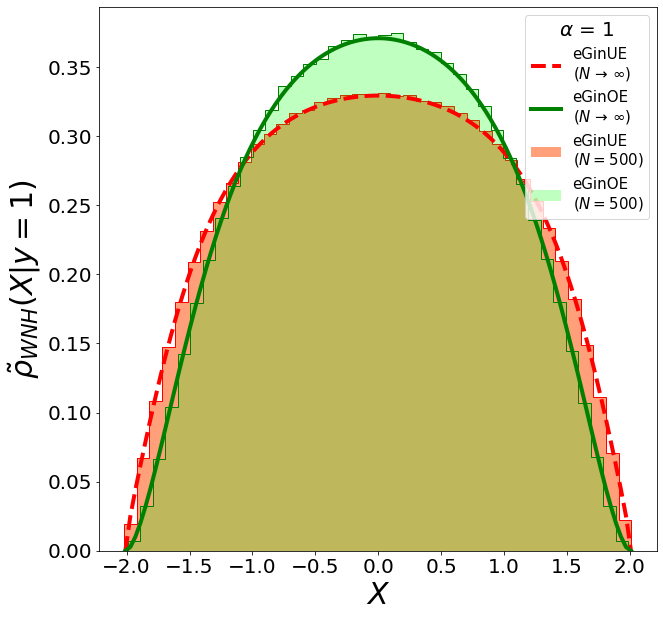}
    \caption{\small Normalised conditional density of complex eigenvalues, $\Tilde{\rho}_{\text{WNH}}$, see Eq.(\ref{condden}), at WNH in the eGinOE (green) and eGinUE (red). Left: $\Tilde{\rho}_{\text{WNH}}$ as a function of y, at fixed $X=0$ and $\alpha = 2$. Right: $\Tilde{\rho}_{\text{WNH}}$ as a function of X, at fixed $y=1$ and $\alpha = 1$. Each plot contains a normalised histogram of eigenvalues within $\pm 1/\sqrt{N}$ of the fixed $X$ or $y$, selected from an original set containing $O(10^8)$ eigenvalues.}
    \label{fig:density_WNH}
\end{figure}

\noindent
After the digressions on the mean densities, let us return to our main object of interest, the self-overlap between left and right eigenvectors. Results for the mean self-overlap are well established in the GinUE for the entire complex plane, see \cite{CM,MC,WS,BD,FyodorovCMP}. In the GinOE and eGinOE, results for real eigenvalues, i.e. when $z=x$, were discussed in \cite{FyodorovCMP, FT} and in the recent work \cite{WCF} some results for complex eigenvalues in the GinOE were found. For complex eigenvalues however, neither the eGinUE nor the eGinOE have been studied to the best of our knowledge. Hence, we now provide results for the mean self-overlap of eigenvectors associated with complex eigenvalues at finite $N$ and in several large $N$ regions of the droplet for the eGinOE and eGinUE.

%%%%%%%%%%%%%%%%%%%%%%%%%%%%%%%%%%%%%%%%%%%%%%%%%%%%%%%%%%%%%%%%%%%%%%
%%%%%%%%%%%%%%%%%%%%%%%%%%%%%%%%%%%%%%%%%%%%%%%%%%%%%%%%%%%%%%%%%%%%%%

\subsection{Statement of Main Results}
\label{subsec:MainRes}

We present our main findings in this section, giving the mean self-overlap of eigenvectors associated with a complex eigenvalue at finite $N$ for the eGinUE - see Theorem \ref{thm:MainResCOMPLEX} - and for the eGinOE - see Theorem \ref{thm:MainRes}. We also present results of asymptotic analysis in various scaling regimes. Proofs and technical details of the finite $N$ results and asymptotic analysis are provided in Sections \ref{sec:MainResDerivation} and \ref{sec:AsymptoticAnalysis} respectively.

\begin{thm}\label{thm:MainResCOMPLEX}
    Let $X$ be an $N \times N$ random matrix, with $N\geq 2$, drawn from the eGinUE. The mean self-overlap of left and right eigenvectors, Eq. \eqref{ChalkerMehligOverlap}, associated with a complex eigenvalue $z$ at finite $N$ is given by
    \be\label{Eq:MainResCOMPLEX}
    \begin{split}
        \mathcal{O}^{(\eGinUE,c)}_{N}(z) &= \bigg\langle \frac{1}{N}\sum_{n=1}^N \mathcal{O}_{nn} \ \delta(z-z_n) \bigg\rangle_{\textup{eGinUE},N}  \\
        &= \rho^{(\textup{eGinUE},c)}_{N}(z) + (1-\tau^2) \bigg[ \,  \rho^{(\textup{eGinUE},c)}_{N-1}(z) + (N-2) \, \rho^{(\textup{eGinUE},c)}_{N-2}(z) - R_{N-3} \, \bigg] \ ,
    \end{split}
    \ee
    where $\rho^{(\textup{eGinUE},c)}_{N}(z)$ is given in Eq. \eqref{eq:eGinUE_density} and the function $R_N\equiv R_N(z;\tau)$, is defined in terms of the Hermite polynomials, Eq. \eqref{Eq:HermitePoly}, as
    \be\label{Eq:RNRes}
    \begin{split}
        R_N &
        \equiv \frac{1}{\pi } \ \frac{1}{\sqrt{1-\tau^2}}  \ \exp \left[ -\frac{1}{1-\tau^2} \left( \vert z \vert^2 - \tau \, \RE (z^2) \right) \right] \sum_{k=0}^{N} \ \frac{k \ \tau^{k}}{k!} \, \HE_k\left(\frac{\bar{z}}{\sqrt{\tau}}\right) \HE_{k}\left( \frac{z}{\sqrt{\tau}} \right)   \ .
    \end{split}
    \ee
\end{thm}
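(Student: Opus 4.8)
The plan is to adapt the Chalker--Mehlig Schur-decomposition approach \cite{CM,MC} to the elliptic Gaussian weight. First, write $X = U(Z+T)U^\dagger$ with $U$ unitary, $Z = \diag(z_1,\dots,z_N)$ and $T$ strictly upper triangular. The observation that makes the elliptic case as tractable as the Ginibre one is that $\Tr X^2 = \Tr(Z+T)^2 = \sum_i z_i^2$, since both $ZT$ and $T^2$ are strictly upper triangular and hence traceless; consequently the weight in \eqref{jpdfReGinCOMPLEX} factorises as $\exp[-\tfrac{1}{1-\tau^2}(\sum_i(|z_i|^2-\tau\RE z_i^2)+\Tr TT^\dagger)]$. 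Thus, conditionally on the eigenvalues, the entries $T_{ij}$ ($i<j$) are i.i.d.\ circularly symmetric complex Gaussians with $\mathbb{E}|T_{ij}|^2 = 1-\tau^2$ and $\mathbb{E}T_{ij}^2=0$, independent of $Z$, while the eigenvalues carry the usual eGinUE joint density $\propto \prod_{i<j}|z_i-z_j|^2\prod_i\omega(z_i)$ with $\omega(z)=\tfrac{1}{\pi\sqrt{1-\tau^2}}\exp[-\tfrac{1}{1-\tau^2}(|z|^2-\tau\RE z^2)]$, compatible with \eqref{eq:eGinUE_density}.

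Second, I would compute the self-overlap conditionally on the eigenvalues. By relabelling we may place $z_n$ in the top-left slot of the triangular form, so the right eigenvector is $e_1$ and the left eigenvector is built by back-substitution; expanding it as a sum over strictly increasing index chains rooted at $1$ and taking the Gaussian expectation over $T$ --- only ``matched'' chains survive, since $\mathbb{E}T_{ij}^2=0$ --- one obtains, after the chain sums telescope,
\[
  \mathbb{E}\bigl[\mathcal{O}_{nn}\,\big|\,z_1,\dots,z_N\bigr]=\prod_{j\neq n}\Bigl(1+\frac{1-\tau^2}{|z_n-z_j|^2}\Bigr),
\]
the elliptic version of the Chalker--Mehlig conditional self-overlap.

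Third, substitute this into \eqref{ChalkerMehligOverlap}. Multiplying by the Vandermonde factor $\prod_{j\neq n}|z_n-z_j|^2$ coming from the joint density cancels every singularity, turning the product into $\prod_{j\neq n}(|z_n-z_j|^2+1-\tau^2)$; exploiting permutation symmetry to set $z_n=z$ and integrating the delta mass gives
\[
  \mathcal{O}^{(\eGinUE,c)}_N(z)=c_N\,\omega(z)\!\int\!\prod_{2\le i<j\le N}\!\!|z_i-z_j|^2\prod_{j=2}^N\bigl[(|z-z_j|^2+1-\tau^2)\,\omega(z_j)\bigr]\prod_{j=2}^N d^2z_j ,
\]
that is, $\omega(z)$ times (an $N$-dependent constant times) the expectation of $\prod_{j=1}^{N-1}(|z-z_j|^2+1-\tau^2)$ in the eGinUE of size $N-1$. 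By Andr\'eief's (Heine's) identity this $(N-1)$-fold integral equals a single $(N-1)\times(N-1)$ determinant whose entries are the $(1-\tau^2)$-shifted moments $\int p_a(w)\overline{p_b(w)}\,|z-w|^2\,\omega(w)\,d^2w$, where $p_k(w)=\tau^{k/2}\HE_k(w/\sqrt\tau)$ are the monic polynomials orthogonal with respect to $\omega$ (with $\|p_k\|^2=k!$). The three-term recurrence $w\,p_k(w)=p_{k+1}(w)+k\tau\,p_{k-1}(w)$ makes $|z-w|^2=(z-w)(\bar z-\bar w)$ act as a pentadiagonal operator, so this is a banded determinant.

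The remaining, and I expect hardest, step is to evaluate this banded determinant in closed form. Expanding it along its last rows and columns lowers the Hermite index by one at each step; the resulting finite sums then have to be re-assembled --- using standard Hermite identities such as $\HE_k'=k\,\HE_{k-1}$, the recurrence above, and a Christoffel--Darboux-type summation --- into exactly the combination $\rho^{(\eGinUE,c)}_{N-1}(z)+(N-2)\,\rho^{(\eGinUE,c)}_{N-2}(z)-R_{N-3}(z;\tau)$, while the part of the integral in which every factor contributes $|z-z_j|^2$ is the characteristic-polynomial moment $\mathbb{E}_{\eGinUE,N-1}|\det(z-X)|^2$, which reproduces $\rho^{(\eGinUE,c)}_N(z)$. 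The constant $c_N$ is then pinned down from the normalisation of \eqref{ChalkerMehligOverlap} (or by matching the large-$|z|$ tail), and setting $\tau=0$ must recover the GinUE formula of \cite{WCF} as a consistency check. A viable alternative for this last step, which I would keep in reserve, is to expand $\prod_{j\neq n}(1+\tfrac{1-\tau^2}{|z_n-z_j|^2})$ layer by layer and integrate the successive $1/|z-w_i|^2$ weights against the correlation functions $\rho^{(m)}_N=\det[K_N(w_i,w_j)]$; because $K_N$ is a finite-rank projection the higher layers telescope, again collapsing onto the same shifted Hermite sums.
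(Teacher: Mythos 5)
Your route is genuinely different from the paper's. The paper splits off only the single eigenvalue $z$ via an \emph{incomplete} Schur decomposition, averages over the coupling vector $\mathbf{w}$ to get $1+(1-\tau^2)\Tr B^{-1}$, and then computes $\langle\det M\rangle_{X_1}$ and its $\mu$-derivative in closed form by Grassmann integration, Hubbard--Stratonovich transformations and Pfaffian/Legendre identities (Proposition \ref{prop:Xav}, Corollary \ref{cor:AvdetX1}), which directly produces the Hermite sums identified with $\rho^{(\textup{eGinUE},c)}_{N-1}$, $\rho^{(\textup{eGinUE},c)}_{N-2}$ and $R_{N-3}$. Your preliminary reductions are sound: in full Schur coordinates $\Tr X^2$ indeed depends only on the eigenvalues, so conditionally on the spectrum the strictly upper-triangular entries are i.i.d.\ circular complex Gaussians of variance $1-\tau^2$, the exact Chalker--Mehlig/Bourgade--Dubach conditional self-overlap formula carries over verbatim with $1/N\to 1-\tau^2$ \cite{CM,BD} (see also \cite{Yabuoku} for the elliptic case), and your resulting representation of $\mathcal{O}^{(\eGinUE,c)}_N(z)$ as $\omega(z)$ times the $(N-1)$-fold eGinUE eigenvalue average of $\prod_j\left(|z-z_j|^2+1-\tau^2\right)$ is consistent with the paper's intermediate object: averaging the paper's $\det B\,[1+(1-\tau^2)\Tr B^{-1}]$ over the triangular part of $X_1$ at fixed spectrum reproduces exactly this product.

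The decisive step, however, is missing. The closed-form evaluation of your Andr\'eief/banded determinant -- equivalently of $\mathbb{E}_{\eGinUE,N-1}\prod_j\left(|z-z_j|^2+1-\tau^2\right)$ -- and its identification with $\rho^{(\textup{eGinUE},c)}_{N}+(1-\tau^2)\big[\rho^{(\textup{eGinUE},c)}_{N-1}+(N-2)\rho^{(\textup{eGinUE},c)}_{N-2}-R_{N-3}\big]$ \emph{is} the content of Eq.~\eqref{Eq:MainResCOMPLEX}, and you explicitly defer it as ``the hardest step'', offering only a plan. Be aware that this step is harder than your sketch suggests: expanding your product in powers of $1-\tau^2$ generates contributions at \emph{all} orders (involving higher correlation functions of the determinantal ensemble), which must resum into a final answer that is manifestly linear in $1-\tau^2$; in the paper's scheme this linearity is built in from the outset, because only $\mathbf{w}$ -- not the triangular part of $X_1$ -- is integrated out before the matrix average, so the whole computation reduces to a single $\langle\det M\rangle$ and one $\mu$-derivative. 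A smaller point: the exact conditional product formula should be either cited or proved (your ``matched chains telescope'' sketch is the right mechanism, and your observation about the conditional law of $T$ does make the elliptic extension immediate, but as written it is an assertion). Until the determinant evaluation (or the equivalent resummation) is actually carried out and matched term by term to Eqs.~\eqref{Eq:MainResCOMPLEX}--\eqref{Eq:RNRes}, including fixing the constant $c_N$, the theorem is not established by your argument.
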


\begin{thm}\label{thm:MainRes}
    Let $X$ be an $N \times N$ random matrix, with $N\geq 2$, drawn from the eGinOE. The mean self-overlap of left and right eigenvectors, Eq. \eqref{ChalkerMehligOverlap}, associated with a complex eigenvalue $z=x+iy$ at finite $N$ reads
    \be\label{Eq:MainRes}
    \begin{split}
        \mathcal{O}^{(\eGinOE,c)}_{N}(z) &= \bigg\langle \frac{1}{N}\sum_{n=1}^N \mathcal{O}_{nn} \ \delta(z-z_n) \bigg\rangle_{\textup{eGinOE},N}  = \frac{1}{\pi} \ \sqrt{\frac{1-\tau}{1+\tau}} \ \exp \left[ -\frac{1}{1+\tau}x^2 \right] \ \exp \left[-\frac{1}{1-\tau}y^2 \right] \\
        &\times \bigg[ 1 + \sqrt{\frac{\pi(1-\tau^2)}{2}} \ \exp \left[ \frac{2 y^2}{1-\tau^2} \right] \, \frac{1}{2\vert y \vert} \ \textup{erfc}\left( \sqrt{\frac{2 }{1-\tau^2}} \ \vert y \vert \right) \bigg] \\
        &\times \bigg[   P_{N-2} + (1-\tau^2)\bigg( P_{N-3} + (N-3) P_{N-4} -  T_{N-4} \bigg) \bigg] \ ,
    \end{split}
    \ee
    where the function $P_N\equiv P_N(z;\tau)$ is defined in Eq. \eqref{Eq:PNRes} and $T_N\equiv T_N(z,\tau)$ is defined in terms of the Hermite polynomials, Eq. \eqref{Eq:HermitePoly}, as 
    \be\label{Eq:TNRes}
    \begin{split}
        T_N &\equiv \frac{1}{\bar{z}-z} \sum_{k=0}^{N} \ \frac{k \ \tau^{k+\frac{1}{2}}}{k!} \bigg[ \HE_{k+1}\left(\frac{\bar{z}}{\sqrt{\tau}}\right) \HE_{k}\left( \frac{z}{\sqrt{\tau}} \right) - \HE_{k+1}\left( \frac{z}{\sqrt{\tau}} \right) \HE_{k}\left(\frac{\bar{z}}{\sqrt{\tau}}\right) \bigg] \ .
    \end{split}
    \ee
    
\end{thm}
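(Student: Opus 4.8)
The plan is to run, in the real setting, the same Schur--reduction strategy that yields Theorem~\ref{thm:MainResCOMPLEX}, with the single difference that one peels off the complex-conjugate pair $z,\bar z$ at once, as a $2\times2$ block, rather than a $1\times1$ diagonal entry. By exchangeability of the eigenvalues and the conjugation symmetry of the eGinOE spectrum (Remark~\ref{rem:GinOEDef}), $\mathcal{O}^{(\eGinOE,c)}_{N}(z)$ reduces --- up to elementary combinatorial factors counting the real versus complex spectral slots --- to the expectation of $\mathcal{O}_{nn}$ conditioned on one distinguished eigenvalue equalling $z=x+iy$ with $y\neq0$. Writing a generic such matrix as $X=Q\tilde X Q^{T}$ with $Q\in O(N)$ and
\[
\tilde X=\begin{pmatrix} Z & B\\ 0 & Y\end{pmatrix},\qquad Z\in\mathbb{R}^{2\times2}\ \text{with eigenvalues}\ z,\bar z,\quad B\in\mathbb{R}^{2\times(N-2)},\quad Y\in\mathbb{R}^{(N-2)\times(N-2)},
\]
block-triangularity gives $\Tr(XX^{T})=\Tr(ZZ^{T})+\|B\|^{2}+\Tr(YY^{T})$ and $\Tr(X^{2})=\Tr(Z^{2})+\Tr(Y^{2})$, so the weight~\eqref{jpdfReGin} factorizes: $Y$ is an eGinOE matrix of size $N-2$ with the same $\tau$, $B$ has i.i.d.\ real Gaussian entries of variance $1-\tau^{2}$ independent of $Y$, and $Z$ carries the weight $\exp[-\tfrac1{2(1-\tau^{2})}\Tr(ZZ^{T}-\tau Z^{2})]$ with its eigenvalues pinned to $z,\bar z$. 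The Jacobian of $X\mapsto(Q,Z,B,Y)$ contributes $\lvert\det(z-Y)\rvert^{2}=\prod_{j}\lvert z-y_{j}\rvert^{2}$ together with a factor linear in $\lvert y\rvert$ from the $2\times2$ block; this is the $\tau>0$ analogue of the reduction in~\cite{WCF}, and the only extra bookkeeping is the (immediate) observation that the $\tau X^{2}$ term is block-triangular.

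Next I would write $\mathcal{O}_{nn}$ for the eigenvalue $z$ in block form: its right eigenvector is $(\mathbf r^{T},\mathbf 0^{T})^{T}$ with $\mathbf r\in\mathbb{C}^{2}$ the $z$-eigenvector of $Z$, and its left eigenvector is $(\mathbf l^{T},\mathbf m^{T})^{T}$ with $\mathbf l^{T}$ the left $z$-eigenvector of $Z$ and $\mathbf m^{T}=\mathbf l^{T}B(z-Y)^{-1}$, normalized by $\mathbf l^{T}\mathbf r=1$. Hence $\mathcal{O}_{z}=\|\mathbf r\|^{2}\bigl(\|\mathbf l\|^{2}+\mathbf l^{T}B(z-Y)^{-1}(\bar z-Y^{T})^{-1}B^{T}\bar{\mathbf l}\bigr)$, and the Gaussian $B$-average (using $\mathbb{E}[B_{aj}B_{bk}]=(1-\tau^{2})\delta_{ab}\delta_{jk}$) collapses to
\[
\mathbb{E}_{B}\,\mathcal{O}_{z}=\|\mathbf r\|^{2}\|\mathbf l\|^{2}\Bigl(1+(1-\tau^{2})\,\Tr[(z-Y)^{-1}(\bar z-Y^{T})^{-1}]\Bigr).
\]
Parametrizing $Z$ by its shear parameter $b$ one computes $\|\mathbf r\|^{2}\|\mathbf l\|^{2}/(\mathbf l^{T}\mathbf r)^{2}=(b^{2}+y^{2})^{2}/(4y^{2}b^{2})=\tfrac{b^{2}}{4y^{2}}+\tfrac12+\tfrac{y^{2}}{4b^{2}}$; integrating this against the $Z$-weight over $b$ produces, besides the $\erfc$ already seen in~\eqref{eGinOEdensityComplex}, an additional $\erfc$-free polynomial contribution (from the $\tfrac{b^{2}}{4y^{2}},\tfrac{y^{2}}{4b^{2}}$ pieces, via integration by parts). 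Together these assemble the prefactor bracket $[1+\sqrt{\pi(1-\tau^{2})/2}\,e^{2y^{2}/(1-\tau^{2})}(2\lvert y\rvert)^{-1}\erfc(\cdots)]$ in~\eqref{Eq:MainRes} and explain why this prefactor is not the one of $\rho^{(\eGinOE,c)}_{N}$; in the eGinUE one peels a $1\times1$ block, $\|\mathbf r\|^{2}\|\mathbf l\|^{2}\to1$, no such modification appears, and Theorem~\ref{thm:MainResCOMPLEX} has no $\erfc$-type prefactor.

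It remains to evaluate, with $M=N-2$,
\[
\mathbb{E}_{Y\in\eGinOE_{M}}\Bigl[\lvert\det(z-Y)\rvert^{2}\bigl(1+(1-\tau^{2})\,\Tr[(z-Y)^{-1}(\bar z-Y^{T})^{-1}]\bigr)\Bigr].
\]
The ``$1$''-term is the two-point characteristic-polynomial average $\mathbb{E}_{Y}\lvert\det(z-Y)\rvert^{2}$, which is classical for elliptic Ginibre matrices and, after combining with the prefactor of the previous step, reproduces $\rho^{(\eGinOE,c)}_{N}(z)$ --- i.e.\ the $P_{N-2}$ term. For the resolvent-squared term I would use $\lvert\det(z-Y)\rvert^{2}\Tr[(z-Y)^{-1}(\bar z-Y^{T})^{-1}]=\sum_{i,j}\operatorname{adj}(z-Y)_{ij}\,\overline{\operatorname{adj}(z-Y)_{ij}}$ and generate it from a rank-one deformation of the determinants, $\sum_{i,j}\partial_{\epsilon}\partial_{\tilde\epsilon}\big|_{0}\,\mathbb{E}_{Y}[\det(z-Y-\epsilon\mathbf e_{j}\mathbf e_{i}^{T})\det(\bar z-Y-\tilde\epsilon\mathbf e_{j}\mathbf e_{i}^{T})]$, or --- probably more economically --- compute the average directly from the skew-orthogonal-polynomial (Pfaffian) structure of the eGinOE; either route unravels into Hermite-polynomial sums, and the trace $\sum_{j}$ collapses one summation, which is what inserts the extra factor $k$ distinguishing $T_{N}$ from $P_{N}$. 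The expected outcome is that this average equals the same $z$-independent constant as the ``$1$''-term times $P_{N-3}+(N-3)P_{N-4}-T_{N-4}$.

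Collecting the prefactor from the first two steps, the factor $1-\tau^{2}$ from the $B$-average, and the Hermite brackets from the $Y$-average then gives~\eqref{Eq:MainRes}, the low-$N$ cases $N=2,3$ (where $P_{n},T_{n}$ of negative index vanish) being checked separately. \textbf{The main obstacle} is the last step: pinning down the combination $P_{N-3}+(N-3)P_{N-4}-T_{N-4}$ with exactly these coefficients --- and with the normalization matched to the ``$1$''-term --- requires a careful resummation of the Hermite expansions together with the angular $Z$-integral feeding the $\erfc$-free piece, and disciplined tracking of the matrix sizes $N-2,N-3,N-4$. A secondary technical point is establishing the measure factorization and the Jacobian of the real Schur reduction for $\tau\in(0,1)$, extending the $\tau=0$ computation of~\cite{WCF}.
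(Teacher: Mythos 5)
Your reduction is the same as the paper's: you peel off the conjugate pair $z,\bar z$ via the incomplete real Schur decomposition, note that the $\tau X^{2}$ term is block-triangular so the weight factorizes (the coupling block having i.i.d.\ real Gaussian entries of variance $1-\tau^{2}$, independent of an $(N-2)\times(N-2)$ eGinOE block), average over that block to get $1+(1-\tau^{2})\,\Tr[(z-Y)^{-1}(\bar z-Y^{T})^{-1}]$, and integrate over the $2\times2$-block parameter to assemble the $\erfc$-bracket; this is precisely the route of Sections \ref{subsec:inSchurREAL}--\ref{subsec:ProoffiniteN} (Lemmas \ref{lem:Wav1}--\ref{lem:Wav2} and the $\delta$-integral \eqref{deltaInt}), and the pieces you do compute (the eigenvector block structure, the prefactor $\|\mathbf r\|^{2}\|\mathbf l\|^{2}/(\mathbf l^{T}\mathbf r)^{2}=1+\delta^{2}/(4y^{2})$, the variance of the coupling block) are correct. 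One minor point: the Schur Jacobian is purely geometric and $\tau$-independent, so Edelman's formula applies verbatim and no ``extension to $\tau\in(0,1)$'' is needed; only the factorization of the Gaussian weight is new, which you handle.

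The genuine gap is exactly the step you flag as the main obstacle: the evaluation of $\big\langle|\det(z-Y)|^{2}\big(1+(1-\tau^{2})\Tr[(z-Y)^{-1}(\bar z-Y^{T})^{-1}]\big)\big\rangle_{Y}$ is asserted, not derived, and the whole content of Theorem \ref{thm:MainRes} --- the specific combination $P_{N-3}+(N-3)P_{N-4}-T_{N-4}$, the index shifts, the $k$-weighted sum defining $T_{N}$ in \eqref{Eq:TNRes}, and the normalization relative to the $P_{N-2}$ term --- lives in that step. The paper carries it out by writing $\det B\,\Tr B^{-1}=\partial_{\mu}\det(\mu\eins+B)\vert_{\mu=0}$, representing $\langle\det M(\mu)\rangle_{X_{2}}$ as a Grassmann integral, bilinearizing the quartic terms by Hubbard--Stratonovich variables and evaluating a Pfaffian (Proposition \ref{prop:X2av}), then differentiating at $\mu=0$ with the Legendre identity \eqref{diffLegend} and resumming via the Christoffel--Darboux formula \eqref{CDform} to obtain Corollary \ref{cor:AvdetX2}. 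Your two suggested alternatives (a rank-one deformation of a pair of characteristic polynomials, or the skew-orthogonal/Pfaffian structure of the eGinOE) are plausible but unexecuted: the first requires the average of two \emph{deformed} characteristic polynomials of a real elliptic matrix, which is not among the quoted classical results, and the second still demands the nontrivial resummation that produces the extra factor $k$ distinguishing $T_{N}$ from $P_{N}$ and fixes the coefficients $1,(N-3),-1$. As written, the argument is a correct reduction plus an unproven identity, so the theorem is not yet established; supplying the analogue of Proposition \ref{prop:X2av} and Corollary \ref{cor:AvdetX2} (or an equivalent skew-orthogonal computation) is what remains.
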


\noindent
In Figure \ref{fig:finite_N}, we compare the mean conditional self-overlap in the eGinOE and eGinUE at finite $N$ given by Eqs. \eqref{Eq:MainResCOMPLEX} and \eqref{Eq:MainRes} with results of direct numerical simulations. We now discuss results in the limit $N\to \infty$, firstly in the bulk and depletion regime for the strong non-Hermiticity regime in Section \ref{subsubsec:StrongRegime} (eGinOE only), and then in the weak non-Hermiticity regime in Section \ref{subsubsec:WeakRegime} for both the eGinOE and eGinUE.

\begin{figure}[h]
    
    \centering
    \includegraphics[scale = 0.3]{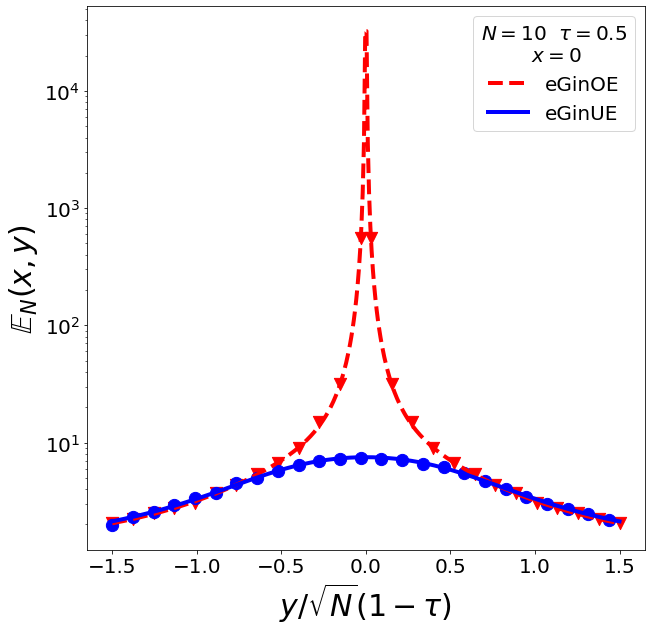}
    \hspace{2cm}
    \includegraphics[scale = 0.3]{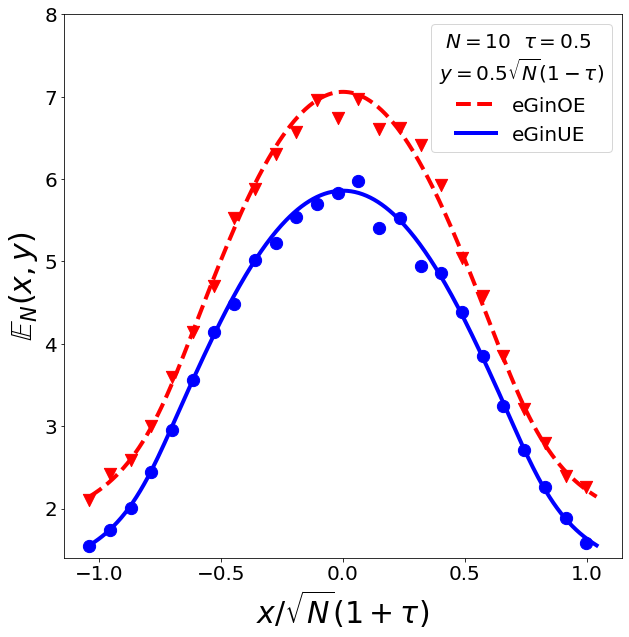}

    \caption{\small Mean conditional self-overlap, $\mathbb{E}_N(x,y)$, see Eq.(\ref{conditional}), of eigenvectors associated with complex eigenvalues in the eGinOE and eGinUE for $N = 10$ at fixed $\tau = 0.5$. Triangular (circular) markers represent numerical results for the eGinOE (eGinUE). Left: $\mathbb{E}_N(x,y)$ associated with purely imaginary eigenvalues as a function of $y$ at $x=0$. Right: $\mathbb{E}_N(x,y)$ associated with complex eigenvalues as a function of $x$, at a fixed $y = 0.5\sqrt{N}(1 - \tau)$. Numerically, the mean self-overlap is measured using the self-overlaps associated with the $O(10^3)$ eigenvalues nearest to the appropriate $x$ and $y$ from a data set containing $O(10^9)$ samples.}
    \label{fig:finite_N}
\end{figure}

%%%%%%%%%%%%%%%%%%%%%%%%%%%%%%%%%%%%%%%%%%%%%%%%%%%%%%%%%%%%%%%%%%%%%%

\subsubsection{Strong non-Hermiticity Regime}
\label{subsubsec:StrongRegime}

In the strong non-Hermiticity regime the parameter $\tau$  remains fixed as the matrix size $N \to \infty$. One then easily arrives at the following
\begin{cor}\label{cor:eGinOEbulkStrong}
    For a complex eigenvalue $z = \sqrt{N}w$, where $w=x+iy$ ($y\neq 0$ and $|y| > N^{-1/2}$) and $\vert w \vert < 1$, the limiting scaled mean self-overlap of left and right eigenvectors in the bulk of the eGinOE, for a fixed $0\leq \tau < 1$, reads
    \be\label{Eq:eGinOEbulkStrongRes}
    \begin{split}
        \mathcal{O}_{\textup{bulk,SNH}}^{\textup{(eGinOE,c)}}(w) &\equiv \lim_{N\rightarrow \infty} \frac{1}{N} \ \mathcal{O}_N^{\textup{(eGinOE,c)}}\left( \sqrt{N}w \right) \\
        &= \frac{1}{\pi} \bigg[ 1 - \bigg( \frac{ x^2}{(1+\tau)^2} + \frac{ y^2}{(1-\tau)^2} \bigg)  \bigg] \ \Theta \left[1- \left( \frac{x^2}{(1+\tau)^2} + \frac{y^2}{(1-\tau)^2} \right)\right] \ .  
    \end{split}
    \ee
\end{cor}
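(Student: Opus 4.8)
The corollary is a large-$N$ asymptotic reduction of the exact finite-$N$ identity in Theorem \ref{thm:MainRes}, so the plan is to substitute $z=\sqrt N\,w$ into Eq. \eqref{Eq:MainRes} and track each factor. The first step is to dispose of the two scalar brackets. After the substitution $\RE z=\sqrt N\,x$, $\IM z=\sqrt N\,y$, the argument $\sqrt{2/(1-\tau^2)}\,|\IM z|$ of the complementary error function tends to infinity under the hypotheses, so the tail expansion $\erfc(t)=e^{-t^2}(t\sqrt\pi)^{-1}(1+O(t^{-2}))$ shows that the bracket $\big[\,1+\sqrt{\pi(1-\tau^2)/2}\,e^{2Ny^2/(1-\tau^2)}\,(2\sqrt N|y|)^{-1}\erfc(\cdots)\,\big]$ equals $1+O(1/N)$, and that the Gaussian prefactor $\tfrac1\pi\sqrt{(1-\tau)/(1+\tau)}\,e^{-Nx^2/(1+\tau)-Ny^2/(1-\tau)}$ coincides, up to a $1+O(1/N)$ relative factor, with the prefactor $A(z)$ multiplying $P_{N-2}$ in the density formula \eqref{eGinOEdensityComplex}. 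Since the combination $A(\sqrt N w)\big[P_{N-2}+(1-\tau^2)(P_{N-3}+(N-3)P_{N-4}-T_{N-4})\big]$ will turn out to be $O(N)$, these $O(1/N)$ corrections are immaterial, and the problem reduces to computing the limit of $\tfrac1N A(\sqrt N w)\big[P_{N-2}+(1-\tau^2)(P_{N-3}+(N-3)P_{N-4}-T_{N-4})\big]$.

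The second step identifies $A(\sqrt N w)P_{N-j}$ with a shifted finite-$N$ density. Writing $\phi(w):=x^2/(1+\tau)^2+y^2/(1-\tau)^2$, Eq. \eqref{eGinOEdensityComplex} gives exactly $A(z)P_{N-2}=\rho^{(\eGinOE,c)}_N(z)$, $A(z)P_{N-3}=\rho^{(\eGinOE,c)}_{N-1}(z)$ and $A(z)P_{N-4}=\rho^{(\eGinOE,c)}_{N-2}(z)$, and the known bulk limit \eqref{eq:density_bulk_strong} (after absorbing the harmless rescaling $\sqrt N\,w=\sqrt{N-j}\,w(1+O(1/N))$) yields that each of these, evaluated at $z=\sqrt N w$, converges to $\tfrac1{\pi(1-\tau^2)}\Theta[1-\phi(w)]$ when $\phi(w)\ne 1$ and is exponentially small when $\phi(w)>1$. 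Consequently $\tfrac1N A(\sqrt N w)P_{N-2}$ and $\tfrac1N(1-\tau^2)A(\sqrt N w)P_{N-3}$ both vanish in the limit, while $\tfrac1N(1-\tau^2)(N-3)A(\sqrt N w)P_{N-4}\to \tfrac1\pi\,\Theta[1-\phi(w)]$.

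The genuinely new ingredient — and the step I expect to be the main obstacle — is the asymptotics of $A(\sqrt N w)T_{N-4}$. The summand of $T_N$ differs from that of $P_N$ only by the extra weight $k$, so everything hinges on how the $P_N$-summand (equivalently the increment $\rho^{(\eGinOE,c)}_{M+1}-\rho^{(\eGinOE,c)}_M$ of the density as a function of the truncation $M$) is distributed in the summation index. I would establish, via Plancherel--Rotach asymptotics for the Hermite polynomials $\HE_k(\sqrt{N/\tau}\,w)$ in the regime where $k$ is proportional to $N$ (equivalently a two-dimensional saddle-point analysis of the integral representation in \eqref{Eq:HermitePoly}), that this summand is concentrated in a window of width $O(\sqrt N)$ about the index $k_\ast=N\phi(w)$ — consistently with the fact that $A(\sqrt N w)\sum_{k\le M}(\cdots)_k$ jumps from exponentially small to $\tfrac1{\pi(1-\tau^2)}$ precisely as $M$ crosses $N\phi(w)$. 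Granting this concentration, the $k$-weighting simply extracts $k_\ast$, i.e. $T_{N-4}/P_{N-4}=N\phi(w)\,(1+o(1))$, so that $A(\sqrt N w)T_{N-4}=N\phi(w)\,\tfrac1{\pi(1-\tau^2)}(1+o(1))$ when $\phi(w)<1$ and is exponentially small otherwise; hence $-\tfrac1N(1-\tau^2)A(\sqrt N w)T_{N-4}\to-\tfrac1\pi\,\phi(w)\,\Theta[1-\phi(w)]$. The delicate points here are the uniformity of the Hermite asymptotics across the full summation range, the effect of truncating the sum at $k=N-O(1)$ (which is what produces the cut-off $\Theta[1-\phi(w)]$, since for $\phi(w)>1$ the sum is governed by its endpoint rather than by $k_\ast$), and checking that the $O(\sqrt N)$ spread of the peak contributes only at relative order $O(N^{-1/2})$ to the weighted sum.

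Assembling the four contributions,
\[
\tfrac1N\,\mathcal{O}_N^{(\eGinOE,c)}\big(\sqrt N w\big)\ \longrightarrow\ 0+0+\tfrac1\pi\,\Theta[1-\phi(w)]-\tfrac1\pi\,\phi(w)\,\Theta[1-\phi(w)]=\tfrac1\pi\big(1-\phi(w)\big)\,\Theta[1-\phi(w)],
\]
which is precisely Eq. \eqref{Eq:eGinOEbulkStrongRes}. As a consistency check I would verify that setting $\tau=0$ collapses $\phi(w)$ to $|w|^2$ and recovers both the GinOE bulk result of \cite{WCF} and the classical Chalker--Mehlig value $\tfrac1\pi(1-|w|^2)$.
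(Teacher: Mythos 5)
Your overall strategy is sound and reaches the correct answer by the correct mechanism, but it is organized differently from the paper and leaves its decisive step unproved. What you do differently: you treat the $P$-terms by noting the exact identities $A(z)P_{N-2}=\rho^{(\eGinOE,c)}_N(z)$, $A(z)P_{N-3}=\rho^{(\eGinOE,c)}_{N-1}(z)$, $A(z)P_{N-4}=\rho^{(\eGinOE,c)}_{N-2}(z)$ and quoting the known bulk density limit \eqref{eq:density_bulk_strong}, and you match the overlap prefactor of \eqref{Eq:MainRes} to the density prefactor of \eqref{eGinOEdensityComplex} via the erfc tail expansion (both of which are fine for fixed $y\neq 0$, exactly as in the paper's handling of the erfc term). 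The paper instead does not quote the density limit at all: it derives exact double-integral representations for $P_N$ \emph{and} $T_N$ (Lemmas \ref{lem:PNintegralrep} and \ref{lem:TNintegralrep}), in which the truncation of the $k$-sum appears as the incomplete-gamma ratio $\Theta_N^{(M)}$ and the extra weight $k$ becomes the factor $\tfrac{N}{2}(p^2-q^2)$, and then a single two-variable Laplace evaluation (Section \ref{subsubsec:PNTNAsymptotic}, Eqs. \eqref{PN_int_asymptotic_form}--\eqref{TN_int_asymptotic_form}) produces simultaneously the factor $N\big(x^2/(1+\tau)^2+y^2/(1-\tau)^2\big)$ and the Heaviside cutoff. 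Your parenthetical ``saddle-point analysis of the integral representation of the Hermite polynomials'' is, after interchanging the sum with the integrals, precisely what the paper executes; its payoff is that $P_N$ and $T_N$ are handled uniformly and the cutoff comes out automatically, whereas your route buys a shorter treatment of the $P$-terms at the price of needing a separate argument for $T_N$.

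That separate argument is where the genuine gap sits. The claim $T_{N-4}=N\phi(w)\,P_{N-4}\,(1+o(1))$ inside the ellipse (and exponential smallness outside) is the whole analytic content of the corollary, and in your write-up it rests on an asserted concentration of the summand near $k_*=N\phi(w)$, supported only by the qualitative observation that the partial sums $A\sum_{k\le M}c_k$ jump as $M$ crosses $N\phi(w)$. That observation (together with positivity of the Christoffel--Darboux form of the summand) does localize most of the \emph{unweighted} mass, but it does not by itself control the $k$-weighted upper tail: for $k$ between $N\phi(w)(1+\varepsilon)$ and $N-O(1)$ the weight is of order $N$, so a tail mass that is merely $o(1)$ could still contribute at order $N\cdot o(1)$, comparable to the main term after the $1/N$ rescaling. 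One therefore needs a quantitative (in fact exponential) bound on that tail, i.e. genuine uniform Plancherel--Rotach-type asymptotics for $\HE_k$ at complex arguments of size $\sqrt{N}$ across the full range $k\sim N$, including the transition window and the endpoint $k=N-O(1)$ -- the very points you flag but do not carry out. Until that estimate is supplied (your proposed saddle-point route would supply it, and the paper's Lemma \ref{lem:TNintegralrep} plus Laplace's method is one concrete implementation), the step $\sum_k k\,c_k\approx k_*\sum_k c_k$ is not justified, and with it the term $-\tfrac1\pi\phi(w)\Theta[1-\phi(w)]$ in your final assembly.
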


\begin{rem}
The above result in the bulk of the eGinOE coincides with the same result in the bulk of the eGinUE \cite{JNNPZ,MC}, thus confirming the expected universality in this limit. In Figure \ref{fig:bulk_SNH}, we plot the mean conditional self-overlap of eigenvectors in the bulk, $\mathbb{E}_{\text{bulk}}(x,y)$, of the eGinOE and eGinUE in two different ways and compare to numerical simulations. Firstly, we plot $\mathbb{E}_{\text{bulk}}(x,y)$ as a function of $y$ along a fixed $x=0$, here we see a strong agreement for the eGinUE at all $y$. However, unsurprisingly the bulk result for the eGinOE breaks down as $y \to 0$, indicating the depletion effects not taken into account properly. Secondly, we plot $\mathbb{E}_{\text{bulk}}(x,y)$ as a function of $x$ along a fixed $y = 0.5(1 - \tau)$ (i.e. well inside the elliptic support), in this case we see a strong agreement between theory and simulations in both ensembles for all values of $x$. 
\end{rem}

\begin{figure}[h]
    \centering
    \includegraphics[scale = 0.3]{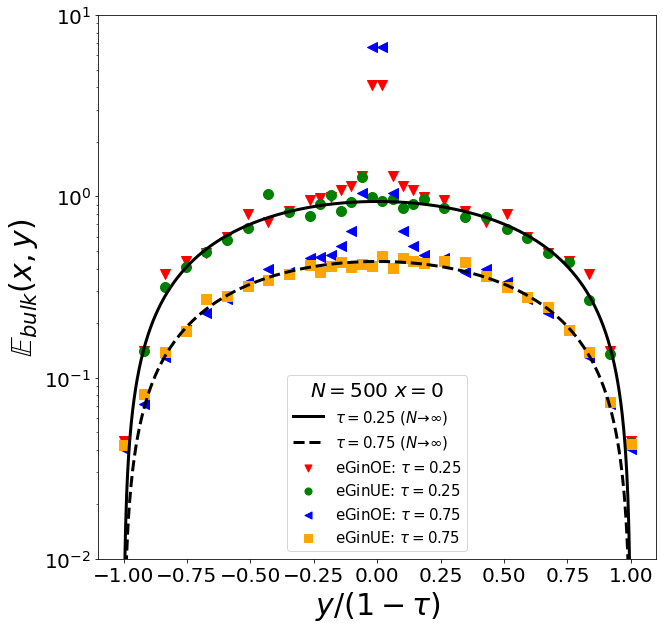}
    \hspace{2cm}
    \includegraphics[scale = 0.3]{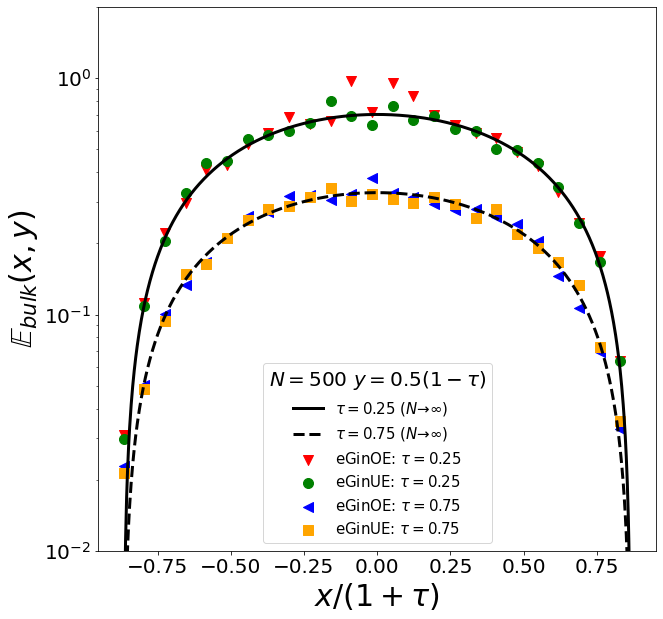}
    
    \caption{\small Mean conditional self-overlap of eigenvectors within the bulk, $\mathbb{E}_{\text{bulk}}(x,y)$, for the eGinUE and eGinOE at large $N$ for $\tau = 0.25$ (solid lines) and $\tau = 0.75$ (dashed lines). Left: $\mathbb{E}_{\text{bulk}}(x,y)$ at fixed $x=0$ plotted as a function of $y$. Right: $\mathbb{E}_{\text{bulk}}(x,y)$ at fixed $y=0.5(1 - \tau)$ and plotted as a function of $x$. In each of these plots we have compared our finite-N formulas for $\mathbb{E}_{\text{bulk}}(x,y)$ to numerical simulations (coloured markers) of eGinOE and eGinUE matrices of size $N=500$, obtained by averaging the $O(10^3)$ self-overlaps associated with complex eigenvalues closest to the chosen $x$ and $y$ from a set of $O(10^8)$ samples.}
    \label{fig:bulk_SNH}
\end{figure}

\begin{cor}\label{cor:eGinOEdepletionStrong}
    For a complex eigenvalue $z=\sqrt{N}\delta + i\xi$, such that $\delta, \xi \sim O(1)$ the limiting scaled mean self-overlap of left and right eigenvectors in the depletion region of the eGinOE, for a fixed $0\leq \tau < 1$, reads
    \begin{align}
        \mathcal{O}^{\textup{(eGinOE,c)}}_{\textup{depletion,SNH}}(\delta,\xi) &\equiv \lim_{N\rightarrow \infty} \frac{1}{N} \ \mathcal{O}_N^{\textup{(eGinOE,c)}}\left( \sqrt{N} \delta +i\xi \right) \label{Eq:eGinOEdepletionStrongRes} \\
        \nonumber&= \frac{1}{\pi} \Bigg[ 1 + \sqrt{\frac{\pi(1-\tau^2)}{2}} \ e^{\frac{2\xi^2}{1-\tau^2}} \, \frac{1}{2\vert \xi \vert} \ \textup{erfc}\left( \sqrt{\frac{2}{1-\tau^2}} \, \vert \xi \vert \right) \Bigg] \left( 1- \frac{\delta^2}{(1+\tau)^2} \right) \ \Theta \left[ 1- \frac{\delta^2}{(1+\tau)^2} \right] \ .
    \end{align}
\end{cor}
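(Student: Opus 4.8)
\emph{Proof strategy.} The plan is to start from the exact finite-$N$ formula \eqref{Eq:MainRes} of Theorem~\ref{thm:MainRes} and insert the depletion scaling $z=\sqrt{N}\,\delta+i\xi$ with $\delta,\xi$ fixed and $\xi\neq 0$. Abbreviate \eqref{Eq:MainRes} as
\be
   \mathcal{O}^{(\eGinOE,c)}_N(z)=A_N(z)\,\Big[\,P_{N-2}+(1-\tau^2)\big(P_{N-3}+(N-3)P_{N-4}-T_{N-4}\big)\Big],
\ee
where $A_N(z)$ is the product of the first two bracketed factors in \eqref{Eq:MainRes}. With $x=\sqrt N\delta$, $y=\xi$ one has $A_N(z)=e^{-N\delta^2/(1+\tau)}\,\hat A(\xi)$ with the $N$-independent
\be
   \hat A(\xi)=\tfrac1\pi\sqrt{\tfrac{1-\tau}{1+\tau}}\,e^{-\xi^2/(1-\tau)}\Big[1+\sqrt{\tfrac{\pi(1-\tau^2)}{2}}\,e^{2\xi^2/(1-\tau^2)}\tfrac{1}{2|\xi|}\,\erfc\!\big(\sqrt{\tfrac{2}{1-\tau^2}}\,|\xi|\big)\Big],
\ee
so the entire $N$-dependence of the prefactor sits in the Gaussian $e^{-N\delta^2/(1+\tau)}$, which must be compensated by exponential growth of the $P$'s and of $T_{N-4}$. (This is exactly where the depletion regime differs from the bulk of Corollary~\ref{cor:eGinOEbulkStrong}: there $y=\sqrt N\,\mathrm{Im}\,w$ is large and the bracketed \erfc\ correction collapses to $1$, whereas here it survives.)

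The first ingredient is the leading behaviour of $P_{N-j}$, $j\in\{2,3,4\}$, in this scaling. The economical route is to read it off the known density: solving \eqref{eGinOEdensityComplex} for $P_{N-2}$ and multiplying by $A_N(z)$ shows that $A_N(z)\,P_{N-2}$ equals $\rho^{(\eGinOE,c)}_N(z)$ times an explicit $N$-independent factor, and since $\rho^{(\eGinOE,c)}_N(\sqrt N\delta+i\xi)\to\rho^{(\eGinOE,c)}_{\mathrm{depletion,SNH}}(\delta,\xi)$ by \eqref{Eq:density_depletion_strip_strong} (whose derivation is recorded at the end of Section~\ref{subsubsec:ProofAsymptdepletion}), the $\xi$-dependent exponentials and the \erfc\ cancel and one obtains
\be
   A_N(z)\,P_{N-2}\ \xrightarrow[N\to\infty]{}\ \frac{1}{\pi(1-\tau^2)}\Big[1+\sqrt{\tfrac{\pi(1-\tau^2)}{2}}\,e^{2\xi^2/(1-\tau^2)}\tfrac{1}{2|\xi|}\,\erfc\!\big(\sqrt{\tfrac{2}{1-\tau^2}}\,|\xi|\big)\Big]\,\Theta\!\Big[1-\tfrac{\delta^2}{(1+\tau)^2}\Big].
\ee
To replace $P_{N-2}$ by $P_{N-3},P_{N-4}$, note that $P_{N-2}-P_{N-4}$ consists only of the two terms $k=N-3,N-2$ of the defining sum \eqref{Eq:PNRes}; using the integral representation \eqref{Eq:HermitePoly} turns the sum over $k$ into a Gaussian-weighted truncated exponential of the form $\int\!\!\int e^{-s^2/2-t^2/2}(\cdots)\,e_{N-2}(w)\,ds\,dt$ with $w=(\bar z+is\sqrt\tau)(z+it\sqrt\tau)$ and $e_M(w)=\sum_{k=0}^M w^k/k!$, and a saddle-point evaluation places the saddle at $w_\ast=N\delta^2/(1+\tau)^2+O(1)$. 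Since $w_\ast<N$ strictly whenever $\delta^2<(1+\tau)^2$, the two omitted terms are exponentially small compared with $e^{w_\ast}$, so $P_{N-2},P_{N-3},P_{N-4}$ share a common leading asymptotics; the same saddle yields the growth $e^{N\delta^2/(1+\tau)}$ matching $A_N$, and the condition $w_\ast<N$ is precisely what generates the Heaviside factor. This step is of the same nature as the analysis behind \eqref{Eq:density_depletion_strip_strong}.

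The remaining ingredient is $T_{N-4}$. Writing $k=(N-4)-(N-4-k)$ in \eqref{Eq:TNRes} gives $T_{N-4}=(N-4)P_{N-4}-\tfrac{1}{\bar z-z}\sum_{k=0}^{N-4}\tfrac{(N-4-k)\tau^{k+\frac12}}{k!}[\cdots]_k$ (with $[\cdots]_k$ the Hermite bracket of \eqref{Eq:PNRes}); equivalently, via $\sum_k\tfrac{k\,w^k}{k!}=w\,e_{N-5}(w)$ one simply inserts an extra factor $w$ in the integrand representing $P_{N-4}$, and at the saddle this factor equals $w_\ast=N\delta^2/(1+\tau)^2+O(1)$. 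Either way $T_{N-4}=\dfrac{N\delta^2}{(1+\tau)^2}\,P_{N-4}\,(1+o(1))$, so the inner bracket of \eqref{Eq:MainRes} is dominated by $(1-\tau^2)\big[(N-3)-\tfrac{N\delta^2}{(1+\tau)^2}\big]P_{N-4}=(1-\tau^2)\,N\,(1-\tfrac{\delta^2}{(1+\tau)^2})\,P_{N-4}\,(1+o(1))$, the lone $P_{N-2}$ and the $P_{N-3}$ being smaller by a factor $N$. Dividing by $N$ and substituting the limit above with $P_{N-4}$ in place of $P_{N-2}$ (legitimate by the previous step), the factors $(1-\tau^2)$ and $(1-\tau^2)^{-1}$ cancel and one arrives at exactly \eqref{Eq:eGinOEdepletionStrongRes}.

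The main obstacle is the saddle-point step: evaluating the Gaussian-weighted truncated exponential uniformly in $(\delta,\xi)$, controlling the transition across the Szeg\H{o}-type curve $w_\ast=N$ that produces the Heaviside factor, and -- for $T_{N-4}$ -- checking that inserting the extra factor $w$ shifts the saddle only at subleading order so that the coefficient $\delta^2/(1+\tau)^2$ is exact. All of this can be reduced to the asymptotic analysis already performed for \eqref{Eq:density_depletion_strip_strong}; the only additional routine checks are that the exponentially small errors in $P_{N-j}$ survive multiplication by $N$ (immediate in the open droplet) and that the edge $\delta^2=(1+\tau)^2$, where the $O(N)$ term degenerates, is excluded from the statement.
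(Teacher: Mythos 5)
Your proposal is correct and takes essentially the same route as the paper: insert the depletion scaling into the finite-$N$ formula of Theorem \ref{thm:MainRes}, establish via a Laplace/saddle-point analysis of the Hermite-sum (truncated-exponential) representations that $P_{N-2},P_{N-3},P_{N-4}$ and $T_{N-4}$ share a common leading asymptotic with the incomplete-gamma factor turning into $\Theta\left[1-\delta^2/(1+\tau)^2\right]$, and note that the bracket is dominated by $(N-3)P_{N-4}-T_{N-4}\approx N\left(1-\delta^2/(1+\tau)^2\right)P_{N-4}$, after which the prefactors collapse to the stated limit. Your two deviations are harmless rather than substantive: reading the limit of the prefactor times $P_{N-2}$ off the depletion density \eqref{Eq:density_depletion_strip_strong} is not logically independent, since the paper derives that density from the very same $P_N$ asymptotics, and writing $T_{N-4}\approx \frac{N\delta^2}{(1+\tau)^2}P_{N-4}$ drops an additive term $\frac{\xi^2}{(1-\tau)^2}P_{N-4}$ (present in the paper's asymptotics, and the whole story at $\delta=0$) which is subleading after division by $N$ and so does not affect the result.
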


\noindent
In Figure \ref{fig:depletion_SNH}, we plot the mean conditional self-overlap of eigenvectors associated with complex eigenvalues within the depletion regime of the eGinOE for a range of $\tau$. Here it can be seen that as $\tau$ increases (for a fixed $\xi$) the conditional self-overlap decreases. This reflects the fact that the ellipse is thinner for a larger $\tau$ and so the fixed value of $\xi$ will lie closer to the spectral bulk.

\begin{figure}[h!]
    \centering
    \includegraphics[scale = 0.3]{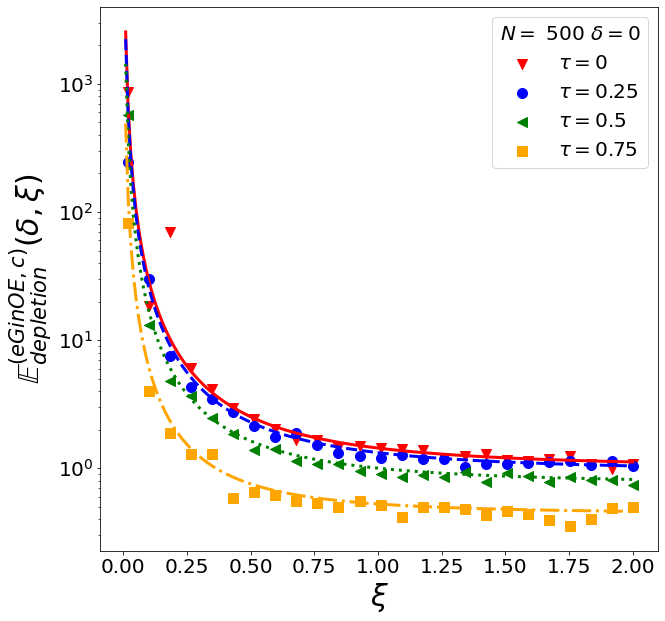}
    \hspace{2cm}
    \includegraphics[scale = 0.3]{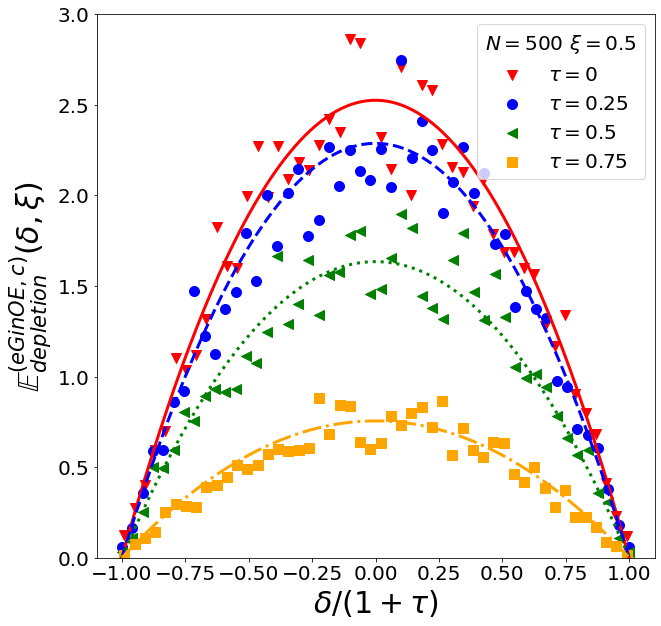}
    
    \caption{\small Mean conditional self-overlap, $\mathbb{E}^{\text{(eGinOE,c)}}_{\text{depletion}}(\delta,\xi)$, within the depletion regime of the eGinOE at large $N$ for a range of $\tau$. Left: $\mathbb{E}^{\text{(eGinOE,c)}}_{\text{depletion}}(\delta,\xi)$ measured as a function of $\xi$ at fixed $\delta=0$. Right: $\mathbb{E}^{\text{(eGinOE,c)}}_{\text{depletion}}(\delta,\xi)$ measured as a function of $\delta$ at fixed $\xi=0.5$. In both plots, theoretical predictions (lines) are compared to numerical simulations of $N=500$ eGinOE matrices (coloured markers). In simulations, for a required set of $\delta$ and $\xi$, mean self-overlap values are obtained by averaging self-overlaps associated with complex eigenvalues within a tolerance of $\pm 1/\sqrt{N}$ from a set containing $O(10^8)$ samples.}
    \label{fig:depletion_SNH}
\end{figure}

\begin{rem}
    As expected, setting $\tau=0$ in any of the above formulas recovers the GinOE or GinUE case considered in \cite{WCF}.
\end{rem}

%%%%%%%%%%%%%%%%%%%%%%%%%%%%%%%%%%%%%%%%%%%%%%%%%%%%%%%%%%%%%%%%%%%%%%

\subsubsection{Weak non-Hermiticity Regime}
\label{subsubsec:WeakRegime}

\noindent
The weak non-Hermiticity regime is reached when as $N \to \infty$ the parameter $\tau$ approaches $1$, such that $\tau = 1-(\pi\alpha)^2/(2N)$. This describes the transition from non-Hermitian to Hermitian matrices, the GUE or GOE. Starting from our finite $N$ results for the mean self-overlap, one may derive the following expressions in the \emph{weak bulk} limit: 
\begin{cor}\label{cor:eGinUEbulkWeak}
    For a complex eigenvalue $z= \sqrt{N} X + (i \pi y)/\sqrt{N}$, the limiting scaled mean self-overlap of left and right eigenvectors in the weak bulk of the eGinUE, where $\tau = 1- (\pi \alpha)^2/(2N)$, reads
    \begin{equation} \label{Eq:eGinUEbulkWeakRes}
        \begin{split}
            \mathcal{O}_{\textup{bulk,WNH}}^{\textup{(eGinUE,c)}}(X,y) \equiv& \lim_{N\rightarrow \infty} \frac{\pi^2}{N} \ \mathcal{O}_N^{\textup{(eGinUE,c)}}\left( z =\sqrt{N} X + \frac{i \pi y}{\sqrt{N}} \right) \\
            =& \frac{\sqrt{2}}{\pi^{3/2}} \, \frac{1}{|\alpha|} \, e^{-\frac{2y^2}{\alpha^2}}  \, \int_{0}^{\pi\sqrt{1 - \frac{X^2}{4}}} du \ e^{-\frac{\alpha^2u^2}{2}} \cosh(2yu) \left[ 1 + \alpha^2 \pi^2 \left(1 - \frac{X^2}{4} - \frac{u^2}{\pi^2} \right) \right] \ .
        \end{split}
    \end{equation}
\end{cor}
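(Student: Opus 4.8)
The starting point is the finite-$N$ identity of Theorem~\ref{thm:MainResCOMPLEX}. Set $G\equiv G(z;\tau)=\frac{1}{\pi\sqrt{1-\tau^{2}}}\,\exp\!\big[-\tfrac{1}{1-\tau^{2}}(|z|^{2}-\tau\,\RE z^{2})\big]$ and $a_{n}\equiv\frac{\tau^{n}}{n!}\HE_{n}(z/\sqrt{\tau})\,\HE_{n}(\bar z/\sqrt{\tau})$, so that $\rho^{(\eGinUE,c)}_{M}(z)=G\sum_{n=0}^{M-1}a_{n}$ and $R_{M}=G\sum_{k=0}^{M}k\,a_{k}$. Since $(N-2)\rho^{(\eGinUE,c)}_{N-2}(z)-R_{N-3}=G\sum_{n=0}^{N-3}(N-2-n)\,a_{n}$, Theorem~\ref{thm:MainResCOMPLEX} collapses to
\[
\mathcal{O}^{(\eGinUE,c)}_{N}(z)=G\Big[\sum_{n=0}^{N-1}a_{n}+(1-\tau^{2})\sum_{n=0}^{N-2}a_{n}+(1-\tau^{2})\sum_{n=0}^{N-3}(N-2-n)\,a_{n}\Big].
\]
First I would insert the integral representation \eqref{Eq:HermitePoly} for both Hermite factors, exchange it with the $n$-sums, and use $\sum_{n=0}^{M}v^{n}/n!=e_{M}(v)$ together with $\sum_{n=0}^{M}n\,v^{n}/n!=v\,e_{M-1}(v)$, where $e_{M}$ is the truncated exponential of degree $M$. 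Up to boundary terms that will turn out negligible in the bulk, each of the three contributions then takes the form
\[
\frac{G}{2\pi}\int_{\mathbb{R}^{2}}e^{-t^{2}/2-s^{2}/2}\,P(v)\,e_{M}(v)\,dt\,ds,\qquad v=|z|^{2}+i\sqrt{\tau}\,(zs+\bar z t)-\tau\,ts,
\]
with $M$ equal to $N$ up to a bounded shift, $P\equiv1$ for the first two and $P(v)=N-2-v$ for the third.

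\noindent\textbf{Asymptotic evaluation.} Next I would substitute the weak-non-Hermiticity parametrisation $z=\sqrt{N}X+i\pi y/\sqrt{N}$, $\tau=1-(\pi\alpha)^{2}/(2N)$, and rescale $t=\sqrt{N}\,\hat t$, $s=\sqrt{N}\,\hat s$. Absorbing $G$ into the exponent and replacing $e_{M}(v)$ by $e^{v}$ on its support, the $O(N)$ part of the total exponent is $N\big[\tfrac{X^{2}}{2}-\tfrac{\hat t^{2}}{2}-\tfrac{\hat s^{2}}{2}+iX(\hat s+\hat t)-\hat t\hat s\big]$. This is stationary and vanishes identically along the line $\hat t+\hat s=iX$; writing $\hat t=\tfrac{iX}{2}+r$, $\hat s=\tfrac{iX}{2}-r$ with $r$ the coordinate along the line, the orthogonal direction $\hat t+\hat s=iX+2\epsilon$ carries $e^{-2N\epsilon^{2}}$, whose $\epsilon$-integration supplies the $N^{-1/2}$ factor that makes the whole expression of order $N$ (hence of order one after multiplication by $\pi^{2}/N$). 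Along $r$ there is no quadratic suppression at leading order; expanding to the next ($O(1)$) order, the $1-\tau=O(1/N)$ corrections produce the weight $e^{-\alpha^{2}\pi^{2}r^{2}/2}$, the imaginary part of $z$ produces $e^{2\pi y r}$ which symmetrises under $r\mapsto-r$ to $\cosh(2\pi y r)$, and the remaining $r$-independent $O(1)$ corrections cancel up to the overall factor $e^{-2y^{2}/\alpha^{2}}$. On the saddle line $v/N\to X^{2}/4+r^{2}$, so the truncation in $e_{M}(v)$ confines $r$ to $|r|<\sqrt{1-X^{2}/4}$ (with an $O(N^{-1/2})$-wide soft edge there), and $P(v)=N-2-v\to N(1-X^{2}/4-r^{2})$. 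Putting $u=\pi r$ converts all three contributions into integrals over $u\in(0,\pi\sqrt{1-X^{2}/4})$ against the common weight $e^{-\alpha^{2}u^{2}/2}\cosh(2yu)$.

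\noindent\textbf{Assembly.} The first contribution yields $\tfrac{\pi^{2}}{N}\rho^{(\eGinUE,c)}_{N}(z)\to\rho^{(\eGinUE,c)}_{\textup{bulk,WNH}}(X,y)$ of Eq.~\eqref{Eq:eGinUEdensityweakbulk}, i.e.\ the ``$1$'' inside the bracket of Eq.~\eqref{Eq:eGinUEbulkWeakRes}. In the second, the prefactor $(1-\tau^{2})=O(1/N)$ forces $\tfrac{\pi^{2}}{N}(1-\tau^{2})\rho^{(\eGinUE,c)}_{N-1}(z)\to0$. In the third, $(1-\tau^{2})\big(N-2-v\big)\to(\pi\alpha)^{2}\big(1-X^{2}/4-u^{2}/\pi^{2}\big)=\alpha^{2}\pi^{2}\big(1-X^{2}/4-u^{2}/\pi^{2}\big)$, a factor bounded on the integration region and therefore entering inside the integral; adding the two surviving contributions reproduces exactly the bracket $1+\alpha^{2}\pi^{2}(1-X^{2}/4-u^{2}/\pi^{2})$ and hence Eq.~\eqref{Eq:eGinUEbulkWeakRes}.

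\noindent\textbf{Main obstacle.} The substantive difficulty is the rigorous treatment of the two-dimensional integral: justifying the deformation of the $(t,s)$-contour onto the degenerate (line) saddle, obtaining uniform control of the truncated exponential $e_{M}(v)$ across its transition layer $|v|\approx N$ — this layer being precisely what produces the soft cut-off at $u=\pi\sqrt{1-X^{2}/4}$ — and establishing domination bounds strong enough to interchange the limit with the $r$-integral. In parallel one must track all $N$-dependent prefactors carefully enough to confirm that $\rho^{(\eGinUE,c)}_{N}$ and $(N-2)\rho^{(\eGinUE,c)}_{N-2}-R_{N-3}$ enter at the same order $N/\pi^{2}$ while $(1-\tau^{2})\rho^{(\eGinUE,c)}_{N-1}$ is genuinely of lower order, and to treat the degenerate cases $X\to0$, $X^{2}\to4$ and $y\to0$ separately — the first recovering the known origin formulas of \cite{FKS98,BF1}.
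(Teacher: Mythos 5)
Your plan is correct and follows essentially the same route as the paper: the same finite-$N$ decomposition from Theorem~\ref{thm:MainResCOMPLEX}, Gaussian (Hermite) integral representations whose truncated exponential plays exactly the role of the paper's $\Theta_N^{(M)}$ in Lemmas~\ref{lem:rhoNintegralrep} and \ref{lem:RNpintegralrep}, a Laplace evaluation in the strongly damped direction (your $\epsilon$, the paper's $q$ with saddle $q^\star=-iX/\sqrt{2}$), an order-one integral over the soft direction cut off at $u=\pi\sqrt{1-X^2/4}$ by the truncation, and the same bookkeeping showing that $(1-\tau^2)\rho_{N-1}$ is subleading while $(1-\tau^2)\big[(N-2)\rho_{N-2}-R_{N-3}\big]$ supplies the factor $\alpha^2\pi^2\left(1-\tfrac{X^2}{4}-\tfrac{u^2}{\pi^2}\right)$. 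The only cosmetic difference is that the paper pre-rotates to the variables $p,q$ inside its integral-representation lemmas, so your degenerate line saddle becomes a standard one-dimensional Laplace step plus the $N\to\infty$ Heaviside limit of $\Theta_N^{(M)}$.
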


\begin{cor}\label{cor:eGinOEbulkWeak}
    For a complex eigenvalue $z= \sqrt{N} X + (i \pi y)/\sqrt{N}$, with $y \neq 0$, the limiting scaled mean self-overlap of left and right eigenvectors in the weak bulk of the eGinOE, where $\tau = 1- (\pi \alpha)^2/(2N)$, reads 
    \begin{align}
        \mathcal{O}_{\textup{bulk, WNH}}^{\textup{(eGinOE,c)}}&(X,y) \equiv \lim_{N\rightarrow \infty} \frac{\pi^2}{N} \ \mathcal{O}_N^{\textup{(eGinOE,c)}}\left( z =\sqrt{N} X + \frac{i \pi y}{\sqrt{N}} \right) 
        = \frac{1}{\sqrt{2} \, \pi^{3/2}} \, \frac{|\alpha|}{y} \, e^{-\frac{2 y^2}{\alpha^2}} \label{Eq:eGinOEbulkWeakRes}   \\
        &\left[ 1 + \sqrt{\frac{\pi}{2}} \, \frac{|\alpha|}{2\vert y \vert} \, e^{\frac{2 y^2}{\alpha^2}} \, \textup{erfc}\left( \frac{\sqrt{2}\vert y \vert}{|\alpha|}  \right) \, \right] \int_{0}^{\pi \sqrt{1 - \frac{X^2}{4}} } du \, e^{-\frac{ \alpha^2 u^2}{2}} \, u \, \sinh{\left( 2 yu \right)} \left[ 1 + \alpha^2 \pi^2 \left(1 - \frac{X^2}{4} - \frac{u^2}{\pi^2} \right) \right] \ .  \nonumber
    \end{align}
\end{cor}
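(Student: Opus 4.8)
The plan is to take the finite-$N$ formula of Theorem~\ref{thm:MainRes} as the starting point and carry out a uniform asymptotic analysis in the weak non-Hermiticity scaling $\tau = 1-(\pi\alpha)^2/(2N)$, $z = \sqrt{N}X + i\pi y/\sqrt{N}$, keeping track of the global factor $\pi^2/N$. The expression in Theorem~\ref{thm:MainRes} is an elementary prefactor in $(x,y,\tau)$ times the bracket $\big[\,P_{N-2} + (1-\tau^2)(P_{N-3} + (N-3)P_{N-4} - T_{N-4})\,\big]$, so the first step is purely algebraic. Using $1-\tau = (\pi\alpha)^2/(2N)$, $1+\tau \to 2$, $1-\tau^2 \sim (\pi\alpha)^2/N$, the identity $\tfrac{2}{1-\tau^2}-\tfrac{1}{1-\tau} = \tfrac{1}{1+\tau}$, and the fact that $\IM(z) = \pi y/\sqrt N$ in this scaling, a short computation shows that the prefactor reduces, up to lower-order corrections, to $\tfrac{|\alpha|}{2\sqrt N}\,e^{-NX^2/(1+\tau)}\,e^{-2y^2/\alpha^2}\big[1 + \sqrt{\pi/2}\,\tfrac{|\alpha|}{2|y|}\,e^{2y^2/\alpha^2}\erfc(\sqrt2|y|/|\alpha|)\big]$; the $\erfc$-bracket here is already the one appearing in Eq.~\eqref{Eq:eGinOEbulkWeakRes}, while the diverging factor $e^{-NX^2/(1+\tau)}$ must be absorbed by the growth of $P_{N-2}$ and its relatives, exactly as it is for the density in Eq.~\eqref{eGinOEdensityComplex}.

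The substance of the proof is the large-$N$ analysis of the Hermite sums in Eqs.~\eqref{Eq:PNRes} and \eqref{Eq:TNRes}. Writing $z/\sqrt\tau = \sqrt{N}X + i\pi y/\sqrt N + o(1)$ and setting $k = \beta N$, one needs Plancherel--Rotach type asymptotics of $\HE_{k}$ and $\HE_{k+1}$ at this argument, uniformly in $\beta$ over the oscillatory window $\beta \in (X^2/4,\,1]$ (the edge of $\HE_k(\sqrt N X)$ lying at $\beta = X^2/4$). Since $\bar z - z = -2\pi i y/\sqrt N$ is of order $N^{-1/2}$, the antisymmetrized combination $\HE_{k+1}(\bar z/\sqrt\tau)\HE_{k}(z/\sqrt\tau) - \HE_{k+1}(z/\sqrt\tau)\HE_{k}(\bar z/\sqrt\tau)$ must be expanded to first order in $\IM(z)$, which yields $\sinh(2yu)$ with $u$ the natural local variable; collecting the oscillatory amplitude together with $\tau^{k}/k!$ (note $\tau^{k}\approx e^{-\pi^2\alpha^2\beta/2}$) and the prefactor $e^{-x^2/(1+\tau)}$ produces the weight $e^{-\alpha^2 u^2/2}$, and the change of index $\beta = X^2/4 + u^2/\pi^2$, with Jacobian $d\beta = 2u\,du/\pi^2$, turns $\tfrac1N\sum_k$ into $\tfrac{2}{\pi^2}\int u\,du$ and maps $\beta\in(X^2/4,1]$ onto $u \in (0,\pi\sqrt{1-X^2/4}]$ --- this is the origin of the upper endpoint. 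Carried out on $P_{N-2}$, and identically on $P_{N-3}$ and $P_{N-4}$ which share the same limit, this gives $\tfrac1\pi\erfc(\sqrt2|y|/|\alpha|)\int_0^{\pi\sqrt{1-X^2/4}}u\,e^{-\alpha^2 u^2/2}\sinh(2yu)\,du$, which in particular reproduces Eq.~\eqref{Eq:eGinOEdensityweakbulk} as a byproduct.

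The correction term uses the same input. Regrouping the sums gives $P_{N-3} + (N-3)P_{N-4} - T_{N-4} = \tfrac{1}{\bar z - z}\sum_{k=0}^{N-3}(N-2-k)\tfrac{\tau^{k+1/2}}{k!}\big[\HE_{k+1}(\bar z/\sqrt\tau)\HE_k(z/\sqrt\tau) - \HE_{k+1}(z/\sqrt\tau)\HE_k(\bar z/\sqrt\tau)\big]$, i.e.\ the same summand as $P_{N-2}$ but carrying the extra weight $N-2-k \sim N(1-\beta) = N(1-X^2/4-u^2/\pi^2)$. Multiplying by $1-\tau^2 \sim \pi^2\alpha^2/N$ cancels the surplus power of $N$ and inserts the bracket $\big[1 + \alpha^2\pi^2(1-X^2/4-u^2/\pi^2)\big]$ inside the $u$-integral, which is exactly Eq.~\eqref{Eq:eGinOEbulkWeakRes}; note that this weight vanishes at $\beta = 1$, consistently with the correction integrand vanishing at the upper endpoint. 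The companion Corollary~\ref{cor:eGinUEbulkWeak} follows by the same route, with the Hermite kernel $\sum\tfrac{\tau^k}{k!}\HE_k(\bar z/\sqrt\tau)\HE_k(z/\sqrt\tau)$ and its $k$-weighted partner $R_{N-3}$ in place of $P$ and $T$, $\cosh$ replacing $\sinh$, and no $\erfc$-factor; near $X=0$ both results reduce to the Fyodorov--Khoruzhenko--Sommers and Efetov expressions quoted after Eqs.~\eqref{Eq:eGinUEdensityweakbulk}--\eqref{Eq:eGinOEdensityweakbulk}.

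I expect the main obstacle to be establishing the Plancherel--Rotach asymptotics with sufficient uniformity: they must hold uniformly in $\beta$ down to the edge $\beta \downarrow X^2/4$, where the trigonometric approximation degenerates into Airy-type behaviour, and up to $\beta \uparrow 1$, with error estimates uniform in $X$ on compact subsets of $(-2,2)$ and in $y$, strong enough that the remainder after summing over $k$ --- including the exponentially small non-oscillatory tail $\beta < X^2/4$ --- is genuinely negligible relative to the retained $O(N)$-term. One must also justify interchanging the $N\to\infty$ limit with the passage from Riemann sum to integral and with the first-order expansion in $\IM(z)$; these uniformity and dominated-convergence arguments, rather than any isolated computation, are where the real work lies.
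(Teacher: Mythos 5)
Your strategy is sound and would deliver the stated result, but the core asymptotics are done by a genuinely different route than the paper's. You work directly on the Hermite sums \eqref{Eq:PNRes}, \eqref{Eq:TNRes}: set $k=\beta N$, invoke uniform Plancherel--Rotach asymptotics of $\HE_k$ over the oscillatory window $\beta\in(X^2/4,1]$, pass from the $k$-sum to a $\beta$-integral and substitute $\beta = X^2/4+u^2/\pi^2$. The paper never touches pointwise polynomial asymptotics: it first converts $P_N$ and $T_N$ into exact double integrals (Lemmas \ref{lem:PNintegralrep} and \ref{lem:TNintegralrep}, built from the Gaussian integral representation \eqref{eq:He_k_int}), in which the $k$-sum is absorbed into the incomplete-Gamma ratio $\Theta_N^{(M)}$; Laplace's method on the $q$-integral around $q^\star=-iX/\sqrt{2}$ (Eq. \eqref{eq:int_q_saddle}) and the limit $\Theta_N^{(M)}\to\Theta$ then produce the cutoff $u\le\pi\sqrt{1-X^2/4}$ and the weight $e^{-\alpha^2u^2/2}$ with no uniformity analysis of individual $\HE_k$, no Airy-crossover issue at $\beta=X^2/4$, and no separate sum-to-integral justification --- i.e.\ the representation sidesteps exactly the difficulties you correctly identify as the hard part of your route. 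Your prefactor computation coincides with the paper's, and your exact regrouping $P_{N-3}+(N-3)P_{N-4}-T_{N-4}=\frac{1}{\bar z-z}\sum_{k=0}^{N-3}(N-2-k)\frac{\tau^{k+1/2}}{k!}[\cdots]$ is a nice simplification the paper does not use (it instead inserts the common asymptotic forms \eqref{PN_int_asymptotic_form}, \eqref{TN_int_asymptotic_form} term by term); with $(1-\tau^2)(N-2-k)\to\pi^2\alpha^2(1-\beta)$ it yields the bracket $[1+\alpha^2\pi^2(1-X^2/4-u^2/\pi^2)]$ just as in the paper. One small imprecision: the $\sinh(2yu)$ does not arise from a literal ``first order expansion in $\IM(z)$'' --- the $O(N^{-1/2})$ imaginary shift couples to the $O(\sqrt N)$ local oscillation frequency to give the full finite factors $e^{\pm 2yu}$, whose antisymmetrized combination divided by $\bar z-z$ produces $\sinh(2yu)/\IM(z)$ --- but your stated outcome is the correct one. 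If you wanted to complete your version rigorously, the uniform Plancherel--Rotach estimates (including the turning point) and the dominated-convergence step are indeed where the work lies; the paper's integral-representation-plus-Laplace approach is the cheaper way to the same limit.
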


\noindent
We plot the mean conditional self-overlap for both the eGinOE and eGinUE at WNH in Figure \ref{fig:WNH}. This is done in two different ways: as a function of $y$ at fixed $X=0$ and as a function of $X$ at fixed $y=1$. In both cases, it can be seen that the theory is accurate for all $X$ and $y$ and that the eGinOE generally has a higher conditional self-overlap. This implies that our asymptotic analysis accurately describes the mean self-overlap in the weak bulk of the eGinOE and eGinUE, as well as in the depletion regime of the eGinOE. Thus, the weak bulk and depletion regime of the eGinOE at WNH are described by the same scaling of $y$. 

\begin{figure}[h!]
    \centering
    \includegraphics[scale = 0.3]{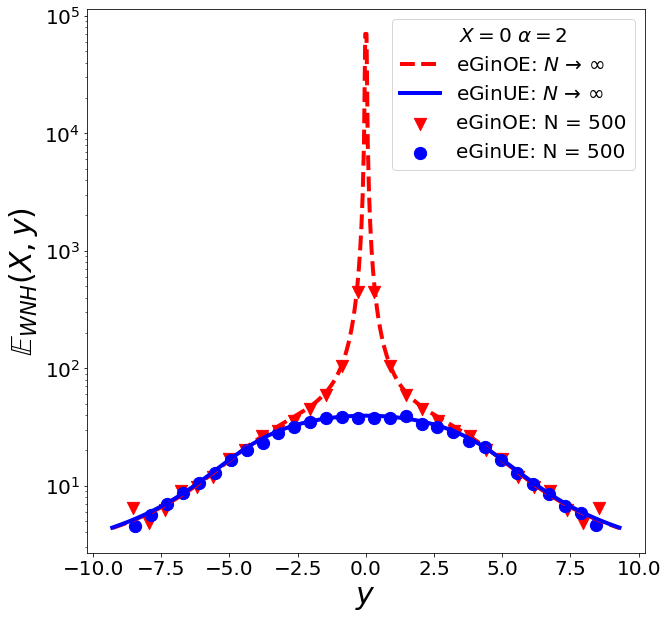}
    \hspace{2cm}
    \includegraphics[scale = 0.3]{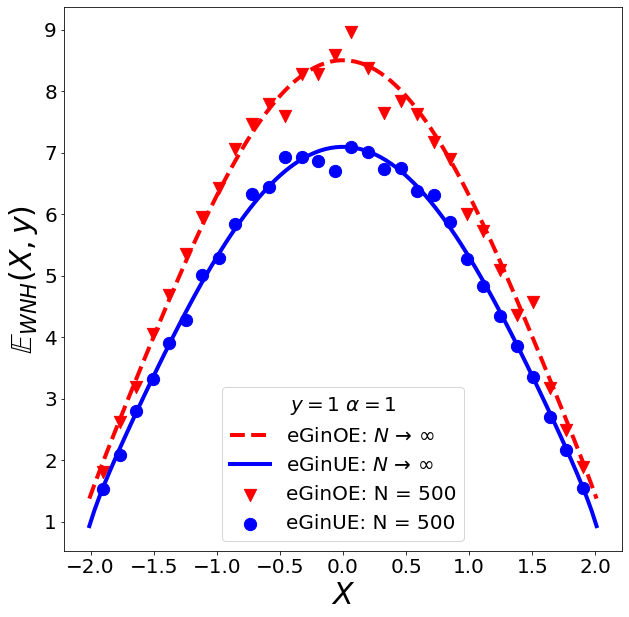}
    
    \caption{\small Mean conditional self-overlap of eigenvectors at WNH, $\mathbb{E}_{\text{WNH}}(X,y)$, in the eGinOE (red) and eGinUE (blue). In both plots our theory (lines) is compared to numerical simulations of matrices of size $N=500$ (markers). Left: $\mathbb{E}_{\text{WNH}}(X,y)$ measured as a function of $y$ for fixed $X=0$ with $\alpha =2$. Right: $\mathbb{E}_{\text{WNH}}(X,y)$ measured as a function of $X$ for fixed $y=1$ with $\alpha =1$. In simulations, for a given set of $X$ and $y$, mean values are calculated by averaging all self-overlaps associated with complex eigenvalues within $\pm 1/\sqrt{N}$ of the desired $X$ and $y$ from a sample containing $O(10^8)$ self-overlaps. }
    \label{fig:WNH}
\end{figure}

\subsection{Discussion of open problems} \label{subsec:OpenProbs}

In a subsequent paper, we will consider the mean self-overlap at the droplet edge for both strong and weak non-Hermiticity. We will also study the density of edge eigenvalues in the eGinOE, as (to the best of our knowledge) the leading order asymptotic behaviour has not yet been established in this ensemble. It is natural to expect that the result should match the leading order eGinUE result derived by Lee and Riser \cite{LR}, who also give fine asymptotic behaviour of the density in terms of the curvature of the ellipse. The leading order density term, as $N \rightarrow \infty$, is universal as described by Tao and Vu \cite{TaoVu}, and should also be valid in the eGinOE at the edge. This could in principle be derived using Forrester and Nagao's finite $N$ result. In the present paper we will not discuss the edge statistics of eigenvalue density or self-overlaps further.

%\noindent
Alternatively, one could discuss similar objects in the third standard Ginibre ensemble, the \emph{quaternionic Ginibre ensemble}, to connect with the existing results in \cite{AFK} and then extend to its elliptic version. For that purpose it might also be interesting to employ techniques similar to the works by Noda \cite{Noda23a,Noda23b} and Akemann et al \cite{ATTZ}.

%\noindent
Finally,  the ultimate goal should be to describe the entire distribution of diagonal overlaps at finite $N$ and in various asymptotic regimes. The distribution of self-overlaps associated with complex eigenvalues has been considered in the GinUE \cite{BD,FyodorovCMP} and has been studied for real eigenvalues in the eGinOE \cite{FT}, but is not yet available for complex eigenvalues in both eGinOE and eGinUE. Considering the statistics of off-diagonal overlaps, i.e. Eq. \eqref{eq:def_overlap} with $n \neq m$, beyond results of \cite{BD} is another important avenue.

\noindent

%%%%%%%%%%%%%%%%%%%%%%%%%%%%%%%%%%%%%%%%%%%%%%%%%%%%%%%%%%%%%%%%%%%%%%
%%%%%%%%%%%%%%%%%%%%%%%%%%%%%%%%%%%%%%%%%%%%%%%%%%%%%%%%%%%%%%%%%%%%%%
%%%%%%%%%%%%%%%%%%%%%%%%%%%%%%%%%%%%%%%%%%%%%%%%%%%%%%%%%%%%%%%%%%%%%%
%%%%%%%%%%%%%%%%%%%%%%%%%%%%%%%%%%%%%%%%%%%%%%%%%%%%%%%%%%%%%%%%%%%%%%
%%%%%%%%%%%%%%%%%%%%%%%%%%%%%%%%%%%%%%%%%%%%%%%%%%%%%%%%%%%%%%%%%%%%%%
%%%%%%%%%%%%%%%%%%%%%%%%%%%%%%%%%%%%%%%%%%%%%%%%%%%%%%%%%%%%%%%%%%%%%%

\section{Derivation of Main Results}\label{sec:MainResDerivation}

\noindent
To prove our main results, we review the Schur decomposition of complex and real random matrices as well as their effects on $\mathcal{O}(z)$ in Sections \ref{subsec:inSchurCOMPLEX} and \ref{subsec:inSchurREAL} respectively. Then, through the use of techniques from \cite{FyodorovCMP,FT,WCF}, we prove Theorems \ref{thm:MainResCOMPLEX} and \ref{thm:MainRes} in Sections \ref{subsec:ProoffiniteNCOMPLEX} and \ref{subsec:ProoffiniteN} respectively.

%%%%%%%%%%%%%%%%%%%%%%%%%%%%%%%%%%%%%%%%%%%%%%%%%%%%%%%%%%%%%%%%%%%%%%
%%%%%%%%%%%%%%%%%%%%%%%%%%%%%%%%%%%%%%%%%%%%%%%%%%%%%%%%%%%%%%%%%%%%%%

\subsection{Incomplete complex Schur decomposition}
\label{subsec:inSchurCOMPLEX}

\noindent
In the eGinUE, the eigenvalues are always complex and are not necessarily in conjugate pairs. Therefore, we make use of the \emph{incomplete complex Schur decomposition} (ICSD), see \cite{EKS,Edelman,FK}. Let $z=x+iy$ be a complex eigenvalue of the $N\times N$ complex, non-Hermitian matrix $X$, with associated right and left eigenvectors denoted by $\mathbf{x}_R$ and $\mathbf{x}_L^\dagger$ respectively. Note that to have a nontrivial self-overlap we must assume $N\geq 2$. Utilising the results derived in \cite{FK} we find the ICSD of $X$ with respect to $z$ as
\be\label{incomplSchurCOMPLEX}
    X = R \widetilde{X} R^\dagger \quad \text{with} \quad \widetilde{X} = \left(
    \begin{matrix}
    z & \mathbf{w}^\dagger \\
    0 & X_1 \\
    \end{matrix}
    \right)\text{,}
\ee
where $R$ is a Hermitian matrix with $R^\dagger = R$ and $R^2 = \eins_N$. The vector $\mathbf{w}$ is complex with $N-1$ entries with independent real and imaginary parts. The matrix $X_1$ is an $(N-1) \times (N-1)$ dimensional complex matrix with the same statistical properties as $X$, which in principle could be further decomposed into blocks. In  \cite[Section 4]{FyodorovCMP}, this decomposition is applied to the GinUE and it is found that a complex eigenvalue $z$, has right eigenvector $\bm v_R \rightarrow \left( 1, \bm 0_{N-1} \right)^T$ and left eigenvector $\bm v_L \rightarrow \left( 1, \mathbf{b}_{N-1}\right)^T$, with $N-1$ dimensional complex vector $\mathbf{b}_{N-1}$. We thus deduce that in the eGinUE the self-overlap, $\mathcal{O}_z$, is readily expressed using the resolvent of $X_1$, since $\mathbf{b}_{N-1}^\dagger = \mathbf{w}^\dagger \left(z\eins_{N-1} - X_1 \right)^{-1}$, implying
\be\label{Eq:bbdaggerCOMPLEX}
    \mathcal{O}_z = 1 + \mathbf{b}_{N-1}^\dagger \mathbf{b}_{N-1} = 1 + \mathbf{w}^\dagger \bigg[ \left(z\eins_{N-1} - X_1 \right)^\dagger \left(z\eins_{N-1} - X_1 \right) \bigg]^{-1} \mathbf{w} \equiv 1 + \mathbf{w}^\dagger \ B^{-1} \ \mathbf{w}\ ,
\ee
where we have defined the matrix 
\be \label{Bmat}
B \equiv  \left(z\eins_{N-1} - X_1 \right)^\dagger \left(z\eins_{N-1} - X_1 \right).
\ee
The Jacobian needed for the change to the ICSD of $X$ can be found in Appendix B of \cite{FK},
\be\label{JacSchur2}
\begin{split}
    dX &= \frac{1}{2} \ \vert \det \left[ z\eins_{N-1} - X_1 \right] \vert^2 \  d^2 z \ d^2\mathbf{w} \ dX_1 \ dR \ ,
\end{split}
\ee
where $dR$ denotes the contribution to the measure originating from the Householder reflection matrix $R$. After simple manipulations we find the probability measure, defined in terms of the new variables:
\be\label{JPDFzellipticCOMPLEX}
\begin{split}
    P_{\text{eGinUE}}(X)dX &= D_{N,\tau}^\prime \vert \det \left[ z\eins_{N-1} - X_1 \right] \vert^2 e^{-\frac{1}{1-\tau^2} \big[\Tr \left( X_1 X_1^\dagger - \tau \ \RE X_1^2 \right) + \left( \vert z \vert^2 - \tau \ \RE z^2 + \mathbf{w}^\dagger \mathbf{w} \right) \big]} d^2 z \ d^2\mathbf{w} \ dX_1 \ .
\end{split}
\ee
Note that the integral over $dR$ yields a factor $\frac{2\pi^{N-1}}{\Gamma(N)}$ \cite[Section 6]{FK}, combining this with $D_{N,\tau}$ from Definition \ref{def:eGinUE}, gives the new constant 
\be\label{Eq:NEWconstsCOMPLEX}
\begin{split}
    D_{N,\tau}^\prime 
    &= \frac{1}{2} \, D_{N,\tau}^{-1} \, \frac{2\pi^{N-1}}{\Gamma(N)} = \frac{\pi^{-N^2+N-1}}{\Gamma(N)} \, \left(1-\tau^2 \right)^{-N^2/2} \ .
\end{split}
\ee

%%%%%%%%%%%%%%%%%%%%%%%%%%%%%%%%%%%%%%%%%%%%%%%%%%%%%%%%%%%%%%%%%%%%%%

%%%%%%%%%%%%%%%%%%%%%%%%%%%%%%%%%%%%%%%%%%%%%%%%%%%%%%%%%%%%%%%%%%%%%%
%%%%%%%%%%%%%%%%%%%%%%%%%%%%%%%%%%%%%%%%%%%%%%%%%%%%%%%%%%%%%%%%%%%%%%

\subsection{Proof of Theorem \ref{thm:MainResCOMPLEX} - eGinUE}\label{subsec:ProoffiniteNCOMPLEX}

\noindent
The mean self-overlap for the eGinUE can be written as
\be\label{Eq:OverlapsCOMPLEX}
\begin{split}
     \mathcal{O}^{(\eGinUE,c)}_{N}(z) &= \bigg\langle \frac{1}{N}\sum_{n=1}^N \mathcal{O}_{nn} \ \delta(z-z_n) \bigg\rangle_{\text{eGinUE},N} = \bigg\langle \mathcal{O}_{\widetilde{z}} \ \delta(z-\widetilde{z}) \bigg\rangle_{X} \ .
\end{split}
\ee
The ICSD of an $N \times N$ eGinUE matrix $X$ allows one to express $\mathcal{O}_z$ in terms of $z$ and $X_1$, thus the average is obtained by integrating over the measure in Eq. \eqref{JPDFzellipticCOMPLEX}. The $\delta$-function of a complex variable makes the integration over $d^2 z$ trivial. It is convenient to define the normalised average of any function $\mathcal{A}(\mathbf{w})$ with respect to the vector $\mathbf{w}$ as
\be\label{Eq:WaverageDefCOMPLEX}
    \bigg\langle \mathcal{A}(\mathbf{w}) \bigg\rangle_{\mathbf{w}} \equiv  \frac{(1-\tau^2)^{-N+1}}{\pi^{N-1}} \int d^2 \mathbf{w} \exp \left[ -\frac{1}{1-\tau^2} \mathbf{w}^\dagger \mathbf{w} \right] \mathcal{A}(\mathbf{w}) \ ,
\ee 
so that we can write
\begin{align}
    \bigg\langle \mathcal{O}_{\widetilde{z}} \ \delta(z-\widetilde{z}) \bigg\rangle_{X} 
    &= D_{N,\tau}^\prime \ e^{-\frac{1}{1-\tau^2}\left( \vert z \vert^2 - \tau \ \RE z^2 \right)} \int dX_1 \  \vert \det \left[ z\eins_{N-1} - X_1 \right] \vert^2 \ e^{-\frac{1}{1-\tau^2}\Tr \left( X_1 X_1^\dagger - \tau \ \RE X_1^2 \right) } \nonumber\\
    &\times \ (1-\tau^2)^{N-1} \ \pi^{N-1} \  \bigg\langle 1 +   \left( \mathbf{b}_{N-1}^\dagger \mathbf{b}_{N-1} \right) \bigg\rangle_{\mathbf{w}} \ .
    \label{Eq:avOvleGinUEstep}
\end{align}
We can now adapt statements from \cite{WCF}, in particular Lemmas 3.1 and 3.2 which apply to the GinOE. These adaptations are straightforward exercises in Gaussian integrals and are left to the reader:

\begin{lem}\label{lem:Wav1COMPLEX}
Let $\mathbf{w}$ be a complex vector of length $N-1$, whose entries have independent real and imaginary parts. We have
\be\label{Eq:LemWaverage1COMPLEX}
\begin{split}
    &\bigg\langle \mathbf{w}^\dagger Y \mathbf{w} \bigg\rangle_{\mathbf{w}} = (1-\tau^2) \Tr Y \ ,
\end{split}
\ee
where the average is defined in Eq. \eqref{Eq:WaverageDefCOMPLEX} and $Y$ is an $N-1$ dimensional square matrix.
\end{lem}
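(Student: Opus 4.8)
\noindent
The plan is to evaluate the Gaussian average $\langle\cdot\rangle_{\mathbf{w}}$ of Eq.~\eqref{Eq:WaverageDefCOMPLEX} directly, by decomposing $\mathbf{w}$ into real and imaginary parts. Writing $w_j = a_j + i b_j$ for $j = 1,\dots,N-1$, one has $\mathbf{w}^\dagger\mathbf{w} = \sum_j (a_j^2+b_j^2)$, so the weight $\exp[-\tfrac{1}{1-\tau^2}\mathbf{w}^\dagger\mathbf{w}]$ factorizes into $2(N-1)$ independent real Gaussian integrals, each of variance $(1-\tau^2)/2$. Since $\int da\,db\,\exp[-\tfrac{1}{1-\tau^2}(a^2+b^2)] = \pi(1-\tau^2)$, the prefactor $(1-\tau^2)^{-N+1}\pi^{-(N-1)}$ appearing in Eq.~\eqref{Eq:WaverageDefCOMPLEX} is exactly the one normalizing $\langle 1\rangle_{\mathbf{w}} = 1$.

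\noindent
Next I would record the elementary second moments $\langle a_i a_j\rangle_{\mathbf{w}} = \langle b_i b_j\rangle_{\mathbf{w}} = \tfrac{1-\tau^2}{2}\,\delta_{ij}$ and $\langle a_i b_j\rangle_{\mathbf{w}} = 0$, which combine to give the Hermitian contraction $\langle \bar{w}_i w_j\rangle_{\mathbf{w}} = (1-\tau^2)\,\delta_{ij}$ (and, incidentally, $\langle w_i w_j\rangle_{\mathbf{w}} = 0$, reflecting invariance of the weight under independent phase rotations $w_j \mapsto e^{i\theta_j}w_j$). Expanding the quadratic form in components and using linearity of the average then yields
\be
    \bigg\langle \mathbf{w}^\dagger Y \mathbf{w} \bigg\rangle_{\mathbf{w}} = \sum_{i,j=1}^{N-1} Y_{ij}\,\langle \bar{w}_i w_j\rangle_{\mathbf{w}} = (1-\tau^2)\sum_{i=1}^{N-1} Y_{ii} = (1-\tau^2)\,\Tr Y\ ,
\ee
which is the claimed identity.

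\noindent
As the remark preceding the lemma already signals, this is a textbook Gaussian computation and there is no genuine obstacle; the only points deserving a moment's attention are confirming that the normalization constant in Eq.~\eqref{Eq:WaverageDefCOMPLEX} is the correct one, and observing that, because $\langle w_i w_j\rangle_{\mathbf{w}} = 0$ for the complex weight at hand, only the contraction $\langle \bar{w}_i w_j\rangle_{\mathbf{w}}$ survives, so that no term involving $Y^T$ (or $\RE Y$) enters the answer. Equivalently, one can obtain the statement by transcribing Lemma~3.1 of~\cite{WCF} essentially verbatim, replacing the real-Ginibre entrywise variance used there by the value $(1-\tau^2)/2$ dictated by the eGinUE weight.
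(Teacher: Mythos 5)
Your computation is correct and is exactly the "straightforward exercise in Gaussian integrals" that the paper leaves to the reader (adapting Lemma 3.1 of \cite{WCF} to the complex weight): the normalization check, the second moments $\langle \bar{w}_i w_j\rangle_{\mathbf{w}} = (1-\tau^2)\delta_{ij}$, and the contraction with $Y$ give precisely Eq. \eqref{Eq:LemWaverage1COMPLEX}. Nothing further is needed.
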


\noindent
Straightforward application of this formula further implies

\begin{lem}\label{lem:Wav2COMPLEX}
    The average over $\mathbf{w}$ of $\mathbf{b}_{N-1}^\dagger \mathbf{b}_{N-1} $, defined in Eq. \eqref{Eq:bbdaggerCOMPLEX}, is given by
    \be\label{Eq:LemWaverage2COMPLEX}
    \begin{split}
        \bigg\langle 1 +   \left( \mathbf{b}_{N-1}^\dagger \mathbf{b}_{N-1} \right) \bigg\rangle_{\mathbf{w}} 
        &=  1 +(1-\tau^2) \Tr \left[ B^{-1} \right] \ .
    \end{split}
    \ee
\end{lem}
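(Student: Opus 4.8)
The plan is to combine the resolvent representation of the self-overlap derived in Eq.~\eqref{Eq:bbdaggerCOMPLEX} with the Gaussian integral identity of Lemma~\ref{lem:Wav1COMPLEX}. Recall from Eq.~\eqref{Eq:bbdaggerCOMPLEX} that $\mathbf{b}_{N-1}^\dagger \mathbf{b}_{N-1} = \mathbf{w}^\dagger B^{-1} \mathbf{w}$, where $B = \left(z\eins_{N-1} - X_1 \right)^\dagger \left(z\eins_{N-1} - X_1 \right)$ depends only on $z$ and $X_1$ and in particular does not involve the Gaussian vector $\mathbf{w}$. Since $X_1$ has a density on the space of complex matrices, $z$ is almost surely not an eigenvalue of $X_1$, so $z\eins_{N-1}-X_1$ is invertible, $B$ is (a $\mathbf{w}$-independent, random) positive-definite matrix, and $B^{-1}$ is well defined; thus the $\mathbf{w}$-average below is taken against a fixed matrix for the purposes of the integration.

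First I would use the linearity of the normalised average $\langle \,\cdot\,\rangle_{\mathbf{w}}$ defined in Eq.~\eqref{Eq:WaverageDefCOMPLEX}, together with $\langle 1 \rangle_{\mathbf{w}} = 1$ (the prefactor in Eq.~\eqref{Eq:WaverageDefCOMPLEX} is precisely the Gaussian normalisation for $N-1$ complex variables of component variance $(1-\tau^2)/2$), to write
$\langle 1 + \mathbf{b}_{N-1}^\dagger \mathbf{b}_{N-1} \rangle_{\mathbf{w}} = 1 + \langle \mathbf{w}^\dagger B^{-1} \mathbf{w} \rangle_{\mathbf{w}}$. Then I would apply Lemma~\ref{lem:Wav1COMPLEX} with the choice $Y = B^{-1}$, which immediately yields $\langle \mathbf{w}^\dagger B^{-1} \mathbf{w} \rangle_{\mathbf{w}} = (1-\tau^2)\Tr\left[B^{-1}\right]$, and hence the claimed identity $\langle 1 + \mathbf{b}_{N-1}^\dagger \mathbf{b}_{N-1} \rangle_{\mathbf{w}} = 1 + (1-\tau^2)\Tr\left[B^{-1}\right]$.

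There is essentially no obstacle in this lemma; it is, as the text says, a direct corollary of Lemma~\ref{lem:Wav1COMPLEX}. The only point deserving a word of care is the legitimacy of performing the $\mathbf{w}$-average before the (subsequent) $X_1$-average in Eq.~\eqref{Eq:avOvleGinUEstep}: this is justified because $B^{-1}$ is independent of $\mathbf{w}$ and the integrand $\mathbf{w}^\dagger B^{-1}\mathbf{w}$ is dominated by $\|B^{-1}\|_{\mathrm{op}}\,\|\mathbf{w}\|^2$ against the Gaussian weight, so the iterated integral converges absolutely and Fubini applies. With Lemma~\ref{lem:Wav2COMPLEX} in hand, the remaining work in the proof of Theorem~\ref{thm:MainResCOMPLEX} is to evaluate $\langle \,\vert \det(z\eins_{N-1}-X_1) \vert^2\, \Tr[B^{-1}] \,\rangle_{X_1}$ explicitly, which is where the Hermite-polynomial structure and the function $R_N$ will enter.
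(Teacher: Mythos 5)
Your argument is correct and is exactly the route the paper intends: since $B^{-1}$ does not depend on $\mathbf{w}$, the identity follows immediately from linearity of $\langle\,\cdot\,\rangle_{\mathbf{w}}$, the normalisation $\langle 1\rangle_{\mathbf{w}}=1$, and Lemma \ref{lem:Wav1COMPLEX} with $Y=B^{-1}$, which is precisely why the paper labels it a ``straightforward application'' left to the reader. Your added remarks on almost-sure invertibility of $z\eins_{N-1}-X_1$ and on Fubini are sensible but not a deviation in method.
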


\noindent
Inserting now the result from Lemma \ref{lem:Wav2COMPLEX} into Eq. \eqref{Eq:avOvleGinUEstep}, the next task is to perform the integral over $X_1$ in
\be\label{Eq:avOvleGinUEstep2}
\begin{split}
    \bigg\langle \mathcal{O}_{\widetilde{z}} \ \delta(z-\widetilde{z}) \bigg\rangle_{X} 
    &= D_{N,\tau}^\prime \ e^{-\frac{1}{1-\tau^2}\left( \vert z \vert^2 - \tau \ \RE z^2 \right)} \ (1-\tau^2)^{N-1} \ \pi^{N-1} \\
    &\times   \int dX_1 \  \vert \det \left[ z\eins_{N-1} - X_1 \right] \vert^2 \ e^{-\frac{1}{1-\tau^2}\Tr \left( X_1 X_1^\dagger - \tau \ \RE X_1^2 \right) } \bigg[ 1 + (1-\tau^2) \Tr B^{-1} \bigg] \ .
\end{split}
\ee
We now utilise the definition of the matrix $B$ in Eq. (\ref{Bmat}) and notice that we can write
\be\label{detformulasCOMPLEX}
\begin{split}
    \det B 
    &=  \vert \det \left[ z\eins_{N-1} - X_1 \right] \vert^2  = \det \left[ \begin{matrix}
    0 & i(z\eins_{N-1} - X_1)  \\
    i(\bar{z} \eins_{N-1} - X_1^\dagger) & 0 \\
    \end{matrix}  \right] \ ,
\end{split}
\ee
which will be useful in calculating the first term of Eq. \eqref{Eq:avOvleGinUEstep2}, whereas for the second term we must utilise
\be\label{deriv}
    \frac{\partial}{\partial \mu} \det\left( \mu \eins_{N-1} + B \right) \bigg\vert_{\mu=0} = \det B \ \Tr B^{-1} \ .
\ee
This motivates the need to introduce the block-matrix $M$ via
\be\label{Eq:propM2mat}
    M \equiv  \left( \begin{matrix}
        \sqrt{\mu} \eins_{N} & i\left( z \eins_N - X \right) \\
        i\left( \bar{z} \eins_N - X^\dagger \right) & \sqrt{\mu} \eins_N \\
    \end{matrix} \right) \ ,
\ee
where each block is of size $N\times N$ and $\mu$ is a real parameter, such that for $N\rightarrow N-1$ and $\mu=0$ we have
\be\label{detMpartialdetM}
    \det M \ \bigg\vert_{\mu=0} = \det B  \quad \text{and} \quad \frac{\partial}{\partial \mu} \det M \ \bigg\vert_{\mu=0} = \det B \ \Tr B^{-1} \ .
\ee
Now we define the normalised average of any function $\mathcal{A}(X_1)$ over $X_1$ via
\be\label{Eq:X2avCOMPLEX}
    \bigg\langle \mathcal{A}(X_1) \bigg\rangle_{X_1} \equiv  \frac{1}{D_{N-1,\tau}} \int dX_1 \ \exp \left[-\frac{1}{1-\tau^2}\Tr \left( X_1 X_1^\dagger - \tau X_1^2 \right) \right] \ \mathcal{A}(X_1) \ ,
\ee
where the constant $D_{N-1,\tau}$ is given in Definition \ref{def:eGinUE}. Introducing another constant, $D_{N,\tau}^{\prime \prime} \equiv D_{N,\tau}^\prime D_{N-1,\tau} (1-\tau^2)^{N-1} \ \pi^{N-1}$, the expression in Eq. \eqref{Eq:avOvleGinUEstep2} takes the form
\be\label{Eq:avOvleGinUEstep3}
\begin{split}
     \bigg\langle \mathcal{O}_{\widetilde{z}} \ \delta(z-\widetilde{z}) \bigg\rangle_{X}  
     &= D_{N,\tau}^{\prime \prime} \ e^{-\frac{1}{1-\tau^2}\left( \vert z \vert^2 - \tau \ \RE z^2 \right)}  \bigg[ \  \bigg\langle \det M \bigg\rangle_{X_1} \, \bigg\vert_{\mu = 0} + (1-\tau^2) \frac{\partial}{\partial \mu }\bigg\langle \det M \bigg\rangle_{X_1} \, \bigg\vert_{\mu = 0} \ \bigg] \ .
\end{split}
\ee
The averaging of $\det M$ with respect to the eGinUE (and later the eGinOE) can be most conveniently performed using the representation of the determinant via Berezin integration over anti-commuting Grassmann variables:
\be\label{detIdentGrass}
    \det M  = \int D(\bm \Phi,\bm \chi) \exp \left[ - \bm \Phi^T M \bm \chi \right]  \ \text{,}
\ee
where $D(\bm \Phi,\bm \chi) = d\bm \phi_1 d \bm \chi_1 d \bm \phi_2 d\bm \chi_2$ and $\bm \Phi^T = \left( \bm \phi_1, \bm \phi_2 \right)^T$, $\bm \chi^T = \left( \bm \chi_1, \bm \chi_2 \right)^T$. Here, $\bm \phi_1$, $\bm \phi_2$, $\bm \chi_1$ and $\bm \chi_2$ are Grassmann vectors of length $N$. For further use it is also convenient to introduce a longer Grassmann vector $\phi^T \equiv \left( \bm \phi_1,\bm \phi_2, \bm \chi_1, \bm \chi_2 \right)^T$ with the measure $D\phi =  d \bm \phi_1 d \bm \phi_2  d\bm \chi_1  d \bm \chi_2$, as this allows us to employ the following identity, see e.g.  \cite{CSS}
\be\label{PfaffIdentGrass}
    \int D\phi \ \exp \left[ - \frac{1}{2} \phi^T H \phi \right] = 
    \begin{cases}
        \Pf H & \text{for } H \ \text{even dimensional,} \\
        0 & \text{for } H \ \text{odd dimensional.} \\
    \end{cases}
\ee
The outcome of calculating the average determinant of $M$ over $X$ is provided in the following Proposition.
\begin{prop}\label{prop:Xav}
    Let $M$ be the matrix defined in Eq. \eqref{Eq:propM2mat}, where $z$ is a complex number and $X$ is an $N\times N$ eGinUE matrix. The average of $\det M$, with respect to $X$, can be expressed in the following form:
    \begin{align}
        \bigg\langle \det M \bigg\rangle_{X} 
        &= \frac{1}{2\pi} \int_{\mathbb{R}^2} dv_1 \, dv_2 \, e^{ -\frac{v_1^2   + v_2^2}{2} }  \int_0^{\infty} dR  \ e^{-R}  \bigg[  \Big(R + f(z,v_1,v_2) \Big)^2 + \mu^2 +2\mu \Big(f(z,v_1,v_2) -R\Big)  \bigg]^{\frac{N}{2}} \nonumber\\
        &   \times P_N\left(\frac{\mu+R+f(z,v_1,v_2) }{\sqrt{\big(R + f(z,v_1,v_2) \big)^2 + \mu^2 +2\mu \big(f(z,v_1,v_2) -R\big)}}\right) \ , \label{Eq:detMreseGinUEb}
    \end{align}
    where $f(z,v_1,v_2) \equiv \left( \bar{z} -i v_2 \sqrt{\tau} \right) \left( z +i v_1 \sqrt{\tau} \right)$ and $P_N(t) $ is a Legendre polynomial defined in \textup{\cite[8.913.3]{Grad}}
    \be \label{LegPolyint}
        P_N(t)=\frac{1}{\pi}\int_0^{\pi} d\theta \ \left(t+\sqrt{t^2-1}\cos{\theta}\right)^N \ .
    \ee
\end{prop}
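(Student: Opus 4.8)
\emph{Proof plan.}
The plan is to evaluate $\langle\det M\rangle_X$ by the familiar Grassmann/Hubbard--Stratonovich route: turn the determinant into a Gaussian Berezin integral, carry out the (elementary) Gaussian average over $X$, which produces quartic terms in the Grassmann vectors, decouple these quartic terms with auxiliary Gaussian integrations, evaluate the remaining quadratic Berezin integral exactly, and recognise an angular integral as the Legendre polynomial in \eqref{LegPolyint}.

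Concretely, I would start from $\det M=\int D(\bm\Phi,\bm\chi)\,e^{-\bm\Phi^T M\bm\chi}$, Eq.~\eqref{detIdentGrass}, and expand the exponent using the block form \eqref{Eq:propM2mat}. This splits it into the $X$-free part $-\sqrt\mu\,(\bm\phi_1^T\bm\chi_1+\bm\phi_2^T\bm\chi_2)-iz\,\bm\phi_1^T\bm\chi_2-i\bar z\,\bm\phi_2^T\bm\chi_1$ and the $X$-part $i\,\bm\phi_1^T X\bm\chi_2+i\,\bm\phi_2^T X^\dagger\bm\chi_1$, which is \emph{linear} in the Gaussian entries $X_{ij},\bar X_{ij}$. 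Interchanging the $X$-average with the (terminating) Berezin integral and using $\langle e^{\ell}\rangle_X=\exp[\tfrac12\langle\ell^2\rangle_X]$ with the eGinUE second moments \eqref{eq:corr_eGinUE}, a short computation in which one keeps careful track of the reorderings of anticommuting factors in the resulting double sums gives
\[
\langle\det M\rangle_X=\int D(\bm\Phi,\bm\chi)\,\exp\Big[-\sqrt\mu\,(\bm\phi_1^T\bm\chi_1+\bm\phi_2^T\bm\chi_2)-iz\,\bm\phi_1^T\bm\chi_2-i\bar z\,\bm\phi_2^T\bm\chi_1+\tfrac{\tau}{2}(\bm\phi_1^T\bm\chi_2)^2+\tfrac{\tau}{2}(\bm\phi_2^T\bm\chi_1)^2+(\bm\phi_1^T\bm\chi_1)(\bm\phi_2^T\bm\chi_2)\Big].
\]

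Next I would linearise the three quartic terms. Each $\tau$-square is decoupled by one real Gaussian, $\exp[\tfrac{\tau}{2}(\bm\phi_1^T\bm\chi_2)^2]=\tfrac1{\sqrt{2\pi}}\int dv_1\,e^{-v_1^2/2+\sqrt\tau\,v_1\,\bm\phi_1^T\bm\chi_2}$, and likewise for $(\bm\phi_2^T\bm\chi_1)^2$ (with a harmless $v_2\to -v_2$); this amounts to replacing $z\mapsto z'=z+i\sqrt\tau\,v_1$ and $\bar z\mapsto\bar z'=\bar z-i\sqrt\tau\,v_2$ in the surviving quadratic coupling, so that $z'\bar z'$ is exactly $f(z,v_1,v_2)$. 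The remaining cross term is decoupled by one \emph{complex} Gaussian, $\exp[(\bm\phi_1^T\bm\chi_1)(\bm\phi_2^T\bm\chi_2)]=\tfrac1\pi\int d^2\zeta\,e^{-|\zeta|^2}\,e^{\zeta\,\bm\phi_1^T\bm\chi_1+\bar\zeta\,\bm\phi_2^T\bm\chi_2}$ --- an exact identity because $\bm\phi_1^T\bm\chi_1$ and $\bm\phi_2^T\bm\chi_2$ are commuting nilpotents --- and in polar coordinates $\zeta=\sqrt R\,e^{i\theta}$ one has $\tfrac1\pi\,d^2\zeta\,e^{-|\zeta|^2}=\tfrac1{2\pi}\,dR\,e^{-R}\,d\theta$ with $R\in[0,\infty)$, $\theta\in[0,2\pi)$. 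After these substitutions the Grassmann action is purely quadratic, $-\bm\Phi^T M'\bm\chi$, where $M'$ has the same block shape as $M$ but with every $N\times N$ block proportional to $\eins_N$; hence the Berezin integral equals $\det M'=(\det\widetilde M')^N$ with $\widetilde M'=\left(\begin{smallmatrix}\sqrt\mu-\sqrt R\,e^{i\theta}& iz'\\[2pt] i\bar z'&\sqrt\mu-\sqrt R\,e^{-i\theta}\end{smallmatrix}\right)$, so $\det\widetilde M'=\mu+R+f-2\sqrt{\mu R}\cos\theta$.

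Finally, since the result depends on $\theta$ only through $\cos\theta$, I would reduce $\tfrac1{2\pi}\int_0^{2\pi}$ to $\tfrac1\pi\int_0^\pi$ and, after $\theta\mapsto\pi-\theta$, flip the sign of the $\cos\theta$ term; the rescaled form of \eqref{LegPolyint}, $\tfrac1\pi\int_0^\pi d\theta\,(\alpha+\beta\cos\theta)^N=(\alpha^2-\beta^2)^{N/2}\,P_N\big(\alpha/\sqrt{\alpha^2-\beta^2}\big)$, applied with $\alpha=\mu+R+f$, $\beta=2\sqrt{\mu R}$ and the elementary identity $\alpha^2-\beta^2=(R+f)^2+\mu^2+2\mu(f-R)$, then yields precisely \eqref{Eq:detMreseGinUEb} (understood as an identity of polynomials in $\mu$, which is all that is needed when it is later specialised to $\mu=0$ and $\partial_\mu|_{\mu=0}$ in the proof of Theorem~\ref{thm:MainResCOMPLEX}). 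The step I expect to require the most care is combinatorial rather than conceptual: getting all signs right in the Gaussian $X$-average so that the three quartic terms appear with coefficients $\tfrac{\tau}{2},\tfrac{\tau}{2},1$ and the correct bilinears, and then fixing the signs and parametrisation of the auxiliary variables $v_1,v_2,\zeta$ so that $f(z,v_1,v_2)$ and the combination $2\sqrt{\mu R}\cos\theta$ come out exactly as stated; the subsequent identification of $P_N$ is then routine.
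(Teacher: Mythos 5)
Your proposal is correct and takes essentially the same route as the paper: Berezin representation of $\det M$, Gaussian averaging over $X$ yielding the quartic terms $(\bm \phi_1^T \bm \chi_1)(\bm \phi_2^T \bm \chi_2)+\tfrac{\tau}{2}\big[(\bm \phi_1^T \bm \chi_2)^2+(\bm \chi_1^T \bm \phi_2)^2\big]$, the same three Hubbard--Stratonovich decouplings (two real, one complex), and the same polar-coordinate change with the Legendre identification. The only difference is in one step and is cosmetic: you evaluate the resulting quadratic Berezin integral directly as $(\det F)^N$ for an effective $2\times 2$ matrix tensored with $\eins_N$, whereas the paper recasts it as a $4N\times 4N$ antisymmetric form and invokes $\Pf H=(-1)^N\det\widetilde{H}$; both give the identical intermediate expression $\big[(\sqrt{\mu}+\bar q)(\sqrt{\mu}+q)+f(z,v_1,v_2)\big]^N$ and hence Eq.~\eqref{Eq:detMreseGinUEb}.
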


\begin{proof}
     Using Eq. \eqref{detIdentGrass} to express the determinant of $M$ in terms of anticommuting variables leads to
    \be
        \bigg\langle \det M \bigg\rangle_{X} = (-1)^N\expval{ \int D \bm \Phi D \bm \chi \exp{ -  \left( \begin{matrix} \bm \phi_1^T & \bm \phi_2^T \end{matrix} \right) \left( \begin{matrix} \sqrt{\mu} \eins_N & i \left( z \eins_N - X \right) \\ i \left( \bar{z} \eins_N - X^\dagger \right) & \sqrt{\mu} \eins_N 
        \end{matrix} \right) \left( \begin{matrix} \bm \chi_1 \\ \bm \chi_2 \end{matrix} \right)
        }  }_{X} \ ,
    \ee
    which upon using $\bm \phi_i^T X \bm \chi_j = - \Tr \left[ X \bm \chi_j \bm \phi_i^T \right]$ can be written as
    \begin{align}
        \bigg\langle \det M \bigg\rangle_{X} &= (-1)^N \int D \bm \Phi D \bm \chi \ e^{ - \sqrt{\mu} \left( \bm \phi_1^T \bm \chi_1 + \bm \phi_2^T \bm \chi_2 \right) - i \left(z  \bm \phi_1^T \bm \chi_2  + \bar{z} \bm \phi_2^T \bm \chi_1 \right) }  \expval{  e^ {- i \Tr[ X \bm \chi_2 \bm \phi_1^T] - i \Tr[ X^\dagger \bm \chi_1 \bm \phi_2^T]}
         }_{X}  \ . \label{eq:detM_eGinUE_ave}
    \end{align}
    We now employ the following identity for averages with respect to the eGinUE:
    \begin{equation}
        \expval{e^{-\Tr[X A + X^\dagger B] }}_{X} = \exp[  \Tr \left( A B \right) + \frac{\tau}{2} \Tr \left( A^2 + B^2 \right)   ] \ ,
        \label{eq:ave_exp_Tr_GNell2}
    \end{equation}
    where if we use $A = i \bm \chi_2 \bm \phi_1^T$ and $B = i \bm \chi_1 \bm \phi_2^T$, the average in Eq. \eqref{eq:detM_eGinUE_ave} becomes
    \be\label{EqXell2}
    \begin{split}
        \expval{e^{-\Tr[X A + X^\dagger  B] }}_{X} &= \exp \bigg[ \left( \bm \phi_1^T \bm \chi_1 \right) \left( \bm \phi_2^T \bm \chi_2 \right) + \frac{\tau}{2} \bigg( \left( \bm \phi_1^T \bm \chi_2 \right)^2  +  \left( \bm \chi_1^T \bm \phi_2 \right)^2  \bigg)  \bigg] \ . 
    \end{split}
    \ee
    These terms, quartic in Grassmann variables, can be further bilinearized by employing auxilliary Gaussian integrals, i.e. Hubbard-Stratonovich transformations:
    \be\label{ellHubbStratauxCOMPLEX}
    \begin{split}
        \exp \bigg[ \left(\bm \phi_1^T \bm \chi_1 \right) \left( \bm \phi_2^T \bm \chi_2 \right) \bigg] &= \frac{1}{2\pi} \int_{\mathbb{C}} dq d\bar{q} \ \exp \bigg[ -\vert q \vert^2 - q \left( \bm \phi_2^T \bm \chi_2 \right) - \bar{q}\left(\bm \phi_1^T \bm \chi_1 \right) \bigg] \quad \text{,}\\
        \exp \bigg[ \ \frac{\tau}{2}\left( \bm \phi_1^T \bm \chi_2 \right)^2 \ \bigg] &= \frac{1}{\sqrt{2\pi}} \int_{\mathbb{R}} dv_1 \ \exp \bigg[-\frac{1}{2}\ v_1^2 + v_1 \ \sqrt{\tau} \ \left( \bm \phi_1^T \bm \chi_2 \right) \bigg] \quad \text{,}\\
        \exp \bigg[ \ \frac{\tau}{2}\left( \bm \chi_1^T \bm \phi_2 \right)^2 \ \bigg] &= \frac{1}{\sqrt{2\pi}} \int_{\mathbb{R}} dv_2 \ \exp \bigg[-\frac{1}{2}\ v_2^2 + v_2 \ \sqrt{\tau}\ \left( \bm \chi_1^T \bm \phi_2 \right) \bigg] \quad \text{.}
    \end{split}
    \ee
    Rewriting the exponents using that $\phi \chi = (\phi \chi - \chi \phi)/2$, for anti-commuting variables $\chi$ and $\phi$, allows the usage of the Pfaffian as defined in Eq. \eqref{PfaffIdentGrass} with an anti-symmetric matrix $H$ of the form
    \be\label{MPfaffMatCOMPLEX}
        H = \left(\begin{matrix}
        0 \ \eins_N &  0 \ \eins_N & \left( \sqrt{\mu} + \bar{q} \right) \eins_{N} & \left( i z - v_1 \sqrt{\tau} \right) \eins_{N} \\
        0 \ \eins_N  & 0 \ \eins_N & \left(  i \bar{z} + v_2 \sqrt{\tau} \right) \eins_{N} & \left( \sqrt{\mu} + q \right) \eins_{N} \\
        -\left( \sqrt{\mu} + \bar{q} \right) \eins_{N} & -\left(  i \bar{z} + v_2 \sqrt{\tau} \right) \eins_{N} & 0 \ \eins_N & 0 \ \eins_N \\
        -\left( i z - v_1 \sqrt{\tau} \right) \eins_{N} & -\left( \sqrt{\mu} + q \right) \eins_{N} &  0 \ \eins_N & 0 \ \eins_N \\
        \end{matrix} \right) = \left( \begin{matrix}
        0 & \widetilde{H} \\
        -\widetilde{H}^T & 0 \\
    \end{matrix} \right) \ .
    \ee
    Due to the structure of $H$ we can use an identity in \cite{DVarjas} to write  $\Pf H = (-1)^N \det \widetilde{H}$, with
    \be\label{MPfaffMat2COMPLEX}
        \widetilde{H} = F \otimes \eins_N =  \left(\begin{matrix}
        \sqrt{\mu} + \bar{q}  &  i z - v_1 \sqrt{\tau}  \\
        i \bar{z} + v_2 \sqrt{\tau} & \sqrt{\mu} + q  \\
        \end{matrix} \right) \otimes \eins_N \ .
    \ee
    The determinant obeys the rule $\det \left( A \otimes B \right) = \left( \det A\right)^{m} \left( \det B \right)^{n}$ for matrices $A$ and $B$ of dimension $n$ and $m$ respectively. In our case $A=F$ and $B=\eins_N$ thus $n=2$ and $m=N$, therefore $\det \widetilde{H} = \left( \det F\right)^{N}$. The determinant of $F$ can be immediately calculated yielding
    \be\label{MPfaffMat4COMPLEX}
        \Pf H = (-1)^N \left( \det F \right)^N = (-1)^N \bigg[ \left( \sqrt{\mu} + \bar{q} \right) \left( \sqrt{\mu} + q \right)  + \left( \bar{z} -i v_2 \sqrt{\tau} \right) \left( z +i v_1 \sqrt{\tau} \right)  \bigg]^N
    \ee
    and after straightforward manipulations our equation for the average determinant of $M$ becomes
    \begin{align}
        \bigg\langle \det M \bigg\rangle_{X}  
        &=(-1)^N \left( \frac{1}{2\pi} \right)^2 \int_{\mathbb{C}} dq d\bar{q}  \int_{\mathbb{R}^2} dv_1 dv_2 \exp \bigg[  -\vert q \vert^2 -\frac{1}{2} \bigg( v_1^2   + v_2^2 \bigg) \bigg]  \ \Pf H \nonumber \\
        &= \frac{1}{4\pi^2} \int_{\mathbb{C}} dq d\bar{q}  \ \int_{\mathbb{R}^2} dv_1 \,  dv_2 \,  e^{-\vert q \vert^2  -\frac{v_1^2 + v_2^2}{2}}  \bigg[ \left( \sqrt{\mu} + \bar{q} \right) \left( \sqrt{\mu} + q \right)   + f(z,v_1,v_2)  \bigg]^N \ . \label{MeEllres1COMPLEX}
    \end{align}
    Employing the change of variables, $q = \sqrt{R} e^{i\theta}$ and $\bar{q} = \sqrt{R} e^{-i\theta}$ along with the definition of a Legendre polynomial in Eq. \eqref{LegPolyint} we finally arrive at Eq. \eqref{Eq:detMreseGinUEb}.
\end{proof}

\noindent
The result derived in Proposition \ref{prop:Xav} is crucial to our analysis of $\langle \det M\rangle_{X_1}$ and its derivative w.r.t. $\mu$ at $\mu =0$. The results of this are shown in the following Corollary.

\begin{cor}\label{cor:AvdetX1}
    With the average taken as in Eq. \eqref{Eq:X2avCOMPLEX} using the $N-1$ dimensional eGinUE matrix $X_1$ and matrix $M$ given Eq. \eqref{Eq:propM2mat}, we have
    \be\label{Eq:AvdetMemu0COMPLEX}
        \bigg\langle \det M \bigg\rangle_{X_1} \ \bigg\vert_{\mu=0} = \Gamma(N) \, \pi \sqrt{1-\tau^2}\exp \left[ \frac{1}{1-\tau^2}\bigg( \vert z \vert^2 - \tau \ \textup{Re}(z^2) \bigg) \right] \rho_{N}^{\textup{(eGinUE,c)}}(z) \ ,
    \ee
    \be\label{Eq:AvdetMepartialmu0COMPLEX}
    \begin{split}
         \frac{\partial}{\partial \mu} \bigg\langle \det M \bigg\rangle_{X_1} \bigg\vert_{\mu=0} &= \Gamma(N) \pi \sqrt{1-\tau^2}e^{ \frac{\vert z \vert^2 - \tau \textup{Re}(z^2)}{1-\tau^2}} \bigg[ \rho_{N-1}^{\textup{(eGinUE,c)}}(z) + (N-2)  \rho_{N-2}^{\textup{(eGinUE,c)}}(z)   -  R_{N-3} \bigg] \ ,
    \end{split}
    \ee
    where $\rho_{N}^{\textup{(eGinUE,c)}}(z)$ and $R_N$ are defined in Eq. \eqref{eq:eGinUE_density}  and Eq. \eqref{Eq:RNRes} respectively and $\Gamma(N) = (N-1)!$. 
\end{cor}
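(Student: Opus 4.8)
The plan is to deduce both identities directly from Proposition~\ref{prop:Xav} applied with $N\to N-1$ (since $X_1$ is an $(N-1)\times(N-1)$ matrix), by carrying out the elementary $R$- and $(v_1,v_2)$-integrations in Eq.~\eqref{Eq:detMreseGinUEb} and recognising the result through the integral representation of the Hermite polynomials in Eq.~\eqref{Eq:HermitePoly}. Throughout write $f=f(z,v_1,v_2)=(\bar z-iv_2\sqrt{\tau})(z+iv_1\sqrt{\tau})$ and note, from Eq.~\eqref{Eq:HermitePoly} after the substitution $v\mapsto -v$ in the $\bar z$-integral, the key identity $\frac{1}{2\pi}\int_{\mathbb{R}^2}dv_1\,dv_2\,e^{-(v_1^2+v_2^2)/2}\,f^{\,n}=\tau^{\,n}\HE_n\!\big(z/\sqrt{\tau}\big)\HE_n\!\big(\bar z/\sqrt{\tau}\big)$. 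Comparing with Eqs.~\eqref{eq:eGinUE_density} and~\eqref{Eq:RNRes}, this translates any partial sum $\sum_j f^{\,j}/j!$ (respectively $\sum_k k\,f^{\,k}/k!$) integrated against the Gaussian weight $\tfrac1{2\pi}e^{-(v_1^2+v_2^2)/2}$ into $\pi\sqrt{1-\tau^2}\,e^{(\vert z\vert^2-\tau\RE z^2)/(1-\tau^2)}$ times the corresponding $\rho^{(\eGinUE,c)}$ (respectively $R$). This ``recognition step'' is the engine of both parts.

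For Eq.~\eqref{Eq:AvdetMemu0COMPLEX} I would simply set $\mu=0$ in~\eqref{Eq:detMreseGinUEb}: the bracket collapses to $(R+f)^2$ and the Legendre argument to $1$, with $P_{N-1}(1)=1$. Binomial expansion together with $\int_0^\infty e^{-R}R^k\,dR=k!$ gives $\int_0^\infty e^{-R}(R+f)^{N-1}\,dR=(N-1)!\sum_{j=0}^{N-1}f^{\,j}/j!$, and the recognition step identifies the remaining Gaussian integral with $\Gamma(N)\,\pi\sqrt{1-\tau^2}\,e^{(\vert z\vert^2-\tau\RE z^2)/(1-\tau^2)}\,\rho^{(\eGinUE,c)}_N(z)$, which is precisely~\eqref{Eq:AvdetMemu0COMPLEX}.

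For Eq.~\eqref{Eq:AvdetMepartialmu0COMPLEX} I would differentiate the integrand of~\eqref{Eq:detMreseGinUEb} in $\mu$ before integrating. Since $P_{N-1}$ has definite parity, $\big[(R+f)^2+\mu^2+2\mu(f-R)\big]^{(N-1)/2}P_{N-1}(\cdot)$ is an honest polynomial in $\mu$ (equivalently it equals $\tfrac1\pi\int_0^\pi d\theta\,(\mu+R+f+2\sqrt{\mu R}\cos\theta)^{N-1}$), so the differentiation is unambiguous. Using $P_{N-1}(1)=1$, $P_{N-1}'(1)=N(N-1)/2$, $\partial_\mu\big[(R+f)^2+\mu^2+2\mu(f-R)\big]\big|_{\mu=0}=2(f-R)$ and $\partial_\mu\big(\text{Legendre argument}\big)\big|_{\mu=0}=2R/(R+f)^2$, the $\mu$-derivative of the integrand at $\mu=0$ collapses to $(N-1)(R+f)^{N-3}\big(f+(N-1)R\big)$. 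Writing $f+(N-1)R=(N-1)(R+f)-(N-2)f$ and using $\int_0^\infty e^{-R}(R+f)^m\,dR=m!\sum_{j=0}^m f^{\,j}/j!$ reduces the $R$-integral to $(N-1)!\sum_{j=0}^{N-2}f^{\,j}/j!-(N-2)!\,f\sum_{j=0}^{N-3}f^{\,j}/j!$. A short rearrangement based on the telescoping identity $\sum_{j=0}^{m}f^{\,j}/j!-\sum_{j=0}^{m-1}f^{\,j}/j!=f^{\,m}/m!$ recasts this as $(N-1)!\big[\sum_{j=0}^{N-2}f^{\,j}/j!+(N-2)\sum_{j=0}^{N-3}f^{\,j}/j!-\sum_{k=0}^{N-3}k\,f^{\,k}/k!\big]$, and the recognition step matches the three sums with $\rho^{(\eGinUE,c)}_{N-1}$, $(N-2)\rho^{(\eGinUE,c)}_{N-2}$ and $R_{N-3}$ respectively, yielding~\eqref{Eq:AvdetMepartialmu0COMPLEX}.

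The only genuinely delicate point — and the main obstacle — is the $\mu$-differentiation of the Legendre factor: one must be certain the apparent $\mu^{-1/2}$ behaviour is spurious (this is exactly what the parity/polynomiality observation, or alternatively the vanishing of the odd-in-$\cos\theta$ angular integral, guarantees) and then correctly collect the finite pieces. After that, everything is routine bookkeeping with Gaussian and Gamma integrals together with the combinatorial rearrangement above.
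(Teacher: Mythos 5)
Your proposal is correct and follows essentially the same route as the paper's proof: apply Proposition~\ref{prop:Xav} with $N\to N-1$, use $P_{N-1}(1)=1$ and $P_{N-1}'(1)=N(N-1)/2$ together with the derivative of the Legendre argument to collapse the $\mu$-derivative of the integrand to $(N-1)(R+f)^{N-3}\big(f+(N-1)R\big)$, and then identify the resulting Gaussian/Gamma integrals with Hermite sums matching $\rho^{(\eGinUE,c)}_{N-1}$, $(N-2)\rho^{(\eGinUE,c)}_{N-2}$ and $R_{N-3}$. The only blemish is a dropped overall factor $(N-1)$ in your displayed ``$R$-integral'' (your own integrand carries it); once reinstated, the telescoping rearrangement to $(N-1)!\big[\sum_{j=0}^{N-2}f^{j}/j!+(N-2)\sum_{j=0}^{N-3}f^{j}/j!-\sum_{k=0}^{N-3}k\,f^{k}/k!\big]$ is indeed an identity and the stated conclusion follows exactly as in the paper.
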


\begin{proof}
    We follow the steps performed in \cite{WCF} to prove these statements. To derive Eq. \eqref{Eq:AvdetMemu0COMPLEX} we directly use Proposition \ref{prop:Xav}, for $\mu =0$ with $N\rightarrow N-1$ and the Legendre polynomial property, $P_{N}(1)=1$, producing
    \begin{align}
        \bigg\langle \det M \bigg\rangle_{X_1} \ \bigg\vert_{\mu=0} 
        &= \frac{1}{2\pi} \int_{\mathbb{R}^2} dv_1 dv_2 \ e^{-\frac{1}{2} \left( v_1^2   + v_2^2 \right)}  \int_0^{\infty} dR  \ e^{-R}   \big[R + f(z, v_1, v_2) \big]^{N-1} \ .
        \label{Eq:AvdetmuCOMPLEX}
    \end{align}
    Now employing the binomial theorem  one notices the connection to the Hermite polynomials utilizing the first equality in Eq. \eqref{Eq:HermitePoly}, which upon performing the integral over $R$ gives
    \begin{align}
        \bigg\langle \det M \bigg\rangle_{X_1} \ \bigg\vert_{\mu=0}&= \frac{1}{2\pi} \sum_{k=0}^{N-1} \left( \begin{matrix}
        N - 1 \\  k\\
        \end{matrix} \right) \int_0^{\infty} dR  \ e^{-R} R^{N-1-k} \int_{\mathbb{R}^2} dv_1 dv_2 \ e^{-\frac{1}{2} \left( v_1^2   + v_2^2 \right)} \Big( z + i v_1 \sqrt{\tau} \Big) \Big( \bar{z} - i v_2 \sqrt{\tau} \Big)^k \nonumber \\
        &= (N-1)! \sum_{k=0}^{N-1} \ \frac{\tau^k}{k!} \   \HE_k\left( \frac{z}{\sqrt{\tau}} \right)  \ \HE_k\left( \frac{\bar{z}}{\sqrt{\tau}} \right) \ .
        \label{Eq:AvdetMmu0StepCOMPLEX}
    \end{align}
    Recalling the definition of the density in Eq. \eqref{eq:eGinUE_density}, we obtain
    \be
        \bigg\langle \det M \bigg\rangle_{X_1} \ \bigg\vert_{\mu=0} = (N-1)! \, \pi \sqrt{1-\tau^2}\exp \left[ \frac{1}{1-\tau^2}\bigg( \vert z \vert^2 - \tau \ \text{Re}(z)^2 \bigg) \right] \rho_{N}^{\textup{(eGinUE,c)}}(z)
    \ee
    as desired for the first relation in Corollary \ref{cor:AvdetX1}. For the second relation, we start with Eq. \eqref{Eq:detMreseGinUEb}, replace $N\rightarrow N-1$, differentiate over $\mu$ using the formula \cite[Eq. (3.46)]{WCF}
    \be\label{diffLegend}
        \frac{\partial}{\partial \mu} \left( \frac{\mu+R+f(z,v_1,v_2) }{\sqrt{\big(R + f(z,v_1,v_2) \big)^2 + \mu^2 +2\mu \big(f(z,v_1,v_2) -R\big)}} \right) \Bigg \vert_{\mu = 0} = \frac{2R}{\Big(R + f(z, v_1,v_2) \Big)^2}
    \ee
    and employ $P_{N-1}^\prime (1)= N (N-1)/2$, which follows per induction from \cite[8.939.6]{Grad}. Collecting all terms we find
    \begin{align}
        &\frac{\partial}{\partial \mu} \bigg\langle \det M \bigg\rangle_{X_1} \, \bigg\vert_{\mu=0} = \frac{N-1}{2\pi}  \int_{\mathbb{R}^2} dv_1 \ dv_2 \ e^{-\frac{v_1^2   + v_2^2}{2}}  \int_0^\infty dR \  e^{-R} \  \Big( R + f(z,v_1,v_2) \Big)^{N-3} \big[ (N-1) R + f(z,v_1,v_2) \big] \nonumber \\
        &= \frac{N-1}{2\pi}  \int_{\mathbb{R}^2} dv_1  dv_2 \ e^{-\frac{v_1^2   + v_2^2}{2}} \int_0^\infty dR \  e^{-R} \   \bigg[ (R + f(z,v_1,v_2))^{N-2}  + (N-2) (R + f(z,v_1,v_2))^{N-3} \, R   \bigg] \ .
    \end{align}
    From Eqs. \eqref{Eq:AvdetMemu0COMPLEX} and \eqref{Eq:AvdetmuCOMPLEX}, we immediately identify that the first of the two terms above is proportional to $\rho_{N-1}^{\textup{(eGinUE,c)}}(z)$. The second term can be rewritten via the binomial theorem and the definition of the Hermite polynomials in Eq. \eqref{Eq:HermitePoly}, allowing us to obtain
\be
\begin{split}
	&\frac{(N-1)(N-2)}{2\pi}  \int_{\mathbb{R}} dv_1 \ \int_{\mathbb{R}} dv_2 \ e^{-\frac{1}{2} \left( v_1^2   + v_2^2 \right)} \int_0^\infty dR \ R \ e^{-R} \    \Big(R + f(z, v_1, v_2) \Big)^{N-3} \\
	&= \frac{(N-1)(N-2)}{2 \pi} \ \sum_{k=0}^{N-3} \left( \begin{matrix}
		N-3 \\		k\\
	\end{matrix} \right) \int_0^\infty dR e^{-R} R^{N-2-k} \int_{\mathbb{R}^2} dv_1 \, dv_2 e^{-\frac{v_1^2   + v_2^2}{2} } \Big( z + i \sqrt{\tau} v_1 \Big)^k \Big( \bar{z} - i \sqrt{\tau} v_2 \Big)^k \\
	&= (N-1)! \sum_{k=0}^{N-3} \frac{\tau^k}{k!} (N - 2 - k) \HE_k \left( \frac{z}{\sqrt{\tau}} \right) \HE_k \left( \frac{\bar{z}}{\sqrt{\tau}} \right) \ ,
\end{split}
\ee
which is proportional to $(N-2)\rho_{N-2}^{\textup{(eGinUE,c)}}(z)-R_{N-3}$. Combining the two terms we finally find
\be\label{res3ell2COMPLEX}
\begin{split}
    \frac{\partial}{\partial \mu} \bigg\langle \det M \bigg\rangle_{X_1} \ \bigg\vert_{\mu=0} &= \Gamma(N) \pi \sqrt{1-\tau^2} e^{ \frac{\vert z \vert^2 - \tau \ \textup{Re}(z)^2}{1-\tau^2}}  \bigg[ \rho_{N-1}^{\textup{(eGinUE,c)}}(z) + (N-2) \rho_{N-2}^{\textup{(eGinUE,c)}}(z) - R_{N-3}  \bigg] \ ,
\end{split}
\ee
as required.
\end{proof}

\noindent
To finish the proof of Theorem \ref{thm:MainResCOMPLEX} we insert our results for the averages over $X_1$ from Corollary \ref{cor:AvdetX1} into Eq. \eqref{Eq:avOvleGinUEstep3}. After some simple manipulation of the constants one arrives at the required result of
\begin{equation}
    \mathcal{O}^{(\eGinUE,c)}_{N}(z)  =  \rho_{N}^{\textup{(eGinUE,c)}}(z) + (1-\tau^2)\bigg[ \rho_{N-1}^{\textup{(eGinUE,c)}}(z) + (N-2) \rho_{N-2}^{\textup{(eGinUE,c)}}(z) - R_{N-3}  \bigg] \ .
\end{equation}

\subsection{Incomplete real Schur decomposition}
\label{subsec:inSchurREAL}

Now we move onto our analysis of the real-valued eGinOE, where the eigenvalues may come either in complex conjugate pairs or be real numbers. The latter case is not considered in this work as it has already been extensively studied in \cite{FT}. To treat complex conjugate pairs we employ \emph{incomplete real Schur decomposition} (IRSD) following the influential paper by Edelman \cite{Edelman} and restate the necessary information below.\\

\noindent
Let $z=x+iy$ ($y\neq 0$) be a complex eigenvalue of the $N\times N$ real, non-Hermitian matrix $X$, with $N \geq 2$. We find the IRSD of $X$ with respect to $z$ in \cite{Edelman}, i.e.
\be\label{incomplSchur}
X = Q \widetilde{X} Q^T \quad \text{with} \quad \widetilde{X} = \left(
\begin{matrix}
	\begin{matrix}
		x & b \\
		-c & x \\
	\end{matrix} & W \\
	0 & X_2 \\
\end{matrix}
\right)\text{,} \ bc >0 \text{,}  \ b \geq c \text{,} \ y = \sqrt{bc} \ ,
\ee
where $Q$ is a real symmetric matrix with $Q^T = Q$ and $Q^2 = \eins_N$. The block matrix $W$ is real and has size $2\times (N-2)$, i.e. $W= \left(\mathbf{w}_1 \ \text{,} \ \mathbf{w}_2 \right)^T$, where $\mathbf{w}_1$, $\mathbf{w}_2$ are vectors with $N-2$ real independent entries. The matrix $X_2$ is an $(N-2) \times (N-2)$ dimensional real matrix obeying the same statistical properties as $X$. This, in principle could be further decomposed into blocks \cite{Edelman}. As the eGinOE is an extension of the GinOE, the derivation of the representation of the left and right eigenvectors is exactly the same in both ensembles. Thus, we reuse results from \cite{WCF}, where the case of complex eigenvalues in the GinOE was considered. Specifically we utilise their Eqs. (3.5) to (3.9) to specify the self-overlap as
\be\label{overlapcomp}
\begin{split}
	\mathcal{O}_z &= \left( {\mathbf{x}}_L^\dagger {\mathbf{x}}_L \right) \left( {\mathbf{x}}_R^\dagger {\mathbf{x}}_R \right)
	=\frac{1}{4} \left( 2+ \frac{c^2+b^2}{bc} \right)+ \frac{1}{2} \left(1+\frac{c}{b} \right) \left( \mathbf{b}_{N-2}^\dagger \mathbf{b}_{N-2} \right)\ , 
\end{split}
\ee
where $\mathbf{b}_{N-2}$ is given by
\be\label{bcond}
    \mathbf{b}_{N-2}^\dagger = \frac{1}{\sqrt{2}}  \left(\mathbf{w}_1^T - i \sqrt{\frac{b}{c}} \mathbf{w}_2^T \right) \left(z\eins_{N-2} - X_2 \right)^{-1} \ .
\ee
Similarly to the previous section one can define the matrix 
\be\label{Bmat1}
    B \equiv  \left(z\eins_{N-2} - X_2 \right)^\dagger \left(z\eins_{N-2} - X_2 \right)
\ee 
so as to write
\be\label{Eq:bbdaggereGinOE}
\begin{split}
	\left( \mathbf{b}_{N-2}^\dagger \mathbf{b}_{N-2} \right) &= \frac{1}{2} \mathbf{w}_1^T \ B^{-1} \ \mathbf{w}_1 +\frac{1}{2} \exp \left[ 2 \ \text{arcsinh}\left( \frac{\delta}{2y} \right) \right] \mathbf{w}_2^T \  B^{-1} \ \mathbf{w}_2 \\
	&+\frac{1}{2} \ i \ \exp \left[ \text{arcsinh}\left( \frac{\delta}{2y} \right) \right]  \bigg[ \mathbf{w}_1^T \  B^{-1} \ \mathbf{w}_2 - \mathbf{w}_2^T \  B^{-1} \ \mathbf{w}_1 \bigg] \ .
\end{split}
\ee
Now  we change the variables from $b$ and $c$ to $y= \sqrt{bc}$ and $\delta = b-c>0$ which allows us to rewrite the mean self-overlap as
\be\label{overlapcomp2}
    \begin{split}
        \mathcal{O}_z 
        = \widetilde{c}_1 + \widetilde{c}_2 \   \left( \mathbf{b}_{N-2}^\dagger \mathbf{b}_{N-2} \right) \ ,
    \end{split}
\ee
where the constants are $y-$dependent and given by
\be\label{prefacs}
\begin{split}
	\widetilde{c}_1 &= \frac{1}{4} \left( 2+ \frac{\delta^2 + 2y^2}{y^2} \right)  \quad \text{and} \quad \widetilde{c}_2 =  \frac{1}{2} \left(1+ \exp \left[-2 \ \text{arcsinh}\left(\frac{\delta}{2y} \right) \right] \right) \ .
\end{split}
\ee
The equation for the Jacobian of the IRSD (before the last change of variables) can be found in \cite{Edelman}. For our purposes we integrate over the Stiefel manifold originating from the Householder reflection matrix $Q$. After straightforward manipulations, this yields the probability measure defined in terms of the new variables as
\begin{align}
	P_{\text{eGinOE}}(X)dX &= C_{N,\tau}^\prime \  \det \left[ (x\eins_{N-2} - X_2)^2 +y^2 \eins_{N-2} \right] \exp \left[ -\frac{1}{2(1-\tau^2)}\Tr \left( X_2 X_2^T - \tau X_2^2 + WW^T \right) \right]  \nonumber \\
	&\times \exp \left[ -\frac{x^2}{1+\tau} -\frac{y^2}{1-\tau} -\frac{\delta^2}{2(1-\tau^2)} \right] \ \frac{2y\delta}{\sqrt{\delta^2 + 4y^2}} \ dx \ dy \ d\delta \ dW \ dX_2 \ . \label{Eq:JPDF_eGinOE_IRSD}
\end{align}
The integral over the Stiefel manifold, taken from \cite[Eq. (20)]{Edelman}, combined with $C_{N,\tau}$ (see Definition \ref{def:eGinOE}) gives the new constant:
\be\label{Eq:NEWconsts}
\begin{split}
	C_{N,\tau}^\prime 
	&= 2\ \frac{(2\pi)^{-\frac{1}{2}(N-1)^2}}{\sqrt{2\pi} \, \Gamma(N-1)} \left( 1 + \tau \right)^{-\frac{N(N+1)}{4}} \left( 1- \tau \right)^{-\frac{N(N-1)}{4}} \ ,
\end{split}
\ee
where we have multiplied by the  factor of $2$ to count each complex eigenvalue $z=x+iy$ individually, for all $y\neq 0$. This is in contrast to some results in the literature, where only complex eigenvalue pairs $x\pm iy$, with $y>0$, are considered.

%%%%%%%%%%%%%%%%%%%%%%%%%%%%%%%%%%%%%%%%%%%%%%%%%%%%%%%%%%%%%%%%%%%%%%
%%%%%%%%%%%%%%%%%%%%%%%%%%%%%%%%%%%%%%%%%%%%%%%%%%%%%%%%%%%%%%%%%%%%%%

\subsection{Proof of Theorem \ref{thm:MainRes} - eGinOE}\label{subsec:ProoffiniteN}

\noindent
The mean self-overlap of eigenvectors in the eGinOE can be written as
\be\label{Eq:Overlaps}
\begin{split}
     \mathcal{O}^{(\eGinOE,c)}_{N}(z) &= \bigg\langle \frac{1}{N}\sum_{n=1}^N \mathcal{O}_{nn} \ \delta(z-z_n) \bigg\rangle_{\text{eGinOE},N} = \bigg\langle \mathcal{O}_{\widetilde{z}} \ \delta(z-\widetilde{z}) \bigg\rangle_{X} \ .
\end{split}
\ee
To evaluate this object we express $\mathcal{O}_z$ using Eqs. \eqref{overlapcomp2} and \eqref{prefacs} and then average this over the measure in Eq. \eqref{Eq:JPDF_eGinOE_IRSD}. As the first step we integrate over $dW = d\mathbf{w}_1 d\mathbf{w}_2$ by rewriting $\Tr WW^T = \mathbf{w}_1^T \mathbf{w}_1 + \mathbf{w}_2^T \mathbf{w}_2$. To do so we define the normalised average of any function $\mathcal{A}(\mathbf{w})$ with respect to $\mathbf{w}=\mathbf{w}_1$ or $\mathbf{w}=\mathbf{w}_2$ as
\be\label{Eq:WaverageDef}
    \bigg\langle \mathcal{A}(\mathbf{w}) \bigg\rangle_{\mathbf{w}} \equiv  \frac{(1-\tau^2)^{-\frac{N-2}{2}}}{(2\pi)^{\frac{N-2}{2}}} \int d\mathbf{w} \exp \left[ -\frac{1}{2(1-\tau^2)} \mathbf{w}^T \mathbf{w} \right] \mathcal{A}(\mathbf{w}) \ .
\ee
Introducing the constant $\widetilde{C}_{N,\tau} = C_{N,\tau}^\prime (2\pi)^{N-2} (1-\tau^2)^{N-2}$ then allows us to write
\begin{align}
    \bigg\langle \mathcal{O}_{\widetilde{z}} \ \delta(z-\widetilde{z}) \bigg\rangle_{X} &= \widetilde{C}_{N,\tau} \ e^{ - \frac{x^2}{1+\tau} - \frac{y^2}{1-\tau} } \ \int d\delta \ \frac{2y \delta}{\sqrt{\delta^2 + 4y^2}} \ e^{ -\frac{1}{2(1-\tau^2)} \delta^2  }    \int dX_2 \ e^{ -\frac{1}{2(1-\tau^2)}\Tr \left( X_2 X_2^T - \tau X_2^2 \right) } \nonumber \\
    &\times  \det \Big[ (x\eins_{N-2} - X_2)^2 + y^2 \eins_{N-2} \Big] \bigg\langle \bigg\langle \widetilde{c}_1 + \widetilde{c}_2 \   \left( \mathbf{b}_{N-2}^\dagger \mathbf{b}_{N-2} \right)    \bigg\rangle_{\mathbf{w}_2} \bigg\rangle_{\mathbf{w}_1}  \label{Eq:avOvleGinOEstep} \ .
\end{align}
We again adapt the statements from \cite{WCF}, only highlighting the necessary changes to Lemmas 3.1 and 3.2.

\begin{lem}\label{lem:Wav1}
Let $Y$ be an $N-2$ dimensional matrix and $\mathbf{w}_1$, $\mathbf{w}_2$ be two vectors, each of length $N-2$, with independent real entries. Then
\be\label{Eq:LemWaverage1}
\begin{split}
    &\bigg\langle \mathbf{w}^T Y \mathbf{w} \bigg\rangle_{\mathbf{w}} = (1-\tau^2) \Tr Y \quad \textup{and} \quad \bigg\langle \bigg\langle    \mathbf{w}_1^T Y \mathbf{w}_2  \bigg\rangle_{\mathbf{w}_2} \bigg\rangle_{\mathbf{w}_1} = \bigg\langle \bigg\langle     \mathbf{w}_2^T Y \mathbf{w}_1 \bigg\rangle_{\mathbf{w}_2} \bigg\rangle_{\mathbf{w}_1} = 0 \ ,
\end{split}
\ee
where $\langle \cdot \rangle_{\bm w}$ is defined in Eq. \eqref{Eq:WaverageDef} and $\mathbf{w} \in \{ \mathbf{w}_1, \mathbf{w}_2\}$.
\end{lem}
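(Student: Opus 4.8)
\textit{Proof proposal.} The plan is to observe that the normalised average in Eq.~\eqref{Eq:WaverageDef} is precisely the expectation with respect to a real Gaussian vector $\mathbf{w}\in\mathbb{R}^{N-2}$ whose components are independent, mean zero, and have common variance $1-\tau^2$. First I would check that the prefactor $(1-\tau^2)^{-(N-2)/2}(2\pi)^{-(N-2)/2}$ is exactly the normalisation making $\langle 1\rangle_{\mathbf{w}}=1$, and then record the only two moment facts that are needed: all odd moments vanish, and $\langle w_i w_j\rangle_{\mathbf{w}} = (1-\tau^2)\,\delta_{ij}$. Both follow by reducing to the one-dimensional integrals $\int_{\mathbb{R}}dt\,e^{-t^2/(2\sigma^2)}\{1,t,t^2\}$ with $\sigma^2=1-\tau^2$. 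This is the single place where the parameter $\tau$ enters, and is the only modification relative to Lemmas~3.1 and~3.2 of \cite{WCF}.

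Next I would prove the quadratic-form identity by expanding inside the finite sum and interchanging summation with the (linear) average: $\langle \mathbf{w}^T Y \mathbf{w}\rangle_{\mathbf{w}} = \sum_{i,j=1}^{N-2} Y_{ij}\,\langle w_i w_j\rangle_{\mathbf{w}} = (1-\tau^2)\sum_{i=1}^{N-2}Y_{ii} = (1-\tau^2)\Tr Y$, noting that no symmetry assumption on $Y$ is used. For the bilinear cross terms, I would use that the iterated average over $\mathbf{w}_1$ and $\mathbf{w}_2$ factorises because the two vectors are independent, so that $\langle\langle \mathbf{w}_1^T Y \mathbf{w}_2\rangle_{\mathbf{w}_2}\rangle_{\mathbf{w}_1} = \sum_{i,j}Y_{ij}\,\langle w_{1,i}\rangle_{\mathbf{w}_1}\langle w_{2,j}\rangle_{\mathbf{w}_2} = 0$ since each Gaussian vector has vanishing first moment; the identical argument disposes of $\langle\langle\mathbf{w}_2^T Y\mathbf{w}_1\rangle_{\mathbf{w}_2}\rangle_{\mathbf{w}_1}$.

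There is no genuine obstacle here: the statement is a routine Gaussian-integral computation. The only point deserving care is bookkeeping consistency, i.e.\ checking that the normalisation constant in Eq.~\eqref{Eq:WaverageDef} and the component variance $1-\tau^2$ are read off together correctly (so that the factors of $2\pi$ and $1-\tau^2$ cancel as claimed); once that is verified, both identities drop out immediately. Accordingly I would present this lemma with only the variance change highlighted and otherwise refer the reader to the corresponding computation in \cite{WCF}.
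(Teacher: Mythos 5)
Your proposal is correct and matches the paper's treatment: the paper simply adapts Lemmas 3.1 and 3.2 of \cite{WCF}, leaving the same routine Gaussian-moment computation (with component variance $1-\tau^2$ read off from Eq.~\eqref{Eq:WaverageDef}) to the reader. Your expansion of the quadratic form via $\langle w_i w_j\rangle_{\mathbf{w}}=(1-\tau^2)\delta_{ij}$ and the vanishing of the cross terms by independence and zero mean is exactly the intended argument.
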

\begin{lem}\label{lem:Wav2}
    The average over $\mathbf{w}_1$ and $\mathbf{w}_2$ of the product $\mathbf{b}_{N-2}^\dagger \mathbf{b}_{N-2}$, defined in Eq. \eqref{Eq:bbdaggereGinOE}, is given by
    \be\label{Eq:LemWaverage2}
    \begin{split}
        \bigg\langle \bigg\langle \widetilde{c}_1 + \widetilde{c}_2 \   \left( \mathbf{b}_{N-2}^\dagger \mathbf{b}_{N-2} \right) \bigg\rangle_{\mathbf{w}_2} \bigg\rangle_{\mathbf{w}_1}
        &= \frac{1}{2} \left(2 + \frac{\delta^2}{2y^2}  \right) \bigg[ 1 +(1-\tau^2) \Tr \left[ B^{-1} \right] \bigg] \ .
    \end{split}
    \ee
\end{lem}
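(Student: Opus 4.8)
The plan is to reduce the double Gaussian average to a single application of Lemma \ref{lem:Wav1} and then collapse the $y$-dependent prefactors using the hyperbolic identity $\cosh\!\big(2\,\text{arcsinh}(t)\big) = 1 + 2t^2$. First I would substitute the explicit expression for $\mathbf{b}_{N-2}^\dagger \mathbf{b}_{N-2}$ from Eq.~\eqref{Eq:bbdaggereGinOE}, which is a linear combination of the quadratic forms $\mathbf{w}_1^T B^{-1}\mathbf{w}_1$ and $\mathbf{w}_2^T B^{-1}\mathbf{w}_2$ together with the two bilinear forms $\mathbf{w}_1^T B^{-1}\mathbf{w}_2$ and $\mathbf{w}_2^T B^{-1}\mathbf{w}_1$. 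By the second and third identities in Lemma \ref{lem:Wav1} the bilinear (cross) terms vanish under $\langle\langle\,\cdot\,\rangle_{\mathbf{w}_2}\rangle_{\mathbf{w}_1}$, while the first identity, applied with $Y = B^{-1}$, gives $\langle \mathbf{w}_j^T B^{-1}\mathbf{w}_j\rangle_{\mathbf{w}_j} = (1-\tau^2)\,\Tr B^{-1}$ for $j=1,2$. Hence
\[
  \bigg\langle\bigg\langle \mathbf{b}_{N-2}^\dagger \mathbf{b}_{N-2}\bigg\rangle_{\mathbf{w}_2}\bigg\rangle_{\mathbf{w}_1}
  = \tfrac12\Big(1 + e^{2\,\text{arcsinh}(\delta/2y)}\Big)\,(1-\tau^2)\,\Tr B^{-1}.
\]

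Next I would combine this with the $y$-dependent constants $\widetilde c_1,\widetilde c_2$ from Eq.~\eqref{prefacs}. Writing $a \equiv \text{arcsinh}(\delta/2y)$, so that $\sinh a = \delta/2y$ and $\cosh 2a = 1 + 2\sinh^2 a = 1 + \delta^2/(2y^2)$, one finds $(1 + e^{-2a})(1 + e^{2a}) = 2 + 2\cosh 2a = 4 + \delta^2/y^2$. Therefore $\widetilde c_2\cdot\tfrac12(1+e^{2a}) = \tfrac14\big(4 + \delta^2/y^2\big) = \tfrac12\big(2 + \delta^2/(2y^2)\big)$, and separately $\widetilde c_1 = \tfrac14\big(2 + (\delta^2 + 2y^2)/y^2\big) = \tfrac12\big(2 + \delta^2/(2y^2)\big)$. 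Since the constant term $\widetilde c_1$ and the coefficient of $(1-\tau^2)\Tr B^{-1}$ are both equal to the common factor $\tfrac12\big(2 + \delta^2/(2y^2)\big)$, the two pieces assemble into $\tfrac12\big(2 + \delta^2/(2y^2)\big)\big[1 + (1-\tau^2)\Tr B^{-1}\big]$, which is the claimed identity.

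The computation is otherwise routine; the only point that requires a moment's care is recognizing that the apparently asymmetric structure of the prefactors — $\widetilde c_2$ carrying $e^{-2a}$ while the averaged overlap produces $e^{+2a}$ — is precisely what is needed for the product to symmetrize into $\cosh 2a$ and thereby match $\widetilde c_1$, so that $B^{-1}$ enters exactly in the combination $1 + (1-\tau^2)\Tr B^{-1}$ that mirrors Lemma \ref{lem:Wav2COMPLEX} in the complex case.
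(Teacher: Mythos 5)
Your proposal is correct and follows essentially the same route as the paper, which (referring back to the GinOE computation in \cite{WCF}) obtains the result by inserting Eq.~\eqref{Eq:bbdaggereGinOE}, killing the cross terms and evaluating the quadratic forms via Lemma~\ref{lem:Wav1}, and then observing that the hyperbolic prefactors collapse to the common factor $\tfrac12\big(2+\delta^2/(2y^2)\big)$. Your explicit verification that $\widetilde c_1$ and $\widetilde c_2\cdot\tfrac12(1+e^{2\,\mathrm{arcsinh}(\delta/2y)})$ coincide is exactly the algebra the paper leaves to the reader.
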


\noindent
Lemma \ref{lem:Wav2} allows us to perform the averages over $\bm w_1$ and $\bm w_2$ in Eq. \eqref{Eq:avOvleGinOEstep} yielding
\begin{align}
    \bigg\langle \mathcal{O}_{\widetilde{z}} \ \delta(z-\widetilde{z}) \bigg\rangle_{X} 
    &= \widetilde{C}_{N,\tau} e^{ - \frac{x^2}{1+\tau} - \frac{y^2}{1-\tau} } \ \int d\delta \ \frac{y \delta}{\sqrt{\delta^2 + 4y^2}} \ e^{ -\frac{1}{2(1-\tau^2)} \delta^2  } \  \left(2 + \frac{\delta^2}{2y^2}  \right)   \int dX_2 \ e^{-\frac{1}{2(1-\tau^2)}\Tr \left( X_2 X_2^T - \tau X_2^2 \right) } \nonumber \\ 
    &\times \det \left[ (x\eins_{N-2} - X_2)^2 +y^2 \eins_{N-2} \right] \bigg[ 1 + (1-\tau^2) \Tr \left[ B^{-1} \right] \bigg] \ . \label{Eq:avOvleGinOEstep2}
\end{align}
We now use the definition of the  matrix $B$ from Eq. (\ref{Bmat}1) and notice that
\be\label{detformulas}
\begin{split}
    \det B 
    &= \det \left[ (x\eins_{N-2} - X_2)^2 +y^2 \eins_{N-2} \right] = \det \left[ \begin{matrix}
    0 & i(z\eins_{N-2} - X_2)  \\
    i(\bar{z} \eins_{N-2} - X_2^T) & 0 \\
    \end{matrix}  \right] \ , 
\end{split}
\ee
which can be used to evaluate the first term of Eq. \eqref{Eq:avOvleGinOEstep2}. To evaluate the second term we once again appeal to Eq. \eqref{deriv} and so we introduce a block-matrix $M$ via
\be\label{Eq:propM1mat}
    M \equiv \left( \begin{matrix}
        \sqrt{\mu} \eins_{N} & i\left( z \eins_N - X \right) \\
        i\left( \bar{z} \eins_N - X^T \right) & \sqrt{\mu} \eins_N \\
    \end{matrix} \right) \ ,
\ee
where each block is of size $N\times N$ and $\mu$ is a real parameter. It is easy to see that with the replacement $N\rightarrow N-2$ and $\mu=0$ we have the same relations as in the case of the eGinUE, cf. Eq. \eqref{detMpartialdetM}. Furthermore, we define the normalised average of any function $\mathcal{A}(X_2)$ over $X_2$ by writing
\be\label{Eq:X2av}
      \bigg\langle \mathcal{A}(X_2) \bigg\rangle_{X_2} \equiv C_{N-2,\tau}^{-1} \int dX_2 \ \exp \left[-\frac{1}{2(1-\tau^2)}\Tr \left( X_2 X_2^T - \tau X_2^2 \right) \right] \ \mathcal{A}(X_2) \ ,
\ee
so that Eq. \eqref{Eq:avOvleGinOEstep2} becomes
\be\label{Eq:avOvleGinOEstep3}
\begin{split}
     \bigg\langle \mathcal{O}_{\widetilde{z}} \ \delta(z-\widetilde{z}) \bigg\rangle_{X} 
     &= \widehat{C}_{N,\tau} e^{ - \frac{1}{1+\tau} x^2  - \frac{1}{1-\tau} y^2 } \ \int d\delta \ \frac{y \delta}{\sqrt{\delta^2 + 4y^2}} \ e^{ -\frac{1}{2(1-\tau^2)} \delta^2  } \ \left(2 + \frac{\delta^2}{2y^2}  \right)  \\ 
     &\times \bigg[ \  \bigg\langle \det M \bigg\rangle_{X_2} \ \bigg\vert_{\mu = 0} + (1-\tau^2) \frac{\partial}{\partial \mu }\bigg\langle \det M \bigg\rangle_{X_2} \ \bigg\vert_{\mu = 0} \ \bigg] \ ,
\end{split}
\ee
with the new constant $\widehat{C}_{N,\tau} \equiv \widetilde{C}_{N,\tau} C_{N-2,\tau}$. In evaluating the averages over matrices $X_2$ we rely on the following

\begin{prop}\label{prop:X2av}
     Let $M$ be the matrix defined in Eq. \eqref{Eq:propM1mat}, where $z$ is a complex number and $X$ is an $N\times N$ eGinOE matrix. Then
    \begin{align}
        \bigg\langle \det M \bigg\rangle_{X} &= \frac{1}{4\pi^2} \int_{\mathbb{C}} du d\bar{u} \ \int_{\mathbb{R}^2} dv_1 \, dv_2 \ e^{  -\vert u \vert^2 -\frac{v_1^2   + v_2^2 }{2} } \int_0^{\infty} dR e^{-R} \Bigg[  \big(R + g(z,u,v_1,v_2) \big)^2  +  \mu^2   \label{Eq:detMreseGinOEb}   \\
        & + 2\mu \big( g(z,u,v_1,v_2) - R \big)  \Bigg]^{\frac{N}{2}} P_N\left(\frac{\mu + R + g(z,u,v_1,v_2)}{\sqrt{\big(R + g(z,u,v_1,v_2) \big)^2 + \mu^2 +2\mu \big(g(z,u,v_1,v_2) - R\big)}}\right) \ ,  \nonumber
    \end{align}
    where $P_N(t) $ is a Legendre polynomial defined in Eq. \eqref{LegPolyint} and $g(z,u,v_1,v_2) = \tau \ \vert u \vert^2  + \left( \bar{z} -i v_2 \sqrt{\tau} \right) \left( z +i v_1 \sqrt{\tau} \right)$.
\end{prop}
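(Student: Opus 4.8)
The plan is to mirror the proof of Proposition~\ref{prop:Xav} almost verbatim, the single structural novelty being that the Gaussian average over a \emph{real} eGinOE matrix produces one extra quartic term in the Grassmann variables, which is then removed by one additional Hubbard--Stratonovich transformation. Exactly as in the derivation leading to Eq.~\eqref{eq:detM_eGinUE_ave}, I would first write $\det M$ as a Berezin integral via Eq.~\eqref{detIdentGrass}, move $X$ into the exponent using $\bm\phi_i^T X\bm\chi_j = -\Tr[X\bm\chi_j\bm\phi_i^T]$, and reduce the task to computing $\langle e^{-i\Tr[X\bm\chi_2\bm\phi_1^T]-i\Tr[X^T\bm\chi_1\bm\phi_2^T]}\rangle_X$, with $X^\dagger$ now replaced by $X^T$ since $X$ is real.

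The new input replacing Eq.~\eqref{eq:ave_exp_Tr_GNell2} is the eGinOE averaging identity. Using the real covariance $\mathbb E(X_{ij}X_{kl}) = \delta_{ik}\delta_{jl}+\tau\,\delta_{il}\delta_{jk}$ (which also correctly reproduces $\mathbb E(X_{ii}^2)=1+\tau$), a routine Gaussian computation gives
\[
  \langle e^{-\Tr[XA+X^TB]}\rangle_X = \exp\!\Big[\tfrac12\Tr(AA^T)+\tfrac12\Tr(BB^T)+\Tr(AB)+\tfrac{\tau}{2}\Tr(A^2)+\tfrac{\tau}{2}\Tr(B^2)+\tau\Tr(A^TB)\Big].
\]
Taking $A=i\bm\chi_2\bm\phi_1^T$ and $B=i\bm\chi_1\bm\phi_2^T$, the outer-product structure kills $\Tr(AA^T)=\Tr(BB^T)=0$, the terms $\Tr(AB),\Tr(A^2),\Tr(B^2)$ reproduce precisely the exponent of Eq.~\eqref{EqXell2}, and the genuinely new contribution is $\tau\Tr(A^TB)=-\tau\,(\bm\phi_1^T\bm\phi_2)(\bm\chi_1^T\bm\chi_2)$. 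I would linearise this extra quartic term by a fourth Hubbard--Stratonovich transformation over a complex variable $u$, in the spirit of Eq.~\eqref{ellHubbStratauxCOMPLEX}, producing terms linear in $\bm\phi_1^T\bm\phi_2$ and $\bm\chi_1^T\bm\chi_2$ with weights $\propto\sqrt\tau\,u$, $\propto\sqrt\tau\,\bar u$ and Gaussian weight $e^{-|u|^2}$; the three transformations of Eq.~\eqref{ellHubbStratauxCOMPLEX} are carried over unchanged.

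Antisymmetrising with $\phi\chi=\tfrac12(\phi\chi-\chi\phi)$ and invoking Eq.~\eqref{PfaffIdentGrass}, the Grassmann integral becomes $\Pf H$ for an antisymmetric $4N\times 4N$ matrix $H$. Unlike the eGinUE case, the couplings $\bm\phi_1^T\bm\phi_2$ and $\bm\chi_1^T\bm\chi_2$ mean $H$ is no longer block-off-diagonal, so instead of the reduction used in Eqs.~\eqref{MPfaffMatCOMPLEX}--\eqref{MPfaffMat4COMPLEX} I would write $H=\mathcal F\otimes\eins_N$ for a $4\times 4$ antisymmetric matrix $\mathcal F$ whose nonzero entries are built from the scalars $\sqrt\mu+q$, $\sqrt\mu+\bar q$, $iz-v_1\sqrt\tau$, $i\bar z+v_2\sqrt\tau$, $\sqrt\tau\,u$, $\sqrt\tau\,\bar u$, and use $\Pf(\mathcal F\otimes\eins_N)=\pm(\Pf\mathcal F)^N$ (a standard fact, generalising the identity of~\cite{DVarjas} used in the eGinUE proof). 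The $4\times4$ Pfaffian $\Pf\mathcal F=p_{12}p_{34}-p_{13}p_{24}+p_{14}p_{23}$ then collapses to $(\sqrt\mu+\bar q)(\sqrt\mu+q)+g(z,u,v_1,v_2)$ with $g=\tau|u|^2+(\bar z-iv_2\sqrt\tau)(z+iv_1\sqrt\tau)$, the new product $p_{12}p_{34}$ supplying exactly the term $\tau|u|^2$ in $g$.

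From here the computation is the one already carried out in the eGinUE case: passing to polar coordinates $q=\sqrt R e^{i\theta}$ gives $(\sqrt\mu+\bar q)(\sqrt\mu+q)=\mu+R+2\sqrt{\mu R}\cos\theta$, and using $(\mu+R+g)^2-4\mu R=(R+g)^2+\mu^2+2\mu(g-R)$ one recognises $[\,\mu+R+g+2\sqrt{\mu R}\cos\theta\,]^N$ as $[(R+g)^2+\mu^2+2\mu(g-R)]^{N/2}$ times the Legendre integral~\eqref{LegPolyint} evaluated at $t=(\mu+R+g)/\sqrt{(R+g)^2+\mu^2+2\mu(g-R)}$. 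Retaining the $u$-integral (which, unlike the $q$-integral, is not absorbed into $R$) and collecting the normalisation constants then produces Eq.~\eqref{Eq:detMreseGinOEb}. I expect the only delicate point to be sign and normalisation bookkeeping in the real Gaussian average and in the $4\times4$ Pfaffian --- in particular checking that $u$ enters solely through $\tau|u|^2$ added to the eGinUE function $f$, while the dependence on $\mu$, $q$ and $R$ is identical to the one already established in Proposition~\ref{prop:Xav}; everything else is a mechanical repetition of that proof.
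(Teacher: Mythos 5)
Your proposal is correct and follows essentially the same route as the paper's proof: Berezin representation of $\det M$, Gaussian average over $X$ yielding the quartic Grassmann exponent of Eq.~\eqref{EqXell}, a fourth Hubbard--Stratonovich variable $u$ for the extra term, reduction to $\Pf(F\otimes\eins_N)=(\Pf F)^N$ via the identity of \cite{DVarjas}, and polar coordinates $q=\sqrt{R}e^{i\theta}$ with the Legendre integral \eqref{LegPolyint}. The only cosmetic difference is that you keep the two source matrices $A,B$ separate and use a two-source covariance identity, whereas the paper combines them into a single $A=i(\bm\chi_2\bm\phi_1^T-\bm\phi_2\bm\chi_1^T)$ and applies Eq.~\eqref{eq:ave_exp_Tr_GNell}; both give the same quartic exponent.
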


\begin{proof}
    We use Eq. \eqref{detIdentGrass} to express the determinant as an integral over anticommuting variables as 
    \be
        \bigg\langle \det M \bigg\rangle_{X} = (-1)^N\expval{ \int D \bm \Phi D \bm \chi \exp{ -  \left( \begin{matrix} \bm \phi_1^T & \bm \phi_2^T \end{matrix} \right) \left( \begin{matrix} \sqrt{\mu} \eins_N & i \left( z \eins_N - X \right) \\ i \left( \bar{z} \eins_N - X^T \right) & \sqrt{\mu} \eins_N 
        \end{matrix} \right) \left( \begin{matrix} \bm \chi_1 \\ \bm \chi_2 \end{matrix} \right)
        }  }_{X} \ ,
    \ee
    which upon using $\bm \phi_i^T X \bm \chi_j = - \Tr \left[ X \bm \chi_j \bm \phi_i^T \right]$ and writing both traces in terms of $X$ yields
    \begin{align}
        \bigg\langle \det M \bigg\rangle_{X} 
        &= (-1)^N  \int D \bm \Phi D \bm \chi \, e^{- \sqrt{\mu} \left( \bm \phi_1^T \bm \chi_1 + \bm \phi_2^T \bm \chi_2\right)  - i \left( z \bm \phi_1^T \bm \chi_2 + \bar{z} \bm \phi_2^T \bm \chi_1\right)   } \bigg\langle 
        e^{- i \Tr \left[ X  (\bm \chi_2 \bm \phi_1^T - \bm \phi_2 \bm \chi_1^T )  \right]} \bigg\rangle_{X}  \ .
        \label{eq:detM_eGinOE_step1}
    \end{align}
    Defining $A =  i \left( \bm \chi_2 \bm \phi_1^T - \bm \phi_2 \bm \chi_1^T \right)$ we then proceed by utilising the following identity \cite{FT}
    \begin{equation}
        \expval{e^{-\Tr[X A] }}_{X} = \exp{ \frac{1}{2} \Tr[ A A^T + \tau A^2]  } \ ,
        \label{eq:ave_exp_Tr_GNell}
    \end{equation}
    which yields
    \be\label{EqXell}
    \begin{split}
        \expval{e^{-\Tr[X A] }}_{X} &= \exp \bigg[ \left( \bm \phi_1^T \bm \chi_1 \right) \left( \bm \phi_2^T \bm \chi_2 \right) + \frac{\tau}{2}\left( \bm \phi_1^T \bm \chi_2 \right)^2  + \frac{\tau}{2} \left( \bm \chi_1^T \bm \phi_2 \right)^2   + \tau \left( \bm \phi_1^T \bm \phi_2 \right) \left( \bm \chi_2^T  \bm \chi_1 \right) \bigg] \ .
    \end{split}
    \ee
    The exponential of the terms quartic in Grassmann variables can be expressed using Hubbard-Stratonovich transformations as in Eq. \eqref{ellHubbStratauxCOMPLEX} and the additional relationship 
    \be\label{ellHubbStrataux}
    \begin{split}
        \exp \bigg[ \tau \left( \bm \phi_1^T \bm \phi_2  \right) \left( \bm \chi_2^T  \bm \chi_1 \right) \bigg] &= \frac{1}{2\pi} \int_{\mathbb{C}} du d\bar{u} \ \exp \bigg[ -\vert u \vert^2 - u \sqrt{\tau} \left( \bm \chi_2^T  \bm \chi_1 \right) - \bar{u}\sqrt{\tau} \left( \bm \phi_1^T \bm \phi_2  \right) \bigg] \ .
    \end{split}
    \ee
    These transformations can be used to re-express the integrals in Eq. \eqref{eq:detM_eGinOE_step1} in terms of the matrix
    \be\label{MPfaffMat2}
        H =  \left(\begin{matrix}
        0 &  \bar{u} \sqrt{\tau} &  \sqrt{\mu} + \bar{q}  &  i z - v_1 \sqrt{\tau}  \\
        -  \bar{u} \sqrt{\tau}  & 0 &  i \bar{z} + v_2 \sqrt{\tau} & \sqrt{\mu} + q  \\
        -\left( \sqrt{\mu} + \bar{q} \right)  & -\left( i \bar{z} + v_2 \sqrt{\tau} \right) & 0 & - u \sqrt{\tau} \\
        -\left( i z - v_1 \sqrt{\tau} \right) & -\left( \sqrt{\mu} + q \right) &  u \sqrt{\tau} & 0 \\
        \end{matrix} \right) \otimes \eins_N = F \otimes \eins_N \ ,
    \ee
    therefore, as follows from Eq. \eqref{PfaffIdentGrass}, one must now calculate $\Pf(H)$. The Pfaffian obeys the following rule 
    \be
        \Pf \left( A \otimes B \right) = (-1)^{nm(m-1)/2} \left( \Pf A\right)^{m} \left( \det B \right)^{n}
    \ee
    for matrices $A$ of dimension $2 n$ and symmetric matrices $B=B^T$ of dimension $m$ \cite[Eq. (29)]{DVarjas}. In our case we have $A=F$ and $B=\eins_N$, meaning that $m=N$ and $n=2$, this results in $\Pf H = \left( \Pf F\right)^{N}$. The Pfaffian of a $4\times 4$ anti-symmetric matrix can be easily evaluated as
    \be
        \Pf \left[ \begin{matrix}
        0 &  a &  b  &  c  \\
        -a  & 0 &  d & e  \\
        -b  & -d & 0 & f \\
        -c & -e & -f & 0 \\
        \end{matrix} \right] = af - be +dc \ ,
    \ee
    which in our case implies that
    \be\label{MPfaffMat4}
        \Pf H = (-1)^N \bigg[ \left( \sqrt{\mu} + \bar{q} \right) \left( \sqrt{\mu} + q \right)  +\tau \ \vert u \vert^2   + \left( \bar{z} -i v_2 \sqrt{\tau} \right) \left( z +i v_1 \sqrt{\tau} \right)  \bigg]^N
    \ee
    and overall we find
    \be\label{MeEllres1}
    \begin{split}
        \bigg\langle \det M \bigg\rangle_{X}&= \frac{1}{8\pi^3} \int_{\mathbb{C}} dq d\bar{q}  \int_{\mathbb{C}} du d\bar{u} \int_{\mathbb{R}^2} dv_1 dv_2 \, e^ {-\vert q \vert^2 -\vert u \vert^2 -\frac{v_1^2   + v_2^2}{2} } \bigg[ \left( \sqrt{\mu} + \bar{q} \right) \left( \sqrt{\mu} + q \right)  + g(z, u, v_1, v_2)  \bigg]^N \, .
    \end{split}
    \ee
    This finally yields Eq. \eqref{Eq:detMreseGinOEb} after changing $q = \sqrt{R} e^{i\theta}$ and employing the definition of a Legendre polynomial, given in Eq. \eqref{LegPolyint}. 
\end{proof}

\noindent
We now state the required results involving averages over $X_2$ in the following Corollary:

\begin{cor}\label{cor:AvdetX2}
    With the average taken as in Eq. \eqref{Eq:X2av}, using the $N-2$ sized eGinOE matrix $X_2$ and the matrix $M$ given by Eq. \eqref{Eq:propM1mat}, we have
    \be\label{Eq:AvdetMemu0}
        \bigg\langle \det M \bigg\rangle_{X_2} \ \bigg\vert_{\mu=0} = (N-2)! \, P_{N-2}
    \ee
    and
    \be\label{Eq:AvdetMepartialmu0}
        \frac{\partial}{\partial \mu} \bigg\langle \det M \bigg\rangle_{X_2} \ \bigg\vert_{\mu=0} = (N-2)! \bigg[ P_{N-3} + (N-3) \, P_{N-4} -  T_{N-4} \bigg]  
    \ee
    with $P_N$ and $T_N$ defined in Eqs. \eqref{Eq:PNRes} and \eqref{Eq:TNRes} respectively. 
\end{cor}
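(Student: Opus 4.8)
\emph{Proof proposal.} The plan is to mirror the proof of Corollary \ref{cor:AvdetX1}, now starting from the general formula for $\langle \det M\rangle_X$ in Proposition \ref{prop:X2av} with the substitution $N\to N-2$, since $X_2$ is $(N-2)\times(N-2)$-dimensional. Setting $\mu=0$ removes the correction terms inside the bracket, the Legendre argument becomes $1$, and $P_{N-2}(1)=1$, so the integrand collapses to $e^{-R}(R+g)^{N-2}$ with $g=g(z,u,v_1,v_2)=\tau|u|^2+(\bar z-iv_2\sqrt\tau)(z+iv_1\sqrt\tau)$. Expanding $(R+g)^{N-2}$ by the binomial theorem and using $\int_0^\infty dR\, e^{-R} R^{N-2-k}=(N-2-k)!$ produces the overall factor $(N-2)!$ and leaves $\tfrac{1}{4\pi^2}\sum_{k=0}^{N-2}\tfrac{1}{k!}\int du\, d\bar u\int dv_1 dv_2\, e^{-|u|^2-(v_1^2+v_2^2)/2}\, g^k$. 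A second binomial expansion $g^k=\sum_{j=0}^k\binom{k}{j}(\tau|u|^2)^{k-j}\big[(\bar z-iv_2\sqrt\tau)(z+iv_1\sqrt\tau)\big]^j$ then separates the variables: the $u$-integral of $(\tau|u|^2)^{k-j}$ gives $\tau^{k-j}(k-j)!$, while the $v_1,v_2$-integrals are recognised as Hermite polynomials via the first equality in Eq. \eqref{Eq:HermitePoly}, namely $(2\pi)^{-1/2}\int dv_1\, e^{-v_1^2/2}(z+iv_1\sqrt\tau)^j=\tau^{j/2}\HE_j(z/\sqrt\tau)$ and likewise for $v_2$. Collecting factorials via $\binom{k}{j}(k-j)!=k!/j!$ turns the expression into the double sum $(N-2)!\sum_{k=0}^{N-2}\tau^k\sum_{j=0}^k \HE_j(z/\sqrt\tau)\HE_j(\bar z/\sqrt\tau)/j!$.

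The next step, and in my view the only genuinely non-routine one, is to collapse the inner partial sum by the Christoffel--Darboux identity for the (probabilists') Hermite polynomials, $\sum_{j=0}^k \HE_j(a)\HE_j(b)/j!=\big[\HE_{k+1}(a)\HE_k(b)-\HE_k(a)\HE_{k+1}(b)\big]/\big[k!\,(a-b)\big]$. Applied with $a=z/\sqrt\tau$, $b=\bar z/\sqrt\tau$ (so $a-b=(z-\bar z)/\sqrt\tau$) and after rearranging the sign of $z-\bar z$, this reproduces exactly the single-sum definition of $P_{N-2}$ in Eq. \eqref{Eq:PNRes}, giving $\langle \det M\rangle_{X_2}\big|_{\mu=0}=(N-2)!\,P_{N-2}$, which is Eq. \eqref{Eq:AvdetMemu0}.

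For the $\mu$-derivative I would differentiate the integrand of Proposition \ref{prop:X2av} (with $N\to N-2$) at $\mu=0$ exactly as in Corollary \ref{cor:AvdetX1}: the derivative of the power of the bracket contributes through $P_{N-2}(1)=1$, and the derivative of the Legendre argument is supplied by Eq. \eqref{diffLegend} (with $f\to g$) together with $P'_{N-3}(1)=(N-2)(N-3)/2$, which follows from $P'_{N-1}(1)=N(N-1)/2$. After simplification the integrand becomes $(N-2)(R+g)^{N-4}\big[g+(N-2)R\big]=(N-2)\big[(R+g)^{N-3}+(N-3)R(R+g)^{N-4}\big]$. The first term, by the computation of the previous paragraph with the exponent shifted to $N-3$, integrates to $(N-2)(N-3)!\,P_{N-3}=(N-2)!\,P_{N-3}$. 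For the second term, the extra factor $R$ shifts the radial integral, $\int_0^\infty dR\, e^{-R}R^{N-3-k}=(N-3-k)!$, and after the two binomial expansions and the Hermite/Gaussian integrations one obtains $(N-2)!\sum_{k=0}^{N-4}(N-3-k)\,\tau^k\sum_{j=0}^k \HE_j(z/\sqrt\tau)\HE_j(\bar z/\sqrt\tau)/j!$. Writing $N-3-k=(N-3)-k$ and invoking Christoffel--Darboux once more identifies the two pieces as $(N-3)P_{N-4}$ and $T_{N-4}$ respectively, the factor $k$ being precisely what upgrades the weight $\tau^{k+\frac12}/k!$ of $P_N$ to $k\,\tau^{k+\frac12}/k!$ in the definition of $T_N$, Eq. \eqref{Eq:TNRes}. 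Adding the two contributions yields $\partial_\mu\langle \det M\rangle_{X_2}\big|_{\mu=0}=(N-2)!\,\big[P_{N-3}+(N-3)P_{N-4}-T_{N-4}\big]$, i.e. Eq. \eqref{Eq:AvdetMepartialmu0}.

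As for difficulty: no individual step is hard, but the main places to be careful are (i) the bookkeeping of the nested binomial and factorial manipulations, where the $u$-integral over $\tau|u|^2$ is exactly what turns the single Hermite sums of the eGinUE computation into the partial sums that Christoffel--Darboux must collapse, and (ii) consistently tracking the sign of $z-\bar z$ so that the output matches the conventions fixed in Eqs. \eqref{Eq:PNRes} and \eqref{Eq:TNRes}. A minor formal point is that $R+g$ is complex for generic $z$, so the manipulations should be read as polynomial identities in $z,\bar z$, equivalently obtained by analytic continuation from a parameter region where all quantities are real.
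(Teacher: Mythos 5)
Your argument is correct and follows essentially the same route as the paper: Proposition \ref{prop:X2av} with $N\to N-2$, the endpoint Legendre properties together with Eq. \eqref{diffLegend}, a double binomial expansion combined with the Gaussian/Hermite integral representation of Eq. \eqref{Eq:HermitePoly}, and Christoffel--Darboux to collapse the inner sums into $P_{N-2}$, $P_{N-3}$, $(N-3)P_{N-4}$ and $T_{N-4}$. The only slip is the stated Legendre input: after the replacement $N\to N-2$ the polynomial appearing in Proposition \ref{prop:X2av} is $P_{N-2}$, so the needed value is $P_{N-2}'(1)=(N-1)(N-2)/2$ rather than $P_{N-3}'(1)=(N-2)(N-3)/2$; this is in fact the value your subsequent integrand $(N-2)(R+g)^{N-4}\big[g+(N-2)R\big]$ already implicitly uses, so the remainder of your computation is unaffected.
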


\begin{proof}
    The proof follows similar ideas as in the eGinUE case and in \cite{WCF}. To show Eq. \eqref{Eq:AvdetMemu0} we directly use Proposition \ref{prop:X2av} for $N-2$ and the Legendre polynomial property $P_{N-2}(1)=1$, thus Eq. \eqref{Eq:detMreseGinOEb} becomes
    \be\label{Eq:AvdetMequiv}
    \begin{split}
        \bigg\langle \det M \bigg\rangle_{X_2} \ \bigg\vert_{\mu=0} &=\frac{1}{4\pi^2} \int_{\mathbb{C}} du d\bar{u} \ \int_{\mathbb{R}^2} dv_1 \ dv_2 \ e^{-\vert u \vert^2 -\frac{1}{2} \left( v_1^2   + v_2^2 \right) } \int_0^{\infty} dR  \ e^{-R}   \  \Big(R + g(z,u,v_1,v_2) \Big)^{N-2} \, .
    \end{split}
    \ee
    To evaluate this, we employ polar coordinates $u' = \sqrt{R'} e^{i\theta'}$, use the binomial theorem twice, which allows us to perform the integrals over $R$ and $R'$, and finally we invoke the integral representation of the Hermite polynomials given in Eq. \eqref{Eq:HermitePoly}. After these manipulations we find that
    \begin{align}
        \bigg\langle \det M \bigg\rangle_{X_2} \ \bigg\vert_{\mu=0} &= (N-2)! \sum_{k=0}^{N-2} \tau^k \sum_{m=0}^k \frac{1}{m!} \HE_m\left( \frac{z}{\sqrt{\tau}} \right) \HE_m\left( \frac{\bar{z}}{\sqrt{\tau}} \right) =
        (N-2)! \, P_{N-2} \ ,
        \label{eq:detM_eGinOE_eval}
    \end{align}
    where the final equality made use of the Christoffel-Darboux formula
    \be\label{CDform}
        \sum_{m=0}^{k} \frac{1}{m!} \HE_m(x) \HE_m(y) = \frac{1}{k!} \frac{\HE_k(y) \HE_{k+1}(x) - \HE_k(x) \HE_{k+1}(y)}{x-y} \ .
    \ee
    To prove Eq. \eqref{Eq:AvdetMepartialmu0}, we start with Eq. \eqref{Eq:detMreseGinOEb}, replace $N\rightarrow N-2$, differentiate over $\mu$ using Eq. \eqref{diffLegend} and simplify using that $P_{N-2}^\prime (1)= (N-1) (N-2)/2$. Collecting all terms we find that
    \be
    \begin{split}
        \frac{\partial}{\partial \mu} \bigg\langle \det M \bigg\rangle_{X_2} \ \bigg\vert_{\mu=0} &= \frac{N-2}{4\pi^2} \int_{\mathbb{C}} du d\bar{u} \ \int_{\mathbb{R}^2} dv_1 \, dv_2 \ e^{-\vert u \vert^2 -\frac{1}{2} \left( v_1^2   + v_2^2 \right) } \\
        &\times \int_0^{\infty} dR  \ e^{-R}   \, \bigg[ \Big(R + g(z,u,v_1,v_2) \Big)^{N-3}  + (N-3) \, R \, \Big(R + g(z,u,v_1,v_2) \Big)^{N-4}  \bigg] 
    \end{split}
    \ee
    and from Eqs. \eqref{Eq:AvdetMemu0} and \eqref{Eq:AvdetMequiv} immediately identify the first of the two terms to be $(N-2)!P_{N-3}$. The second term can be evaluated using the same procedure that was used to arrive at Eq. \eqref{eq:detM_eGinOE_eval}, this produces
    \be
    \begin{split}
        &\frac{(N-2)(N-3)}{4\pi^2} \int_{\mathbb{C}} du d\bar{u} \ \int_{\mathbb{R}^2} dv_1  dv_2 \ e^{- \vert u \vert^2 - \frac{1}{2} \left( v_1^2   + v_2^2 \right)} \int_0^\infty dR \ R \ e^{-R}   \Big( R + g(z,u,v_1,v_2) \Big)^{N-4} \\
        &= (N-2)! \sum_{k=0}^{N-4} \ \frac{\tau^k}{k!} \ (N-3-k) \sum_{m=0}^k \HE_m\left( \frac{z}{\sqrt{\tau}} \right)  \ \HE_m\left( \frac{\bar{z}}{\sqrt{\tau}} \right) =(N-2)!\Big( (N-3)P_{N-4}-T_{N-4} \Big) \ .
    \end{split}
    \ee
    Collecting both terms we arrive at Eq. \eqref{Eq:AvdetMepartialmu0} and conclude the proof of this Corollary.
\end{proof}

\noindent
We can now finish the proof of Theorem \ref{thm:MainRes} by returning to Eq. \eqref{Eq:avOvleGinOEstep3} and inserting the results from Corollary \ref{cor:AvdetX2}. The final remaining integral over $\delta$ can be easily computed as
\be\label{deltaInt}
    \frac{1}{2y} \ \int_0^\infty d\delta \ \delta \  \sqrt{\delta^2 + 4y^2} \ e^{-\frac{1}{2(1-\tau^2)}\delta^2} = (1-\tau^2) + \sqrt{\frac{\pi}{2}} \ e^{\frac{2y^2}{1-\tau^2}} \ \left(1-\tau^2 \right)^{\frac{3}{2}} \ \frac{1}{2\vert y \vert } \ \text{erfc}\left(\sqrt{\frac{2}{1-\tau^2}} \ \vert y \vert \right)
\ee
and, after simple manipulations, the overall constant can be evaluated. Thus we find
\begin{align}
   \mathcal{O}^{(\eGinOE,c)}_{N}(z) &=  \frac{1}{\pi} \sqrt{\frac{1-\tau}{1+\tau}} \ \exp \left[ -\frac{x^2}{1+\tau} -\frac{y^2}{1-\tau} \right] \bigg[ 1 + \sqrt{\frac{\pi(1-\tau^2)}{2}} \ \exp \left[ \frac{2 y^2}{1-\tau^2} \right] \ \frac{1}{2\vert y \vert} \ \textup{erfc}\left( \sqrt{\frac{2 }{1-\tau^2}} \ \vert y \vert \right) \bigg] \nonumber \\ 
    &\times \bigg[   P_{N-2} + (1-\tau^2) \bigg(  P_{N-3} + (N-3) P_{N-4} - T_{N-4} \bigg) \bigg] \ ,\label{Oexact}
\end{align}
which concludes the proof of Theorem \ref{thm:MainRes}.

%%%%%%%%%%%%%%%%%%%%%%%%%%%%%%%%%%%%%%%%%%%%%%%%%%%%%%%%%%%%%%%%%%%%%%
%%%%%%%%%%%%%%%%%%%%%%%%%%%%%%%%%%%%%%%%%%%%%%%%%%%%%%%%%%%%%%%%%%%%%%
%%%%%%%%%%%%%%%%%%%%%%%%%%%%%%%%%%%%%%%%%%%%%%%%%%%%%%%%%%%%%%%%%%%%%%
%%%%%%%%%%%%%%%%%%%%%%%%%%%%%%%%%%%%%%%%%%%%%%%%%%%%%%%%%%%%%%%%%%%%%%
%%%%%%%%%%%%%%%%%%%%%%%%%%%%%%%%%%%%%%%%%%%%%%%%%%%%%%%%%%%%%%%%%%%%%%
%%%%%%%%%%%%%%%%%%%%%%%%%%%%%%%%%%%%%%%%%%%%%%%%%%%%%%%%%%%%%%%%%%%%%%

\section{Asymptotic Analysis for large matrix size}\label{sec:AsymptoticAnalysis}

%%%%%%%%%%%%%%%%%%%%%%%%%%%%%%%%%%%%%%%%%%%%%%%%%%%%%%%%%%%%%%%%%%%%%%

\subsection{Preliminaries for the asymptotic analysis}\label{subsubsec:AsymptoticAnalysisPreliminaries}

In this subsection, we begin by outlining key notation for our asymptotic analysis and then give a set of Lemmas containing useful integral representations of the building blocks $\rho_{N}^{\text{eGinUE,c}}(z)$ and $R_N$ in the eGinUE and $P_N$ and $T_N$ in the eGinOE. We begin by recalling the incomplete $\Gamma$-function defined as
\be\label{Eq:incompleteGamma}
    \Gamma\left( N , a \right) = \Gamma\left(N\right) \, e^{-a} \ \sum_{k=0}^{N-1} \frac{a^k}{k!} =  \int_{a}^{\infty} du \ u^{N-1} \ e^{-u} 
\ee
and then we further define the function
\be\label{Thetafunction}
    \Theta_N^{(M)}(x) \equiv \frac{\Gamma(N-M+1,Nx)}{\Gamma(N-M+1)}\ .
\ee
%\end{def}
Note that for a fixed non-negative integer $M$ the function $\Theta_N^{(M)}(x)$ tends towards the Heaviside step function $\Theta [1-x]$ as $N \to \infty$.  For further details of the above statements, see \cite{WCF}. We can also state the following:

\begin{lem}\label{lem:rhoNintegralrep}
    The density, $\rho^{(\textup{eGinUE},c)}_N(z)$, given in Eq. \eqref{eq:eGinUE_density} via Hermite polynomials has the integral representation
    \be\label{RNdeffin3}
    \begin{split}
        \rho^{(\textup{eGinUE},c)}_N(z) &= \frac{N}{2\pi^2 \tau} \frac{1}{\sqrt{1-\tau^2}}\exp \left[ -\frac{ \vert z \vert^2 - \tau \ \textup{Re}(z^2) }{1-\tau^2} \right]  \exp \left[ \frac{z^2 +\bar{z}^2}{2\tau} \right]  \int_{-\infty}^{\infty} dp  \int_{-\infty}^{\infty} dq \ \Theta_N^{(1)} \left( \frac{p^2 - q^2}{2} \right) \\
        &\times \exp \left[ \frac{N}{2} \left( p^2 - q^2 \right) -\frac{N}{2\tau }\left( p^2 + q^2 \right) - i \frac{\sqrt{2N}\RE(z)}{\tau} \, q  - \frac{\sqrt{2N}\IM(z)}{\tau} \, p  \right]  \ . 
    \end{split}
    \ee
\end{lem}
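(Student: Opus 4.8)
\textbf{Proof strategy for Lemma \ref{lem:rhoNintegralrep}.}

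The plan is to convert the Hermite-polynomial bilinear kernel in Eq.~\eqref{eq:eGinUE_density} into a two-dimensional Gaussian integral whose summand resums to a truncated exponential series, which is exactly $e^{w}\,\Theta_N^{(1)}(w/N)$. First I would substitute the first representation in Eq.~\eqref{Eq:HermitePoly} for each Hermite factor, using a real integration variable $t_1$ for $\HE_n(z/\sqrt\tau)$ and $t_2$ for $\HE_n(\bar z/\sqrt\tau)$. Because the sum over $n$ is finite and each summand is absolutely integrable against the Gaussian weight, the sum may be taken inside, and it collapses to a truncated exponential series which, by the first equality of Eq.~\eqref{Eq:incompleteGamma} together with the definition~\eqref{Thetafunction} (both valid for all complex $w$), equals
\be
\sum_{n=0}^{N-1}\frac{w^{n}}{n!}=e^{w}\,\frac{\Gamma(N,w)}{\Gamma(N)}=e^{w}\,\Theta_N^{(1)}\!\left(\frac{w}{N}\right),\qquad w\equiv w(t_1,t_2)=\bigl(z+it_1\sqrt\tau\bigr)\bigl(\bar z+it_2\sqrt\tau\bigr).
\ee
Thus the kernel equals $\tfrac{1}{2\pi}\iint_{\mathbb R^2}dt_1\,dt_2\,e^{-(t_1^2+t_2^2)/2}\,e^{w}\,\Theta_N^{(1)}(w/N)$.

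Next I would diagonalise $w$. Writing $z=x+iy$ and rotating to $t_1=(a-b)/\sqrt2$, $t_2=(a+b)/\sqrt2$ (which leaves $t_1^2+t_2^2=a^2+b^2$ and the measure invariant) gives $w=|z|^2-\tfrac\tau2 a^2+\tfrac\tau2 b^2+i\sqrt{2\tau}\,x\,a-\sqrt{2\tau}\,y\,b$. Completing the squares in $a$ and $b$ and introducing $\tilde a=a-i\sqrt2\,x/\sqrt\tau$, $\tilde b=b-\sqrt2\,y/\sqrt\tau$ reduces this to $w=\tfrac\tau2(\tilde b^2-\tilde a^2)$, while the same shifts turn the Gaussian weight $e^{-(a^2+b^2)/2}$ into $e^{(x^2-y^2)/\tau}\,e^{-(\tilde a^2+\tilde b^2)/2}\,e^{-i\sqrt2\,x\,\tilde a/\sqrt\tau-\sqrt2\,y\,\tilde b/\sqrt\tau}$, with $e^{(x^2-y^2)/\tau}=e^{(z^2+\bar z^2)/(2\tau)}$ pulled outside the integral. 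The $\tilde b$-integration runs over $\mathbb R$, whereas the $\tilde a$-integration initially runs over a horizontal line off $\mathbb R$; since $e^{w}\Theta_N^{(1)}(w/N)=\sum_{n=0}^{N-1}w^n/n!$ is a polynomial in $\tilde a^2$ and $\tilde b^2$, the full integrand is entire in $\tilde a$ and decays like a Gaussian in $\RE\tilde a$, so Cauchy's theorem allows me to shift the $\tilde a$-contour back onto the real axis.

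Finally I would rescale $\tilde a=\sqrt{N/\tau}\,q$ and $\tilde b=\sqrt{N/\tau}\,p$, picking up a Jacobian $N/\tau$. Under this substitution $w\mapsto\tfrac N2(p^2-q^2)$, hence $w/N\mapsto\tfrac{p^2-q^2}{2}$ and $e^{w}\mapsto e^{\frac N2(p^2-q^2)}$, while $-(\tilde a^2+\tilde b^2)/2\mapsto-\tfrac{N}{2\tau}(p^2+q^2)$ and the linear terms become $-\tfrac{i\sqrt{2N}\,x}{\tau}\,q-\tfrac{\sqrt{2N}\,y}{\tau}\,p$. Reassembling, multiplying by the prefactor $\tfrac1\pi(1-\tau^2)^{-1/2}\exp[-(|z|^2-\tau\RE(z^2))/(1-\tau^2)]$ from Eq.~\eqref{eq:eGinUE_density}, and using $x=\RE z$, $y=\IM z$, $x^2-y^2=\RE(z^2)=\tfrac12(z^2+\bar z^2)$, one obtains precisely Eq.~\eqref{RNdeffin3}. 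The one step requiring genuine justification is the contour deformation of the $\tilde a$-integral: one verifies that the connecting segments at $\RE\tilde a=\pm\Lambda$ vanish as $\Lambda\to\infty$, which follows from the uniform bound $|e^{-\tilde a^2/2}|=e^{-(\RE\tilde a)^2/2+(\IM\tilde a)^2/2}$ on the bounded strip swept out by the shift; all remaining steps are elementary Gaussian integration and bookkeeping of constants, mirroring the $\tau=0$ computation in \cite{WCF}.
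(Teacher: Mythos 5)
Your argument is correct and follows the same overall strategy the paper intends (it leaves the verification to the reader, pointing to the Hermite integral representations): replace each Hermite factor by a Gaussian integral, resum the finite series into $e^{w}\,\Theta_N^{(1)}(w/N)$ via Eqs.~\eqref{Eq:incompleteGamma}--\eqref{Thetafunction}, and change variables to reach Eq.~\eqref{RNdeffin3}; I checked your diagonalisation $w=\tfrac{\tau}{2}(\tilde b^2-\tilde a^2)$, the shifted Gaussian weight, the rescaling $\tilde a=\sqrt{N/\tau}\,q$, $\tilde b=\sqrt{N/\tau}\,p$ with Jacobian $N/\tau$, and the bookkeeping of prefactors, and everything reproduces the stated formula (the sign of the $q$-linear term is immaterial by the $q\to-q$ symmetry of the domain).

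The one difference from the paper's intended route is your choice of the \emph{first} equality in Eq.~\eqref{Eq:HermitePoly} (the representation $\HE_k(x)=\mathbb{E}[(x+it)^k]$) rather than the Fourier-type representations in Eq.~\eqref{eq:He_k_int} that the paper explicitly supplies for this purpose. With Eq.~\eqref{eq:He_k_int} the factor $e^{(z^2+\bar z^2)/(2\tau)}$ comes directly out of the representation, the summand is simply $(t_1t_2)^n/n!$, and a real rotation plus rescaling $t_{1,2}\propto\sqrt{N}(p\mp q)$ gives the result with purely real changes of variables — no completion of squares with complex shifts and no contour deformation. Your version needs the extra Cauchy/contour-shift step to move the $\tilde a$-line back to the real axis, but you identify this as the only nontrivial point and justify it correctly (entire integrand, Gaussian decay uniform on the bounded strip), so the proof stands; it is just marginally less economical than the route the paper suggests.
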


\begin{lem}\label{lem:RNpintegralrep}
    The quantity $R_N$, given in Eq. \eqref{Eq:RNRes}, via Hermite polynomials has the integral representation
    \begin{align}
        R_N &= \frac{N^2}{2\pi^2 \tau} \frac{1}{\sqrt{1-\tau^2}}\exp \left[ -\frac{\vert z \vert^2 - \tau \ \textup{Re}(z^2)}{1-\tau^2} \right] \exp \left[ \frac{z^2 +\bar{z}^2}{2\tau} \right] \int_{-\infty}^{\infty} dp  \int_{-\infty}^{\infty} dq \ \Theta_N^{(1)}\left(\frac{p^2 - q^2}{2} \right) \ \frac{N}{2} \left( p^2 - q^2 \right) \nonumber \\
        &\times \exp \left[ \frac{N}{2} \left( p^2 - q^2 \right) - \frac{N}{2\tau }\left( p^2 + q^2 \right) - i \frac{\sqrt{2N}\RE(z)}{\tau} \, q  - \frac{\sqrt{2N}\IM(z)}{\tau} \, p  \right] \ . \label{RNpdeffin3}
    \end{align}
\end{lem}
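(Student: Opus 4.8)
The plan is to run the computation that produced the integral representation of Lemma~\ref{lem:rhoNintegralrep}, carrying along the single extra factor $k$ that distinguishes the defining sum \eqref{Eq:RNRes} of $R_N$ from that of the density \eqref{eq:eGinUE_density}. First I would replace each Hermite polynomial in \eqref{Eq:RNRes} by the Gaussian integral appearing as the middle expression in \eqref{Eq:HermitePoly}, taking the lower sign for $\HE_k(z/\sqrt{\tau})$ and the upper sign for $\HE_k(\bar{z}/\sqrt{\tau})$. With these opposite choices the factors $(\pm i)^k$ cancel in the product, the two Gaussian prefactors combine into $\exp[(z^2+\bar{z}^2)/(2\tau)]$, and one obtains
\be
    \HE_k\!\left(\tfrac{\bar{z}}{\sqrt{\tau}}\right)\HE_k\!\left(\tfrac{z}{\sqrt{\tau}}\right) = \frac{1}{2\pi}\,e^{\frac{z^2+\bar{z}^2}{2\tau}}\int_{\mathbb{R}^2} dt_1\,dt_2\; e^{-\frac{1}{2}(t_1^2+t_2^2)+\frac{i}{\sqrt{\tau}}(z t_1 - \bar{z} t_2)}\,(t_1 t_2)^k \ .
\ee

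Next I would interchange the finite sum with the $t$-integrals, so that the $k$-dependence is confined to $\sum_{k=0}^{N}\frac{k}{k!}(\tau t_1 t_2)^k$. By the definition \eqref{Eq:incompleteGamma} of the incomplete Gamma function and of $\Theta_N^{(1)}$ in \eqref{Thetafunction}, this sum equals $(\tau t_1 t_2)\,e^{\tau t_1 t_2}\,\Theta_N^{(1)}\!\big(\tfrac{\tau t_1 t_2}{N}\big)$; in the proof of Lemma~\ref{lem:rhoNintegralrep} the analogous sum $\sum_{k=0}^{N-1}\frac{1}{k!}(\tau t_1 t_2)^k$ carries no factor $\tau t_1 t_2$, and this is exactly the origin of the extra $\tfrac{N}{2}(p^2-q^2)$ in \eqref{RNpdeffin3}. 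I would then make the linear change of variables $t_1=\sqrt{N/(2\tau)}\,(p-q)$, $t_2=\sqrt{N/(2\tau)}\,(p+q)$, with Jacobian $N/\tau$. This diagonalises the quadratic form, $-\tfrac{1}{2}(t_1^2+t_2^2)+\tau t_1 t_2 = -\tfrac{N}{2\tau}(p^2+q^2)+\tfrac{N}{2}(p^2-q^2)$, sends $\tau t_1 t_2$ to $\tfrac{N}{2}(p^2-q^2)$ (whence the argument $\tfrac{p^2-q^2}{2}$ of $\Theta_N^{(1)}$ and the prefactor $\tfrac{N}{2}(p^2-q^2)$), and --- using $z+\bar{z}=2\RE(z)$ and $z-\bar{z}=2i\IM(z)$ --- converts the phase $\tfrac{i}{\sqrt{\tau}}(z t_1-\bar{z} t_2)$ into $-\tfrac{i\sqrt{2N}\,\RE(z)}{\tau}\,q-\tfrac{\sqrt{2N}\,\IM(z)}{\tau}\,p$. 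Finally I would reinstate the overall factor $\tfrac{1}{\pi\sqrt{1-\tau^2}}\exp[-(|z|^2-\tau\RE(z^2))/(1-\tau^2)]$ from \eqref{Eq:RNRes} and collect the numerical constants to reach \eqref{RNpdeffin3}.

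There is no structural difficulty: once one observes that $R_N$ is obtained from the Lemma~\ref{lem:rhoNintegralrep} computation simply by retaining the extra $\tau t_1 t_2$ generated by the factor $k$ in the sum, the whole argument is a transcription of that proof. The points requiring care are the matched choice of opposite signs in the two Hermite integral representations --- needed both so that the $(\pm i)^k$ cancel and so that the phase emerges with the signs displayed in \eqref{RNpdeffin3} --- and the bookkeeping of the Jacobian and the numerical prefactors, which I expect to be the main (though routine) source of slips.
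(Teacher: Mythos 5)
Your proposal is correct and is essentially the paper's own argument: the authors explicitly leave this Lemma to the reader as an exercise in substituting the Gaussian integral representations of the Hermite polynomials (Eq. \eqref{eq:He_k_int}, equivalently the middle form of Eq. \eqref{Eq:HermitePoly}), exchanging the finite sum with the $t$-integrals, recognising the truncated exponential series as $\Theta_N^{(1)}$ via \eqref{Eq:incompleteGamma}--\eqref{Thetafunction}, and rotating to the variables $p,q$ exactly as you do, with the factor $k$ generating the extra $\tau t_1 t_2=\tfrac{N}{2}(p^2-q^2)$ in the integrand. One bookkeeping caveat: carrying your constants through (Jacobian $N/\tau$ times $\tfrac{1}{2\pi}$ times $\tfrac{1}{\pi\sqrt{1-\tau^2}}$) yields the overall prefactor $\tfrac{N}{2\pi^2\tau}\tfrac{1}{\sqrt{1-\tau^2}}$, not the $\tfrac{N^2}{2\pi^2\tau}\tfrac{1}{\sqrt{1-\tau^2}}$ printed in \eqref{RNpdeffin3}; the printed $N^2$ appears to be a typographical slip (the paper's own weak non-Hermiticity analysis, where $R_N\approx N\cdot\tfrac{N}{\pi^2}(\cdots)$ and the self-overlap limit comes out finite, is consistent only with the $N$ normalisation), so rather than claiming to ``reach'' the displayed constant verbatim you should record this factor-of-$N$ discrepancy.
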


\begin{lem}\label{lem:PNintegralrep}
    The quantity $P_N$, given in Eq. \eqref{Eq:PNRes}, via Hermite polynomials has the integral representation
    \be\label{PNdeffin3}
    \begin{split}
        P_N&= N \, \sqrt{2N} \, \frac{1}{2\IM(z)} \frac{(-1)}{2\pi \tau} \exp \left[ \frac{\RE(z)^2 -\IM(z)^2}{\tau} \right] \int_{-\infty}^{\infty} dp \ p  \int_{-\infty}^{\infty} dq \ \Theta_N^{(0)}\left(\frac{ p^2 - q^2 }{2}\right)  \\
        &\times \exp \left[ \frac{N}{2} \left( p^2 - q^2 \right) -\frac{N}{2\tau }\left( p^2 + q^2 \right) - i \frac{\sqrt{2N}\RE(z)}{\tau} \, q  - \frac{\sqrt{2N}\IM(z)}{\tau} \, p  \right] \ .
    \end{split}
    \ee
\end{lem}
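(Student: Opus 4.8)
The plan is to start directly from the Hermite-polynomial expression for $P_N$ in Eq.~\eqref{Eq:PNRes} and to substitute the \emph{second} integral representation of the Hermite polynomials from Eq.~\eqref{Eq:HermitePoly}, using the same sign branch for all four factors. Writing $\HE_k(w)=\tfrac{i^k}{\sqrt{2\pi}}e^{w^2/2}\int_{-\infty}^{\infty}dt\,e^{-t^2/2-iwt}t^k$, the product of the two exponential prefactors gives $e^{z^2/(2\tau)}e^{\bar z^2/(2\tau)}=\exp[(z^2+\bar z^2)/(2\tau)]=\exp[(\RE(z)^2-\IM(z)^2)/\tau]$, which is exactly the exponential prefactor appearing in Eq.~\eqref{PNdeffin3}. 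After relabelling the dummy variables so that in \emph{both} terms of the bracket in Eq.~\eqref{Eq:PNRes} the integration variable $q$ is tied to the argument $\bar z/\sqrt{\tau}$ and $p$ to $z/\sqrt{\tau}$, the two terms share the common exponential factor $\exp[-\tfrac12(p^2+q^2)-i\bar z q/\sqrt{\tau}-izp/\sqrt{\tau}]$ and differ only through the monomials $q^{k+1}p^k$ and $p^{k+1}q^k$, whose difference is $(pq)^k(q-p)$, while the $i$-powers collapse to $i^{2k+1}=i(-1)^k$.

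The next step is the summation over $k$. Extracting $\tau^{1/2}$ from $\tau^{k+1/2}$ and absorbing $(-1)^k\tau^k(pq)^k=(-\tau pq)^k$, the $k$-sum becomes the truncated exponential series $\sum_{k=0}^{N}(-\tau pq)^k/k!$, which by the definition of the incomplete Gamma function, Eq.~\eqref{Eq:incompleteGamma}, and of $\Theta_N^{(M)}$, Eq.~\eqref{Thetafunction}, equals $e^{-\tau pq}\,\Theta_N^{(0)}(-\tau pq/N)$; this is precisely the mechanism that produces the factor $\Theta_N^{(0)}$ in the claim. At this point $P_N$ is a two-dimensional real Gaussian-type integral,
\be
    P_N=\frac{i\sqrt{\tau}}{2\pi(\bar z-z)}\,e^{(z^2+\bar z^2)/(2\tau)}\int_{-\infty}^{\infty} dp\int_{-\infty}^{\infty} dq\;(q-p)\,\Theta_N^{(0)}\!\Big(-\tfrac{\tau pq}{N}\Big)\,e^{-\frac{1}{2}(p^2+q^2)-\tau pq-i\bar z q/\sqrt{\tau}-izp/\sqrt{\tau}}\ ,
\ee
and since the $k$-sum is a polynomial the interchange of sum and integral is justified, the quadratic form $p^2+2\tau pq+q^2$ being positive definite for $\tau\in(0,1)$.

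It remains to perform the linear change of variables that diagonalizes and rescales this quadratic form. Setting $p=\sqrt{N/(2\tau)}\,(q'-p')$ and $q=\sqrt{N/(2\tau)}\,(q'+p')$ (Jacobian $N/\tau$), one finds $p^2+q^2=(N/\tau)(p'^2+q'^2)$, $pq=(N/2\tau)(q'^2-p'^2)$ and $q-p=\sqrt{2N/\tau}\,p'$. A short computation then yields $-\tfrac12(p^2+q^2)-\tau pq=\tfrac N2(p'^2-q'^2)-\tfrac{N}{2\tau}(p'^2+q'^2)$, $-\tau pq/N=(p'^2-q'^2)/2$, and, using $z+\bar z=2\RE(z)$ and $\bar z-z=-2i\IM(z)$, $-i\bar z q/\sqrt{\tau}-izp/\sqrt{\tau}=-i\sqrt{2N}\RE(z)q'/\tau-\sqrt{2N}\IM(z)p'/\tau$. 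Collecting the constant prefactor with the Jacobian $N/\tau$, the factor $\sqrt{2N/\tau}$ coming from $q-p$, and $i/(\bar z-z)=-1/(2\IM(z))$, one obtains exactly $N\sqrt{2N}\,\tfrac{1}{2\IM(z)}\,\tfrac{(-1)}{2\pi\tau}$; renaming $p'\to p$, $q'\to q$ gives precisely Eq.~\eqref{PNdeffin3}.

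I do not anticipate a genuine obstacle: the two delicate points are (i) selecting a consistent sign branch in the Hermite integral representation so that the $i^{2k+1}$ and the $e^{w^2/2}$ factors combine correctly, and (ii) the bookkeeping of the rotation Jacobian together with the various powers of $\sqrt{\tau}$. The same scheme reproduces Lemmas~\ref{lem:rhoNintegralrep} and~\ref{lem:RNpintegralrep}: there one starts instead from the diagonal sum $\sum_n\tau^n\HE_n(z/\sqrt{\tau})\HE_n(\bar z/\sqrt{\tau})/n!$ truncated at $N-1$ (hence $\Theta_N^{(1)}$ rather than $\Theta_N^{(0)}$), and for $R_N$ the extra factor $n$ in the summand turns, via $\sum_n n\,x^n/n!=x\,e^x\,\Theta_N^{(1)}(x/N)$ evaluated at $x=-\tau pq$, into the factor $\tfrac N2(p^2-q^2)$ that appears inside the integral in Eq.~\eqref{RNpdeffin3}.
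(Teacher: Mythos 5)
Your proposal is correct and follows exactly the route the paper intends (the paper leaves this Lemma as an exercise in substituting the Hermite integral representations of Eq.~\eqref{eq:He_k_int}): insert the Gaussian integral forms, collapse the truncated $k$-sum into $e^{-\tau pq}\,\Theta_N^{(0)}(-\tau pq/N)$ via Eqs.~\eqref{Eq:incompleteGamma}--\eqref{Thetafunction}, and rotate/rescale $(p,q)$ to produce the stated exponent and prefactor, all of which you carry out with correct bookkeeping. Your choice of a single sign branch for all four Hermite factors (rather than the opposite branches for $z$ and $\bar z$ displayed in Eq.~\eqref{eq:He_k_int}) is an immaterial variant, since both branches of Eq.~\eqref{Eq:HermitePoly} are valid and lead to the same result.
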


\begin{lem}\label{lem:TNintegralrep}
    The quantity $T_N$ given in Eq. \eqref{Eq:TNRes} via Hermite polynomials has the integral representation
    \be\label{TNdef3}
    \begin{split}
         T_N &= N \, \sqrt{2N} \, \frac{1}{2\IM(z)} \frac{(-1)}{2\pi \tau} \exp \left[ \frac{\RE(z)^2 -\IM(z)^2}{\tau} \right] \int_{-\infty}^{\infty} dp \ p  \int_{-\infty}^{\infty} dq \ \Theta_N^{(1)}\left(\frac{ p^2 - q^2 }{2}\right) \ \frac{N}{2} \left( p^2 - q^2 \right) \\
        &\times \exp \left[ \frac{N}{2} \left( p^2 - q^2 \right) -\frac{N}{2\tau }\left( p^2 + q^2 \right) - i \frac{\sqrt{2N}\RE(z)}{\tau} \, q  - \frac{\sqrt{2N}\IM(z)}{\tau} \, p  \right] \ .
    \end{split}
    \ee
\end{lem}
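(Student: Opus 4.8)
The plan is to obtain \eqref{TNdef3} directly from the derivation of Lemma \ref{lem:PNintegralrep}, exploiting that $T_N$ is to $P_N$ exactly what $R_N$ is to $\rho_N^{(\eGinUE,c)}$: the summands of \eqref{Eq:TNRes} are those of \eqref{Eq:PNRes} with an extra factor $k$, so the whole argument carries over verbatim except that one truncated exponential series acquires a factor $k$ in the summand.

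Concretely, starting from \eqref{Eq:TNRes} I would feed the Fourier--Gauss representation \eqref{Eq:HermitePoly} of the Hermite polynomials into the two products $\HE_{k+1}\HE_k$, introducing real auxiliary variables $s,t$. After relabelling $s\leftrightarrow t$ in one of the two contributions (as in the proof of Lemma \ref{lem:PNintegralrep}) each summand becomes a linear factor times $\tfrac{k}{k!}Y^{k}$ under the Gaussian weight $e^{-(s^2+t^2)/2}$, with $Y=Y(z;s,t)=(z+i\sqrt\tau s)(\bar z+i\sqrt\tau t)$ --- the very quadratic form occurring in the $P_N$ and $\rho_N^{(\eGinUE,c)}$ derivations. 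The $k$-sum is then resolved by the elementary identity
\[ \sum_{k=0}^{N}\frac{k\,Y^{k}}{k!}=Y\sum_{k=0}^{N-1}\frac{Y^{k}}{k!}=Y\,e^{Y}\,\frac{\Gamma(N,Y)}{\Gamma(N)}=Y\,e^{Y}\,\Theta_N^{(1)}\!\left(\frac{Y}{N}\right), \]
which follows from \eqref{Eq:incompleteGamma}--\eqref{Thetafunction}. This single step achieves two things at once: it manufactures the extra scalar factor $Y$ in the integrand --- the $\tfrac N2(p^2-q^2)$ of \eqref{TNdef3} --- and, because the reduced sum stops at $N-1$, it turns the $\Theta_N^{(0)}$ of the $P_N$ representation \eqref{PNdeffin3} into the $\Theta_N^{(1)}$ of \eqref{TNdef3}. (The same identity, in the single-sum setting, produces the analogous shift from Lemma \ref{lem:rhoNintegralrep} to Lemma \ref{lem:RNpintegralrep}.)

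It then remains to perform the change of variables $(s,t)\mapsto(p,q)$ already used in the proofs of Lemmas \ref{lem:rhoNintegralrep}--\ref{lem:PNintegralrep}: rotate to $(s\pm t)/\sqrt2$, rescale by $\sqrt{N/\tau}$, and shift the contours (imaginary shift in the $q$-direction, real shift in the $p$-direction) so as to cancel the linear terms of $Y$. This makes $Y=\tfrac N2(p^2-q^2)$ with vanishing constant, so $e^{Y}$ merges with the Gaussian weight into $\exp[\tfrac N2(p^2-q^2)-\tfrac{N}{2\tau}(p^2+q^2)-i\tfrac{\sqrt{2N}\RE(z)}{\tau}q-\tfrac{\sqrt{2N}\IM(z)}{\tau}p]$, while the Jacobian $N/\tau$ and the Gaussian terms generated by the shift supply the prefactor $\tfrac{N}{2\pi\tau}\exp[(\RE(z)^2-\IM(z)^2)/\tau]$. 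The leftover linear Hermite factor, combined with the $1/(\bar z-z)$ prefactor of \eqref{Eq:TNRes} and its $z\leftrightarrow\bar z$ mirror, collapses after the substitution: the constant parts cancel and only a term proportional to $p/\IM(z)$ survives. A reflection $p\mapsto-p$ to align the linear exponent, followed by collecting constants, then yields exactly \eqref{TNdef3}, with its $\int dp\,p$, its overall $-1/(2\IM(z))$ and the extra integrand factor $\tfrac N2(p^2-q^2)$.

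The only genuinely new content is the displayed identity; everything else transcribes the $P_N$ computation. Accordingly, the part deserving care --- and the main obstacle such as it is --- is the bookkeeping of the last step: keeping the powers of $N$ and $\tau$ straight through the rescaling and Jacobian, extracting the correct sign and normalisation from $1/(\bar z-z)$ acting on the antisymmetrised pair of products, and checking that the $z$-dependent contour shifts cancel inside the leftover linear factor so that no spurious $z$-dependence remains. Since this mirrors Lemma \ref{lem:PNintegralrep} line by line --- with the displayed identity replacing $\sum_{k=0}^{N}Y^{k}/k!=e^{Y}\Theta_N^{(0)}(Y/N)$ --- it presents no real difficulty.
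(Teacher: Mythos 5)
Your proposal is correct and follows essentially the route the paper intends (it leaves the verification as an exercise with the Hermite integral representations): the extra factor $k$ is absorbed via $\sum_{k=0}^{N}kY^k/k!=Y e^Y\,\Theta_N^{(1)}(Y/N)$, which simultaneously produces the factor $\tfrac N2(p^2-q^2)$ and the shift $\Theta_N^{(0)}\to\Theta_N^{(1)}$, and your bookkeeping claims (cancellation of the constant in the linear factor, vanishing constant in $Y$ after the shifts, the reflection $p\mapsto-p$, and the prefactors) all check out. The only cosmetic difference is that you start from the Gaussian-average form in Eq.~\eqref{Eq:HermitePoly} and defer the complex contour shift to the end, whereas the representation \eqref{eq:He_k_int} indicated in the paper builds that shift in from the start so that only a real rotation and rescaling $t_{1,2}=\sqrt{N/2}\,(p\pm q)$ is needed.
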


\noindent
The verification of the above Lemmas is left to the reader as it a is fairly straightforward exercise in employing the following forms of the Hermite polynomials:
\begin{equation}
    \HE_k\left( \frac{\bar{z}}{\sqrt{\tau}} \right) = \frac{i^k e^{\frac{\bar{z}^2}{2 \tau}}}{\sqrt{2 \pi \tau}}  \int_{-\infty}^\infty dt_1 e^{ -\frac{t_1^2}{2 \tau} - i \frac{\bar{z}}{\tau} t_1 } \left(\frac{t_1}{\sqrt{\tau}}\right)^k   , \	\HE_k\left( \frac{z}{\sqrt{\tau}} \right) = \frac{(-i)^k e^{\frac{z^2}{2 \tau}}}{\sqrt{2 \pi \tau}}  \int_{-\infty}^\infty dt_2 e^{ -\frac{t_2^2}{2 \tau} + i \frac{z}{\tau} t_2 } \left(\frac{t_2}{\sqrt{\tau}}\right)^k  \ .
	\label{eq:He_k_int}
\end{equation}

%%%%%%%%%%%%%%%%%%%%%%%%%%%%%%%%%%%%%%%%%%%%%%%%%%%%%%%%%%%%%%%%%%%%%%

\subsection{Asymptotic Analysis in strong non-Hermiticity limit}\label{subsubsec:AsymptoticAnalysisSTRONG}

We now proceed to prove Corollaries \ref{cor:eGinOEbulkStrong} and \ref{cor:eGinOEdepletionStrong} by evaluating the large $N$ limit of the two relevant integral representations given in the previous section, i.e. Lemmas \ref{lem:PNintegralrep} and \ref{lem:TNintegralrep}. We present the limiting expressions of $P_N$ and $T_N$, before proving the main statements in Section \ref{subsubsec:ProofAsymptbulk} and Section \ref{subsubsec:ProofAsymptdepletion}.

%%%%%%%%%%%%%%%%%%%%%%%%%%%%%%%%%%%%%%%%%%%%%%%%%%%%%%%%%%%%%%%%%%%%%%

\subsubsection{eGinOE: Large $N$ expansion of $P_N$ and $T_N$}\label{subsubsec:PNTNAsymptotic}

In this section we evaluate the large $N$ limit of the quantities $P_N$ and $T_N$. To facilitate this evaluation it is convenient to combine the integral representations of $P_N$ and $T_N$ into a single formula
\begin{equation}
    \begin{split}
        A_{N}\big(M,F(p,q)\big) =& N^{M+1} \, \sqrt{2N}  \frac{1}{2\IM(z)} \frac{(-1)}{2\pi \tau} \exp \left[ \frac{\RE(z)^2 -\IM(z)^2}{\tau} \right] \int_{-\infty}^{\infty} dp \ p  \int_{-\infty}^{\infty} dq \ \Theta_N^{(M)}\left(\frac{p^2 - q^2}{2}\right)  \\
        &\times F(p,q) \times \exp \left[ \frac{N}{2} \left( p^2 - q^2 \right) - \frac{N}{2\tau }\left( p^2 + q^2 \right) - i \frac{\sqrt{2N}\RE(z)}{\tau} \, q  - \frac{\sqrt{2N}\IM(z)}{\tau} \, p  \right]   \ ,
    \end{split}
\end{equation}
where choosing $M=0$ and $F(p,q) = 1$ reproduces $P_N$, while choosing $M=1$ and $F(p,q) = (p^2 - q^2)/2$ yields $T_N$. Rearranging the terms in the exponential we then write
\begin{align}
    A_{N}\big(M,F(p,q)\big) =& N^{M+1} \,  \frac{\sqrt{2N}}{2\IM(z)} \frac{(-1)}{2\pi \tau} \exp \left[ \frac{\RE(z)^2 -\IM(z)^2}{\tau} \right] \int_{-\infty}^{\infty} dp \ p  \ \exp[ - N \left(\frac{p^2(1- \tau)}{\tau} + \frac{\sqrt{2} \IM(z) p}{\tau \sqrt{N}} \right)]  \nonumber \\
    & \times \int_{-\infty}^{\infty} dq \ \Theta_N^{(M)}\left(\frac{p^2 - q^2}{2}\right) F(p,q) \exp[ - N \left(\frac{q^2(1 + \tau)}{\tau} + \frac{i \sqrt{2} \RE(z) q}{\tau \sqrt{N}} \right)] \ ,
\end{align}
and evaluate the integrals using Laplace's method, i.e. for ${N \gg 1}$ replace
\begin{equation}
     \int_a^b dx g(x) e^{-N \mathcal{L}(x)} \approx \sqrt{ \frac{2 \pi}{N \mathcal{L}''(x_*)} } g(x_*) e^{-N\mathcal{L}(x_*)} \Theta[x_*-a] \Theta[b - x_*] \ ,
\end{equation}
where $x_*= \text{argmin} \,\mathcal{L}(x)$, assuming that  $g(x_*)\ne 0$. It can be easily seen that the $p$ and $q$ integrals are dominated by the vicinity of
\begin{equation}
    p^* = - \frac{\sqrt{2} \IM(z)}{1 - \tau} \frac{1}{\sqrt{N}} \hspace{1cm}  \text{and} \hspace{1cm} q^* = - \frac{i\sqrt{2} \RE(z)}{1 + \tau} \frac{1}{\sqrt{N}} \ ,
\end{equation}
which straightforwardly yields
\begin{equation}
\begin{split}
    A_{N\gg 1}\big(M,F(p,q)\big) \approx& \frac{N^{M}F(p^*, q^*)}{(1-\tau) \sqrt{1 - \tau^2}}  \Theta_N^{(M)} \left[ \frac{1}{N} \left( \frac{\RE(z)^2}{(1 + \tau)^2} + \frac{\IM(z)^2}{(1 - \tau)^2} \right) \right] \exp[ \frac{\RE(z)^2}{1 + \tau} + \frac{\IM(z)^2}{1 - \tau} ] \ .
\end{split}
\end{equation}
Specializing for $P_N$ and $T_N$, one then finds that
\begin{align}
    P_{N\gg 1} =& A_{N\gg 1}\big(M = 0, F(p,q) = 1) \label{PN_int_asymptotic_form} \\
    \approx& \frac{1}{(1-\tau) \sqrt{1 - \tau^2}}  \Theta_N^{(0)} \left[ \frac{1}{N} \left( \frac{\RE(z)^2}{(1 + \tau)^2} + \frac{\IM(z)^2}{(1 - \tau)^2} \right) \right] \exp[ \frac{\RE(z)^2}{1 + \tau} + \frac{\IM(z)^2}{1 - \tau} ] \nonumber
\end{align}
and
\begin{align}
    T_{N\gg 1} =& A_N\left(M = 1, F(p,q) = \frac{p^2 - q^2}{2} \right) \label{TN_int_asymptotic_form} \\
    \approx& \frac{1}{\left( 1-\tau \right) \sqrt{1-\tau^2}}   \ \bigg[ \frac{ \RE(z)^2}{(1+\tau)^2} + \frac{ \IM(z)^2}{(1-\tau)^2} \bigg]  \Theta_{N}^{(1)} \left[ \frac{1}{N}\left( \frac{\RE(z)^2}{(1+\tau)^2}  +  \frac{\IM(z)^2}{(1-\tau)^2} \right) \right] \exp \left[ \frac{\RE(z)^2}{1+\tau } + \frac{\IM(z)^2}{1-\tau}  \right]  \nonumber \ .
\end{align}

%%%%%%%%%%%%%%%%%%%%%%%%%%%%%%%%%%%%%%%%%%%%%%%%%%%%%%%%%%%%%%%%%%%%%%

\subsubsection{Proof in the bulk of the droplet}\label{subsubsec:ProofAsymptbulk}

\noindent
We can now proceed to prove Corollary \ref{cor:eGinOEbulkStrong}. 

\begin{proof}
    The bulk scaling limit is obtained when $z=\sqrt{N}w$ with $w=x+iy$, $|w|<1$ and $|y| \gg N^{-1/2}$. Firstly, since $\IM(z)\approx O(\sqrt{N})$ we use the large argument behaviour of $\erfc(x)$, i.e. $\text{erfc}(x) \approx e^{-x^2}/(\sqrt{\pi} x)$ for $x \gg 1$, to see that the term containing the error function in Eq. \eqref{Eq:MainRes} vanishes as $N \to \infty$. Furthermore, we employ the large $N$ properties of $\Theta_N^{(M)}(s)$ to ascertain the leading order contributions to  Eqs. \eqref{PN_int_asymptotic_form} and \eqref{TN_int_asymptotic_form} in this limit, which are given by
    \begin{equation}
        P_{N\gg 1} \approx \frac{1}{(1-\tau) \sqrt{1 - \tau^2}} \exp[ \frac{Nx^2}{1+\tau} + \frac{Ny^2}{1-\tau}] \Theta\left[ 1 - \frac{x^2}{(1+\tau)^2} - \frac{y^2}{(1-\tau)^2} \right]
    \end{equation}
    and
    \begin{equation}
        T_{N\gg 1} \approx \frac{1}{(1-\tau) \sqrt{1 - \tau^2}} \exp[ \frac{Nx^2}{1+\tau} + \frac{Ny^2}{1-\tau}] \left( \frac{Nx^2}{(1+\tau)^2} + \frac{Ny^2}{(1-\tau)^2} \right) \Theta\left[ 1 - \frac{x^2}{(1+\tau)^2} -  \frac{y^2}{(1-\tau)^2} \right]
    \end{equation}
    respectively. Recalling the finite $N$ form of the mean self-overlap in the eGinOE, Eq. \eqref{Eq:MainRes}, one can see that in the large $N$ limit the dominant terms will be $NP_N$ and $T_N$. After some manipulation this yields
    \begin{equation}
       \lim_{N\to \infty} \frac{\mathcal{O}_N}{N} =\frac{1}{\pi} \left( 1 - \frac{x^2}{(1+\tau)^2} -  \frac{y^2}{(1-\tau)^2} \right) \Theta\left[ 1 - \frac{x^2}{(1+\tau)^2} -  \frac{y^2}{(1-\tau)^2} \right] \ ,
    \end{equation}
    immediately implying Eq. \eqref{Eq:eGinOEbulkStrongRes}.
\end{proof}

%%%%%%%%%%%%%%%%%%%%%%%%%%%%%%%%%%%%%%%%%%%%%%%%%%%%%%%%%%%%%%%%%%%%%%

\subsubsection{Proof in the depletion regime of the droplet}\label{subsubsec:ProofAsymptdepletion}

\noindent
We now turn to the depletion region of the droplet in the eGinOE and the proof of Corollary \ref{cor:eGinOEdepletionStrong}.

\begin{proof}
    The mean self-overlap in the depletion regime of the eGinOE can be found by scaling $z=\sqrt{N} \delta + i\xi$ and taking $N\rightarrow \infty$, with $\delta,\xi \sim O(1)$. Considering the finite $N$ equation for the mean self-overlap, Eq. \eqref{Eq:MainRes}, we must therefore understand how the functions $P_N$ and $T_N$ behave in this parameter region. This can be done using Eqs. \eqref{PN_int_asymptotic_form} and \eqref{TN_int_asymptotic_form}, where one can obtain the leading-order asymptotic behaviour of $P_N$ and $T_N$ as
    \be
    \begin{split}
        P_{N\gg 1} & \approx  \frac{1}{(1-\tau) \sqrt{1-\tau^2}}  \ \Theta_N^{(0)} \left( \left[ \frac{\delta^2}{(1+\tau)^2} + \frac{\xi^2}{N(1-\tau)^2}  \right] \right) \  \exp \left[     \frac{N \delta^2}{1+\tau } + \frac{\xi^2}{1-\tau}  \right]
    \end{split} \label{eq:P_N_dep}
    \ee
    and
    \be
    \begin{split}
        T_{N\gg 1} &\approx \frac{1}{\left( 1-\tau \right) \sqrt{1-\tau^2}}   \ \bigg[ \frac{ N \delta^2}{(1+\tau)^2} + \frac{ \xi^2}{(1-\tau)^2} \bigg] \ \Theta_{N}^{(1)} \left[ \frac{\delta^2}{(1+\tau)^2} + \frac{\xi^2}{N(1-\tau)^2}  \right] \  \exp \left[ \frac{N \delta^2}{1+\tau } + \frac{\xi^2}{1-\tau}  \right]\\
    \end{split} \label{eq:T_N_dep}
    \ee 
    respectively. In fact, in the limit of large $N$  the same asymptotic behaviour  describes all $P_{N-m}$ and $T_{N-m}$, where $m$ is some fixed integer. This implies that for large $N$
    \begin{equation} 
        \begin{split}
            P_{N-2} + (1 - \tau^2) &\Big( P_{N-3} + (N-3) P_{N-4} - T_{N-2} \Big) \approx \\
            & N \ \frac{\sqrt{1 - \tau^2}}{ 1 - \tau } \exp \left[ \frac{N \delta^2}{1+\tau} + \frac{\xi^2}{1-\tau} \right] \left( 1 - \frac{\delta^2}{(1 + \tau)^2} \right) \Theta\left[ 1 - \frac{\delta^2}{(1 + \tau)^2} \right] \ .
        \end{split}
    \end{equation}
    Substituting the above into Eq. \eqref{Eq:MainRes} and manipulating the prefactors  one can now see that in such a limit
    \begin{equation}
       \lim_{N\to \infty} \frac{1}{N} \mathcal{O}^{\text{(eGinOE,c)}}_{N} (z) = \frac{1}{\pi} \left[  1 + \sqrt{\frac{\pi(1-\tau^2)}{2}} \  \frac{e^{ \frac{2 \xi^2}{1-\tau^2}} }{2\vert \xi \vert} \textup{erfc}\left( \sqrt{\frac{2 }{1-\tau^2}} \ \vert \xi \vert \right) \right] \left( 1 - \frac{\delta^2}{(1 + \tau)^2} \right) \Theta\left[ 1- \frac{\delta^2}{(1 + \tau)^2} \right] 
    \end{equation}
    and Eq. \eqref{Eq:eGinOEdepletionStrongRes} immediately follows.
\end{proof}

\noindent
\begin{rem}
  The density of complex eigenvalues in the depletion regime of the eGinOE, given in Eq. \eqref{Eq:density_depletion_strip_strong}, can be similarly derived from the knowledge of the asymptotic behaviour of $P_N$ for $N\gg 1$. This is done by using $z = \sqrt{N} \delta + i \xi$ and inserting the asymptotic form of $P_N$, Eq. \eqref{eq:P_N_dep}, into the finite $N$ equation for the density, Eq. \eqref{eq:eGinUE_density}, then taking the limit $N \to \infty$.
\end{rem}

%%%%%%%%%%%%%%%%%%%%%%%%%%%%%%%%%%%%%%%%%%%%%%%%%%%%%%%%%%%%%%%%%%%%%%
%%%%%%%%%%%%%%%%%%%%%%%%%%%%%%%%%%%%%%%%%%%%%%%%%%%%%%%%%%%%%%%%%%%%%%

\subsection{Asymptotic Analysis in weak non-Hermiticity limit}\label{subsubsec:AsymptoticAnalysisWEAK}

In this subsection we provide details of our asymptotic analysis of the mean density of complex eigenvalues and the mean self-overlap of eigenvectors at weak non-Hermiticity, in both the eGinUE and eGinOE. 

%%%%%%%%%%%%%%%%%%%%%%%%%%%%%%%%%%%%%%%%%%%%%%%%%%%%%%%%%%%%%%%%%%%%%%

\subsubsection{eGinUE: Proof of Corollary \ref{cor:eGinUEbulkWeak}}\label{subsubsec:ProofeGinUEbulkWeak}

\begin{proof}
    The expression in Eq. \eqref{Eq:MainResCOMPLEX} implies that in order to evaluate the mean self-overlap in the WNH limit we must first evaluate $\rho_N^\text{{(eGinUE,c)}}(z)$ and the quantity $R_N$ in this limit, where
    \be\label{eq:WNH_tau_z}
        \tau = 1-\frac{\pi^2\alpha^2}{2N} \quad \text{and} \quad \RE(z) = \sqrt{N}X,  \quad \IM(z) = \frac{\pi y}{\sqrt{N}} \ .
    \ee
    Lemma \ref{lem:rhoNintegralrep} gives us a suitable integral representation for the density of complex eigenvalues
    \begin{align} 
        \rho^{(\textup{eGinUE},c)}_N(z) &=  \frac{N}{2 \pi^2 \tau} \frac{1}{\sqrt{1-\tau^2}}\exp \left[ -\frac{ \vert z \vert^2 - \tau \ \textup{Re}(z)^2 }{1-\tau^2} \right]  \exp \left[ \frac{z^2 +\bar{z}^2}{2\tau} \right] \int_{-\infty}^{\infty} dp  \int_{-\infty}^{\infty} dq \ \Theta_N^{(1)}\left(\frac{p^2 - q^2}{2}\right) \nonumber \\
        &\times \exp \left[ \frac{N}{2} \left( p^2 - q^2 \right) -\frac{N}{2\tau }\left( p^2 + q^2 \right) - i \frac{\sqrt{2N}\RE(z)}{\tau} \, q  - \frac{\sqrt{2N}\IM(z)}{\tau} \, p  \right]  \ .
        \label{eq:rho_WNH_start}
    \end{align}
    Starting from here and using  Eq. \eqref{eq:WNH_tau_z} we see that asymptotically 
    \be\label{eq:WNH_eGinUE_prefacs}
        \frac{N}{2\pi^2\tau} \frac{1}{\sqrt{1-\tau^2}} \exp \left[ -\frac{ \vert z \vert^2 - \tau \ \textup{Re}(z)^2 }{1-\tau^2} \right] \exp[\frac{z^2 +\bar{z}^2}{2\tau}] \approx \frac{N\sqrt{N}}{2\pi^3 |\alpha|} \exp \left[ \frac{NX^2}{2} + \frac{3 \pi^2 \alpha^2 X^2}{8} - \frac{2y^2}{\alpha^2} \right] \ ,
    \ee
    where we have neglected terms $O(1/N)$ and smaller inside the exponential. We also observe that the exponential term inside the integral over $p$ has then a finite limit as $N\to \infty$:
    \be
      \lim_{N\to \infty}  \exp \left[- N \left( \frac{1-\tau}{2\tau }p^2 + \frac{\sqrt{2}\IM(z)}{\tau \sqrt{N}} \, p \right) \right] = \exp \left[- \frac{\pi^2 \alpha^2}{4}\, p^2 - \sqrt{2}\pi y \, p  \right] \ ,
    \ee
    whereas for the integral over $q$ we can use the leading-order asymptotic: 
    \be
       \exp \left[- N \left( \frac{1+\tau}{2\tau }q^2 + i \frac{\sqrt{2}\RE(z)}{\tau \sqrt{N}} \, q \right)  \right] \approx \exp \left[ - \left( N q^2 + i \sqrt{2} N X q  + \frac{\pi^2\alpha^2}{4} q^2 + i \frac{\pi^2 \alpha^2 X q}{ \sqrt{2}} \right) \right] \ .
    \ee
     We then see that the integral over $q$ can be immediately evaluated via Laplace's method around the point $q^\star = -iX/\sqrt{2}$. Similar forms to this integral are needed later, hence we write
    \begin{equation} 
    \begin{split}
        \int_{-\infty}^{\infty} dq f(p,q) \, \Theta_N^{(M)}\left( \frac{1}{2}\left( p^2 -q^2 \right) \right) &\exp \left[- N \left( \frac{1+\tau}{2\tau }q^2 + i \frac{\sqrt{2}\RE(z)}{\tau \sqrt{N}} \, q \right)  \right] \approx \\
        &\sqrt{\frac{\pi}{N}}f\left( p, -\frac{iX}{\sqrt{2}} \right) \Theta_N^{(M)}\left( \frac{p^2}{2} + \frac{X^2}{4} \right) \exp[ - \frac{NX^2}{2} - \frac{3\pi^2 \alpha^2 X^2}{8}] \ , 
    \end{split} \label{eq:int_q_saddle}
    \end{equation}
    where for the purpose of evaluating Eq. \eqref{eq:rho_WNH_start} we use $f(p,q) = 1$ and $M=1$. Putting the previous steps together we find that for large $N$, asymptotically
    \be
        \rho^{(\textup{eGinUE},c)}_N(z) \approx \frac{N}{\pi^2} \frac{1}{2\sqrt{\pi}|\alpha|} \, e^{- \frac{2y^2}{\alpha^2}} \int_{-\infty}^{\infty} dp \, \Theta_N^{(1)}\left( \frac{p^2}{2} + \frac{X^2}{4}  \right) \exp \left[-\frac{\pi^2 \alpha^2}{4}\, p^2 - \sqrt{2}\pi y \, p \right] \ .
        \label{eq:rho_WNH_approx}
    \ee
    As $N\to \infty$ the function $\Theta_N^{(M)}(s)$ approaches the $N-$independent Heaviside step function, $\Theta[1-s]$, effectively restricting the $p$ integration to the interval $[-\sqrt{2(1-X^2/4)},\sqrt{2(1-X^2/4)}]$. Using the symmetries of the integrand we finally see that
    \be\label{Eq:eGinUEbulkweakproof}
      \lim_{N\to \infty}  \frac{\pi^2}{N} \, \rho^{(\textup{eGinUE},c)}_N(z)  = \frac{\sqrt{2}}{\pi^{3/2}}  \frac{1}{|\alpha|} \, e^{- \frac{2y^2}{\alpha^2}} \, \int_{0}^{\pi\sqrt{1 - \frac{X^2}{4}}} du \exp \left[-\frac{\alpha^2}{2} u^2  \right] \cosh(2\, y \, u) \ , 
    \ee
    thus proving Eq. \eqref{Eq:eGinUEdensityweakbulk}. In order to evaluate the mean self-overlap at WNH one must now evaluate $R_N$ in this limit and to do so, we begin by using the integral representation presented in Lemma 4.4. The evaluation of these integrals follows much the same procedure as what was done for $\rho_{N}^{\text{(eGinUE,c)}}(z)$, the difference being that we now use $f(p,q) = (p^2-q^2)/2$. After appropriate adjustments, this yields the asymptotic 
    \begin{equation}
        R_N \approx  N \frac{N}{\pi^2} \frac{\sqrt{2}}{\pi^{3/2}}  \frac{1}{|\alpha|}  \, e^{- \frac{2y^2}{\alpha^2}}  \int_{0}^{\pi \sqrt{1 - \frac{X^2}{4}}} du \, \left( \frac{u^2}{\pi^2} + \frac{X^2}{4}  \right) \exp \left[-\frac{\alpha^2 u^2}{2} \right] \cosh(2yu) \ .
    \end{equation}
    Substituting this and Eq. \eqref{eq:rho_WNH_approx} into Eq. \eqref{Eq:MainResCOMPLEX} and collecting the leading-order terms 
    finally gives 
    \begin{equation}
        \begin{split}
        \lim_{N\to \infty}    \frac{\pi^2}{N} \mathcal{O}^{(\textup{eGinUE},c)}_N(z) \rightarrow \frac{\sqrt{2}}{\pi^{3/2}}  \frac{1}{|\alpha|}  \, e^{- \frac{2y^2}{\alpha^2}}  \int_{0}^{\pi \sqrt{1 - \frac{X^2}{4}}} du \, e^{-\frac{\alpha^2 u^2}{2} } \cosh(2yu)  \left[ 1 + \pi^2 \alpha^2\left( 1 - \frac{u^2}{\pi^2} - \frac{X^2}{4}  \right) \right] \ , 
        \end{split}
    \end{equation}
    as required.

\end{proof}

%%%%%%%%%%%%%%%%%%%%%%%%%%%%%%%%%%%%%%%%%%%%%%%%%%%%%%%%%%%%%%%%%%%%%%

\subsubsection{eGinOE: Proof of Corollary \ref{cor:eGinOEbulkWeak}}\label{subsubsec:ProofeGinOEbulkWeak}

\begin{proof}
    In order to evaluate the mean self-overlap of eigenvectors in the eGinOE at WNH, defined by the parameters in Eq. \eqref{eq:WNH_tau_z}, we inspect the finite $N$ expression for the mean self-overlap, given in Theorem \ref{thm:MainRes}. The analysis then requires extracting the leading asymptotic behaviour of the prefactors and the quantities $P_N$ and $T_N$ in the limit of WNH. We begin by evaluating the prefactors in this limit and find that
    \be
    \begin{split}
        \frac{1}{\pi} \ \sqrt{\frac{1-\tau}{1+\tau}}& \ \exp \left[ -\frac{\RE(z)^2}{1+\tau} -\frac{\IM(z)^2}{1-\tau} \right] \bigg[ 1 + \sqrt{\frac{\pi(1-\tau^2)}{2}} \  \frac{e^{ \frac{2 \IM(z)^2}{1-\tau^2}} }{2\vert \IM(z) \vert} \ \textup{erfc}\left( \sqrt{\frac{2 }{1-\tau^2}} \ \vert \IM(z) \vert \right) \bigg] \\
        & \approx  \frac{|\alpha|}{2\sqrt{N}} \exp \left[ - \frac{NX^2}{2} - \frac{\pi^2 \alpha^2 X^2}{8}- \frac{2y^2}{\alpha^2} \right] \left[  1 + \sqrt{\frac{\pi}{2}} \, \frac{|\alpha|}{2 \vert y \vert} \, \exp \left[ \frac{2 y^2}{\alpha^2} \right] \,  \textup{erfc}\left(  \frac{\sqrt{2} \vert y \vert}{|\alpha|}  \right) \right]  \ .
    \end{split}
    \ee
    Now we proceed with the asymptotic evaluation of $P_N$ using the integral representation given in Lemma \ref{lem:PNintegralrep}. Inserting the parameterisation of $\tau$ and $z$ at WNH, the integral over $q$ can be evaluated using Eq. \eqref{eq:int_q_saddle} with $M=0$ and $f(p,q) = 1$, yielding after due manipulations
    \begin{equation}
    \begin{split}
        P_N \approx \frac{N}{\pi^2}  \frac{\sqrt{2N}}{\pi^{3/2} y} \exp[ \frac{NX^2}{2} + \frac{\pi^2 \alpha^2 X^2}{8}] \int_{0}^{\pi \sqrt{1 - \frac{X^2}{4}}} du \ u \exp[- \frac{\alpha^2 u^2}{2}] \sinh( 2 y u ) \ .
    \end{split} \label{eq:P_WNH}
    \end{equation}
    We are then left with the task of investigating $T_N$ in the limit of WNH, to do so we begin by using Lemma \ref{lem:TNintegralrep} for the necessary integral representation. Using the same approach as previously one can see that the integral over $q$ can be calculated by using $M=1$ and $f(p,q) = (p^2 - q^2)/2$ in Eq. \eqref{eq:int_q_saddle}, eventually this yields
    \begin{equation}
        T_N \approx \frac{N}{\pi^2} \frac{N \sqrt{2N}}{\pi^{3/2} y} e^{\frac{NX^2}{2} + \frac{\pi^2 \alpha^2 X^2}{8} } \int_{0}^{\pi \sqrt{1 - \frac{X^2}{4}}} du \ u \left( \frac{u^2}{\pi^2} + \frac{X^2}{4} \right) \exp[ - \frac{\alpha^2 u^2}{2}] \sinh(2 y u)   \ .
    \end{equation}
    Therefore, using the previous three results, one can see that    
    \be\label{Eq:MainResWeakproof2}
    \begin{split}
        \lim_{N\to \infty}\frac{\pi^2}{N}\mathcal{O}^{(\eGinOE,c)}_{N}(z) 
        &\rightarrow  \frac{1}{\sqrt{2}\pi^{3/2}} \ \frac{|\alpha|}{y} e^{- \frac{2y^2}{\alpha^2} } \left[ 1 + \sqrt{\frac{\pi}{2}} \, \frac{|\alpha|}{2\vert y \vert} \, \exp \left[ \frac{2 y^2}{\alpha^2} \right] \,  \textup{erfc}\left(  \frac{\sqrt{2}\vert y \vert}{|\alpha|} \right) \right] \\
        &\times \ \int_{0}^{\pi\sqrt{1 - \frac{X^2}{4}}}du \, u \, e^{-\frac{\alpha^2}{2}u^2} \, \sinh{\left( 2  y u \right)} \bigg[ 1 + \pi^2 \alpha^2  \left( 1 - \frac{X^2}{4} - \frac{u^2}{\pi^2}  \right)   \bigg] \ ,
    \end{split}
    \ee
    in accordance with  Eq. \eqref{Eq:eGinOEbulkWeakRes}.
\end{proof}

\noindent
Note that as a useful by-product of the proof, we can easily derive the known density of complex eGinOE eigenvalues at WNH, Eq. \eqref{Eq:eGinOEdensityweakbulk}. This can be done by utilising the asymptotic behaviour of $P_N$ at WNH, given in Eq. \eqref{eq:P_WNH}, in the finite $N$ expression for the density, Eq. \eqref{eGinOEdensityComplex}.

%%%%%%%%%%%%%%%%%%%%%%%%%%%%%%%%%%%%%%%%%%%%%%%%%%%%%%%%%%%%%%%%%%%%%%
%%%%%%%%%%%%%%%%%%%%%%%%%%%%%%%%%%%%%%%%%%%%%%%%%%%%%%%%%%%%%%%%%%%%%%
%%%%%%%%%%%%%%%%%%%%%%%%%%%%%%%%%%%%%%%%%%%%%%%%%%%%%%%%%%%%%%%%%%%%%%
%%%%%%%%%%%%%%%%%%%%%%%%%%%%%%%%%%%%%%%%%%%%%%%%%%%%%%%%%%%%%%%%%%%%%%
%%%%%%%%%%%%%%%%%%%%%%%%%%%%%%%%%%%%%%%%%%%%%%%%%%%%%%%%%%%%%%%%%%%%%%
%%%%%%%%%%%%%%%%%%%%%%%%%%%%%%%%%%%%%%%%%%%%%%%%%%%%%%%%%%%%%%%%%%%%%%

\subsubsection*{Acknowledgements}

We would like to thank Wojciech Tarnowski for useful discussions. This research has been supported by the EPSRC Grant EP/V002473/1 “Random Hessians and Jacobians: theory and applications”.

%%%%%%%%%%%%%%%%%%%%%%%%%%%%%%%%%%%%%%%%%%%%%%%%%%%%%%%%%%%%%%%%%%%%%%
%%%%%%%%%%%%%%%%%%%%%%%%%%%%%%%%%%%%%%%%%%%%%%%%%%%%%%%%%%%%%%%%%%%%%%
%%%%%%%%%%%%%%%%%%%%%%%%%%%%%%%%%%%%%%%%%%%%%%%%%%%%%%%%%%%%%%%%%%%%%%
%%%%%%%%%%%%%%%%%%%%%%%%%%%%%%%%%%%%%%%%%%%%%%%%%%%%%%%%%%%%%%%%%%%%%%
%%%%%%%%%%%%%%%%%%%%%%%%%%%%%%%%%%%%%%%%%%%%%%%%%%%%%%%%%%%%%%%%%%%%%%
%%%%%%%%%%%%%%%%%%%%%%%%%%%%%%%%%%%%%%%%%%%%%%%%%%%%%%%%%%%%%%%%%%%%%%

%\appendix

%%%%%%%%%%%%%%%%%%%%%%%%%%%%%%%%%%%%%%%%%%%%%%%%%%%%%%%%%%%%%%%%%%%%%%
%%%%%%%%%%%%%%%%%%%%%%%%%%%%%%%%%%%%%%%%%%%%%%%%%%%%%%%%%%%%%%%%%%%%%%
%%%%%%%%%%%%%%%%%%%%%%%%%%%%%%%%%%%%%%%%%%%%%%%%%%%%%%%%%%%%%%%%%%%%%%
%%%%%%%%%%%%%%%%%%%%%%%%%%%%%%%%%%%%%%%%%%%%%%%%%%%%%%%%%%%%%%%%%%%%%%
%%%%%%%%%%%%%%%%%%%%%%%%%%%%%%%%%%%%%%%%%%%%%%%%%%%%%%%%%%%%%%%%%%%%%%
%%%%%%%%%%%%%%%%%%%%%%%%%%%%%%%%%%%%%%%%%%%%%%%%%%%%%%%%%%%%%%%%%%%%%%


\begin{thebibliography}{99}
\bibitem{non-Herm_rev} Y. Ashida, Z. Gong, and M. Ueda, 
Non-Hermitian physics,
Adv. Phys. 69, 249--435 (2020).

\bibitem{KS}
B. A. Khoruzhenko and H.-J. Sommers,
Non-Hermitian ensembles, Chapter 18 in {\it The Oxford Handbook of Random Matrix Theory},
eds. G. Akemann, J. Baik and P. Di Francesco (Oxford University Press, Oxford, 2011).

\bibitem{May72}
R. M. May,
Will a large complex system be stable?,
{\it Nature} {\bf 238(5364)} (1972) 413-414.

\bibitem{SCSS}
H.-J. Sommers, A. Crisanti, H. Sompolinksy and Y. Stein,
Spectrum of Large Random Asymmetric Matrices,
{\it Phys. Rev. Lett.} {\bf 60} (1988) 1895.

\bibitem{WT13} G. Wainrib and J. Touboul,
Topological and Dynamical Complexity of Random Neural Networks,
{\it Phys. Rev. Lett.} {\bf 110}, 118101 (2013).

\bibitem{FK2016}
Y. V. Fyodorov and B. A. Khoruzhenko,
Nonlinear analogue of the May – Wigner instability transition,
{\it PNAS} {\bf 113(25)} (2016) 6827-6832.

\bibitem{MB2019}
J. Moran and J. P. Bouchaud,
May’s instability in large economies,
{\it Phys. Rev. E} {\bf 100(3)} (2019) 032307.

\bibitem{BAFK21} 
G. Ben Arous, Y. V. Fyodorov and B. A. Khoruzhenko, 
Counting equilibria of large complex systems by instability index,
{\it PNAS} {\bf 118} (34), e2023719118 (2021).

\bibitem{FSom1}
Y. V. Fyodorov and H.-J. Sommers,
Statistics of resonance poles, phaseshifts and time delays in quantum chaotic scattering: random matrix approach for systems with broken time-reversal invariance,
{\it J. Math. Phys.} {\bf 38} (1997) 1918–-1981.

\bibitem{Rotter}
I. Rotter,
A non-Hermitian Hamilton operator and the physics of open quantum systems,
{\it J. Phys. A Math. Theor.} {\bf 42} (2009)  153001.

\bibitem{FSav1}
Y. V. Fyodorov and D. V.  Savin,
Resonance scattering in chaotic systems, chapter 34 in: {\it The Oxford Handbook of Random Matrix Theory},
eds. G. Akemann, J. Baik and P. Di Francesco, (Oxford University Press, Oxford, 2011), p. 703.

\bibitem{random_Lindblad} 
S. Denisov, T. Laptyeva, W. Tarnowski, D. Chruściński, and K. Życzkowski,
Universal Spectra of Random Lindblad Operators,
{\it Phys. Rev. Lett.} {\bf 123}, 140403 (2019).

\bibitem{AKMP}
G. Akemann, M. Kieburg, A. Mielke and T. Prosen,
Universal signature from integrability to chaos in dissipative open quantum systems, 
{\it Phys. Rev. Lett.} {\bf 123(25)} (2019) 254101.

\bibitem{SRP}
L. Sa, P. Ribeiro and T. Prosen, 
Complex spacing ratios: A signature of dissipative quantum chaos, 
{\it Phys. Rev. X } {\bf 10(2)} (2020) 021019.

\bibitem{LPC21} J. Li, T. Prosen, and A. Chan,
Spectral Statistics of Non-Hermitian Matrices and Dissipative Quantum Chaos,
{\it Phys. Rev. Lett.} {\bf 127}, 170602 (2021).

\bibitem{CM}
J. T. Chalker and B. Mehlig,
Eigenvector Statistics in Non-Hermitian Random Matrix Ensembles,
{\it Phys. Rev. Lett.} {\bf 81}  (1998)  3367--3370.

\bibitem{MC}
B. Mehlig and J. T. Chalker,
Statistical properties of eigenvectors in non-Hermitian Gaussian random matrix ensembles,
{\it J. Math. Phys.} {\bf 41} (2000) 3233--3256.

\bibitem{Ginibre}
J. Ginibre,
Statistical ensembles of complex, quaternion and real matrices,
{\it J. Math. Phys.} {\bf  6} (1965) 440--449.

\bibitem{BD}
P. Bourgade and G. Dubach,
The distribution of overlaps between eigenvectors of Ginibre matrices,
{\it Probab. Theory Relat. Fields} {\bf 177} (2020) 397–-464.

\bibitem{FyodorovCMP}
Y. V. Fyodorov,
On statistics of bi-orthogonal eigenvectors in real and complex Ginibre ensembles: combining partial Schur decomposition with supersymmetry,
{\it Commun. Math. Phys.} {\bf 363} (2018) 579--603.

\bibitem{SS}
D. V. Savin and V. V. Sokolov,
Quantum versus classical decay laws in open chaotic systems,
{\it Phys. Rev. E} {\bf 56} (1997) R4911–-R4913.

\bibitem{JNNPZ}
R. A. Janik, W. N\"orenberg, M. A.  Nowak, G. Papp and I. Zahed,
Correlations of eigenvectors for non-Hermitian random matrix models,
{\it Phys. Rev. E} {\bf 60} (1999) 2699 -- 2705.

\bibitem{SFPB}
H. Schomerus, K. Frahm, M. Patra and C. W. J.  Beenakker,
Quantum limit of the laser line width in chaotic cavities and statistics of residues of scattering matrix poles,
{\it Physica A} {\bf A278} (2000) 469 --496.

\bibitem{PSB}
M. Patra, H. Schomerus and C. W. J. Beenakker,
Quantum-limited linewidth of a chaotic laser cavity,
{\it Phys. Rev. A} {\bf 61} (2000)  023810.

\bibitem{MS}
B. Mehlig and M. Santer,
Universal eigenvector statistics in a quantum scattering ensemble,
{\it Phys. Rev. E} {\bf 63} (2001)  020105(R).

\bibitem{FM}
Y. V. Fyodorov and B. Mehlig,
Statistics of resonances and nonorthogonal eigenfunctions in a model for single-channel chaotic scattering,
{\it Phys. Rev. E}  {\bf 66} (2002) 045202.

\bibitem{FSav2}
Y. V. Fyodorov and D. V. Savin,
Statistics of resonance width shifts as a signature of eigenfunction non-orthogonality,
{\it Phys. Rev. Lett. } {\bf 108} (2012) 184101.

\bibitem{GKLMRS}
J.-B. Gros, U. Kuhl, O. Legrand, F. Mortessagne, E. Richalot and D. V. Savin,
Experimental width shift distribution: a test of nonorthogonality for local and global perturbations,
{\it Phys. Rev. Lett. } {\bf 113} (2014) 224101.

\bibitem{BGNTW}
Z. Burda, J. Grela, M. A. Nowak, W. Tarnowski and P. Warchol,
Dysonian dynamics of the Ginibre  ensemble, 
{\it Phys. Rev. Lett. } {\bf 113} (2014)  104102.

\bibitem{BGNTW2}
Z. Burda, J. Grela, M. A. Nowak, W. Tarnowski, and P. Warchol,
Unveiling the significance of eigenvectors in diffusing non-Hermitian matrices by identifying the underlying Burgers dynamics,
{\it Nucl. Phys. B } {\bf 897} (2015) 421–-447.

\bibitem{WS}
M. Walters and S. Starr,
A note on mixed matrix moments for the complex Ginibre ensemble,
{\it J. Math. Phys.} {\bf 56} (2015) 013301.

\bibitem{BNST}
S. Belinschi, M. A. Nowak, R. Speicher and W. Tarnowski,
Squared eigenvalue condition numbers and eigenvector correlations from the single ring theorem,
{\it J. Phys. A: Math. Theor.} {\bf 50} (2017) 105204.

\bibitem{BSV}
Z. Burda, B. J.  Spisak and P. Vivo,
Eigenvector statistics of the product of Ginibre matrices,
{\it Phys. Rev. E}  {\bf 95(2)} (2017) 022134.

\bibitem{Grela}
J. Grela,
What drives transient behavior in complex systems?, 
{\it Phys. Rev. E } {\bf 96} (2017) 022316.

\bibitem{GW}
J. Grela and P. Warchol,
Full Dysonian dynamics of the complex Ginibre ensemble,
{\it J. Phys. A: Math. Theor.} {\bf 51} (2018) 425203.

\bibitem{NT}
M. A. Nowak and W. Tarnowski,
Probing non-orthogonality of eigenvectors in non-Hermitian matrix models: diagrammatic approach,
{\it JHEP} {\bf 2018} (2018) 152. 

\bibitem{BZ}
F. Benaych-Georges and O. Zeitouni,
Eigenvectors of non normal random matrices,
{\it Electron. Commun. Probab.}  {\bf 23} (2018) 1-12.

\bibitem{GOCNT}
E. Gudowska-Nowak, J. Ochab, D. Chialvo, M. A. Nowak and W. Tarnowski,
From synaptic interactions to collective dynamics in random neuronal networks models: critical role of eigenvectors and transient behavior, 
{\it Neural Comput.} {\bf 32} (2020) 395--423.

\bibitem{Yabuoku} 
S. Yabuoku,
Eigenvalue processes of Elliptic Ginibre Ensemble and their overlaps,
{\it Int. J. of Math. Ind. } {\bf 12} (2020) 2050003.

\bibitem{AFK}
G. Akemann, Y.-P. F\"orster and M. Kieburg,
Universal eigenvector correlations in quaternionic Ginibre ensembles,
{\it J. Phys. A: Math.Theor.} {\bf 53} (2020) 145201, 26 pp.

\bibitem{ATTZ}
G. Akemann, R. Tribe, A. Tsareas and O. Zaboronski,
On the determinantal structure of conditional overlaps for the complex Ginibre ensemble,
{\it Random Matrix Th. Appl.} {\bf 9} (2020) 2050015.

\bibitem{Dubach21a}
G. Dubach,
On eigenvector statistics in the spherical and truncated unitary ensembles,
{\it Electron. J. Probab.} {\bf 26} (2021) 1--29.

\bibitem{Dubach21b}
G. Dubach,
Symmetries of the quaternionic Ginibre ensemble,
{\it Random Matrices Theory Appl.} {\bf 10} (2021) 2150013.

\bibitem{FT}
Y. V. Fyodorov and W. Tarnowski,
Condition numbers for real eigenvalues in the real Elliptic Gaussian ensemble,
{\it Ann. Henri Poincar\'e} {\bf 22} (2021)  309--330.

\bibitem{CR}
N. Crawford and R. Rosenthal,
Eigenvector correlations in the complex Ginibre ensemble,
{\it Ann. Appl. Probab.} {\bf  32} (2022) 2706-2754.

\bibitem{CS2022}
G. Cipolloni, L. Erd\"os and D. Schr\"oder,
On the condition number of the shifted real Ginibre ensemble, 
{\it SIAM J. Matr. Anal. Applic.} {\bf 43} (2022) 1469–1487.

\bibitem{FyoOsm22}
Y. V. Fyodorov and M. Osman,
Eigenfunction non-orthogonality factors and the shape of CPA-like dips in a single-channel reflection from lossy chaotic cavities,
{\it J. Phys. A: Math. and Theor.} {\bf 55} (2022) 224013.

\bibitem{Dubach23} 
G. Dubach,
Explicit formulas concerning eigenvectors of weakly non-unitary matrices,
{\it Electron. Commun. Probab.} {\bf 28} (2023) 1--11.

\bibitem{Cipolloni23a}
G. Cipolloni and J. Kudler-Flam,
Entanglement Entropy of Non-Hermitian Eigenstates and the Ginibre Ensemble,
{\it Phys. Rev. Lett.} {\bf 130} (2023) 010401.

\bibitem{Cipolloni23b}
G. Cipolloni and J. Kudler-Flam,
Non-Hermitian Hamiltonians Violate the Eigenstate Thermalization Hypothesis,
{\it Phys. Rev. B} {\bf 109}, L020201 (2024)

\bibitem{CEHS2023}
C. Cipolloni, L. Erd\"os, J. Henheik and D. Schr\"oder,
Optimal Lower Bound on Eigenvector Overlaps for non-Hermitian Random Matrices, 
arXiv:2301.03549.

\bibitem{EJ2023}
L. Erd\"os and H.C. Ji,
Wegner estimate and upper bound on the eigenvalue condition number of non-Hermitian random matrices, arXiv:2301.04981.

\bibitem{Esaki}
L. Esaki, M. Katori and S. Yabuoku,
Eigenvalues, eigenvector-overlaps, and regularized Fuglede-Kadison determinant of the non-Hermitian matrix-valued Brownian motion, 
arXiv:2306.00300.

\bibitem{WCF}
T. R. W\"urfel, M. J. Crumpton and Y. V.Fyodorov, 
Mean left-right eigenvector self-overlap in the real Ginibre ensemble,
arXiv:2310.04307.

\bibitem{FGNNW}
Y. V. Fyodorov, E.  Gudowska-Nowak, M. A. Nowak and W. Tarnowski,
Fluctuation-dissipation relation for non-Hermitian Langevin dynamics,
arXiv:2310.09018.

\bibitem{Noda23a}
K. Noda,
Determinantal structure of the conditional expectation of the overlaps for the induced Ginibre unitary ensemble, arXiv:2310.15362.

\bibitem{Noda23b}
K. Noda,
Determinantal structure of the overlaps for induced spherical unitary ensemble,
arXiv:2312.12690.

\bibitem{GKR23} 
S. Ghosh, M. Kulkarni and S. Roy,
Eigenvector Correlations Across the Localisation Transition in non-Hermitian Power-Law Banded Random Matrices, arXiv:2304.09892.

\bibitem{Tarnowski24}
W. Tarnowski,
Condition numbers for real eigenvalues of real elliptic ensemble: weak non-normality at the edge, arXiv:2401.03249.

\bibitem{Girko1} 
V. L. Girko, 
Circular law,
{\it Theor. Probab. and Applic.} {\bf 29} (1985) 694--706.

\bibitem{Girko2} 
V. L. Girko, 
Elliptic law,
{\it Theor. Probab. and Applic.} {\bf 30} (1986) 640--651.

\bibitem{FKS97a}
Y. V. Fyodorov, B. A. Khoruzhenko and H.-J. Sommers,
Almost Hermitian random matrices: crossover from Wigner-Dyson to Ginibre eigenvalue statistics,
{\it Phys. Lett. A} {\bf 226}  (1-2), 46--52 (1997). 

\bibitem{FKS97b}
Y. V. Fyodorov, B. A. Khoruzhenko and H.-J. Sommers,
Almost Hermitian random matrices: crossover from Wigner-Dyson to Ginibre eigenvalue statistics,
{\it Phys. Rev. Lett.} {\bf 79} (1997)  557--560.

\bibitem{FKS98}
Y. V. Fyodorov, B. A. Khoruzhenko and H.-J. Sommers,
Universality in the random matrix spectra in the regime of weak non-Hermiticity,
{\it Ann. Inst. H. Poincaré Phys. Théor.} {\bf 68 } (1998) 449–489.

\bibitem{BF1}
S. S. Byun and P. J. Forrester,
Progress on the study of the Ginibre ensembles I: GinUE, 
arXiv:2211.16223.

\bibitem{BF2}
S. S. Byun and P. J. Forrester,
Progress on the study of the Ginibre ensembles II: GinOE and GinSE, 
arXiv:2301.05022.

\bibitem{FN2}
P. J. Forrester and T. Nagao,
Skew orthogonal polynomials and the partly symmetric real Ginibre ensemble,
{\it J. Phys. A} {\bf 2008} {\it 41}, 375003.

\bibitem{APS}
G. Akemann, M. J. Phillips and H.-J. Sommers,
Characteristic polynomials in real Ginibre ensembles,
{\it J. Phys. A: Math. Theor.} {\bf 42} (2009) 012001.

\bibitem{ForM}
P. J. Forrester and A. Mays,
A Method to Calculate Correlation Functions for $\beta = 1$ Random Matrices of Odd Size,
{\it J. Stat. Phys. } {\bf 134} (2009) 443462.

\bibitem{ACV}
G. Akemann,  M. Cikovic and M. Venker,
Universality at weak and strong non-Hermiticity beyond the elliptic Ginibre ensemble,
{\it Commun. Math. Phys.} {\bf 362} no. 3 (2018) 1111-1141.

\bibitem{Efe97}
K.B. Efetov,
Directed Quantum Chaos,
{\it Phys. Rev. Lett.} {\bf 79} (1997) 491.

\bibitem{LR}
S. Y. Lee and R. Riser,
Fine asymptotic behavior for eigenvalues of random normal matrices: Ellipse case,
{\it J. Math. Phys.} {\bf 57} (2016) 023302.

\bibitem{TaoVu}
T. Tao and V. Vu,
Random Matrices: Universality of Local Spectral Statistics of Non-Hermitian Matrices,
{\it Ann. Prob.} {\bf 43} No. 2 (2015) 782-874.

\bibitem{EKS}
A. Edelman, E. Kostlan and M. Shub,
How many eigenvalues of a random matrix are real?,
{\it J. Amer. Math. Soc.} {\bf 7} (1994) 247--267.

\bibitem{Edelman}
A. Edelman,
The probability that a random real Gaussian matrix has k real eigenvalues, related distributions, and the circular law,
{\it J. Multivar. Anal.} {\bf 60} (1997)  203–-232.

\bibitem{FK}
Y. V. Fyodorov and B. A. Khoruzhenko, 
On absolute moments of characteristic polynomials of a certain class of complex random matrices,
{\it Commun. Math. Phys.} {\bf 273(3)} (2007) 561–599.

\bibitem{CSS}
S. Caracciolo, A. D. Sokal and A. Sportiello,
Algebraic/combinatorial proofs of Cayley-type identities for derivatives of determinants and pfaffians,
{\it Adv. Appl. Math.} {\bf 50} (2013) 474-594, arXiv:1105.6270.

\bibitem{DVarjas}
D. Varjas,
Generalizations of the Pfaffian to non-antisymmetric matrices, 
arXiv:2209.02578.

\bibitem{Grad}
I. S. Gradshteyn and I.M. Ryzhik,
{\it Tables of Integers, Series and Products} {\bf 7th ed.} (Academic Press, New York, 2000).

\bibitem{NIST}
F. W. J. Olver, D. W. Lozier, R. F. Boisvert and C. W. Clark (eds.),
{\it NIST Handbook of Mathematical Functions} (Cambridge University Press, Cambridge, 2010).

\bibitem{LC}
J. S. Lomont and M. S. Cheema,
Properties of Pfaffians,
The Rocky Mountain Journal of Mathematics {\bf 15(2)} (1985) 493–512.


\end{thebibliography}
\end{document}